\title{Logic of computational semi-effects and categorical gluing for equivariant functors}
\author{Yuichi Nishiwaki}{University of Tokyo, Japan}{nyuichi@is.s.u-tokyo.ac.jp}{}{}
\author{Toshiya Asai}{University of Tokyo, Japan}{anori@is.s.u-tokyo.ac.jp}{}{}
\authorrunning{Y. Nishiwaki and T. Asai}
\keywords{computational effects, actegories, logical relations, categorical gluing, proof theoretic semantics, modal logic, Curry-Howard correspondence, fibrations}
\begin{document}

\maketitle

\begin{abstract}
In this paper, we revisit Moggi's celebrated calculus of computational effects from the perspective of logic of monoidal action (actegory).
Our development takes the following steps.
Firstly, we perform proof-theoretic reconstruction of Moggi's computational metalanguage and obtain a type theory with a modal type $\rhd$ as a refinement.
Through the proposition-as-type paradigm, its logic can be seen as a decomposition of lax logic via Benton's adjoint calculus.
This calculus models as a programming language a weaker version of effects, which we call \emph{semi-effects}.
Secondly, we give its semantics using actegories and equivariant functors.
Compared to previous studies of effects and actegories, our approach is more general in that models are directly given by equivariant functors, which include Freyd categories (hence strong monads) as a special case.
Thirdly, we show that categorical gluing along equivariant functors is possible and derive logical predicates for $\rhd$-modality.
We also show that this gluing, under a natural assumption, gives rise to logical predicates that coincide with those derived by Katsumata's categorical $\top\top$-lifting for Moggi's metalanguage.
\end{abstract}

\section{Introduction}

It has passed about three decades since the deep connection between the notion of computation and monads in category theory was revealed by Moggi \cite{Moggi89, Moggi91}.
Moggi's papers have been not only affecting the design of most modern programming languages but also standing as a very foundation of the semantic analysis of the notion of computation.
His insight that computation can be modeled by monads is now widely accepted and sometimes even considered as ``general knowledge'' in the community.
In this paper, we revisit this prevalent slogan ``computation as monads'' with a somewhat critical eye, and attempt to propose our new alternative: ``computation as monoidal actions''.

One important contribution in Moggi's papers was the suggestion of a formal system of equational logic (called \emph{the metalanguage}) that can uniformly represent various kinds of computation where the type of involved effect is given as a parameter.
Once the papers' importance was recognized, the metalanguage began to get analyzed with the help of logic.
Lax logic \cite{DBLP:journals/iandc/FairtloughM97} (or CL logic \cite{DBLP:journals/jfp/BentonBP98}) is a modal logic with a single modality $\dia$ representing intuitionistic possibility.
Benton, Bierman, and de Paiva \cite{DBLP:journals/jfp/BentonBP98} showed that proof-term assignment to lax logic directly gives the Curry-Howard correspondence to the metalanguage.
Benton and Wadler \cite{DBLP:conf/lics/BentonW96} showed that adjoint calculus \cite{DBLP:conf/csl/Benton94} can serve as a logical foundation of the metalanguage provided that the underlying monad is commutative.
Enriched Effect Calculus (EEC) \cite{DBLP:journals/logcom/EggerMS14} pushed forward this direction.
EEC removed the limitation of the class of monads from adjoint calculus by carefully choosing the set of legitimate logical connectives and forms of typing judgments.
Along this series of works, this paper presents another reformulation of the metalanguage.
Our reformulation starts by analyzing the above logics from the proof-theoretic viewpoint, motivated by the fact that none of them enjoy \emph{stability} a la Dummett, a proof-theoretic criterion for ``nice'' logics \cite{Dummett91,DBLP:journals/jphil/Kurbis15}.
We derive our logic (and its corresponding type theory) in the following steps.
Firstly, we decompose lax logic into a logic with two adjoint modalities $\lhd$ and $\rhd$ exploiting the technique used by adjoint calculus.
Secondly, to maintain the non-commutativity of the metalanguage, we restrict the number of possible variables in the realm of computation to at most one.
At the final step, we throw away the $\lhd$ modality.
The resulted calculus is a rather weak system due to the lack of an adjoint modality.
Nevertheless, it enjoys nice proof-theoretic properties, maintains the essence of computation, and has a simple categorical model.
We call this calculus \emph{semi-effect calculus (SEC)}, after its operational behavior.

Proposal of strong monads as a categorical semantics of computation is also a significant contribution of Moggi's papers.
Along this line, many further studies have been done so far \cite{DBLP:conf/icalp/PowerT99,DBLP:journals/entcs/Fuhrmann99,DBLP:journals/entcs/Staton14}.
For example, M{\o}gelberg and Staton \cite{MogelbergAndStaton} used copowers in enriched categories to interpret elaborated connections between values and computations, and Levy \cite{DBLP:phd/ethos/Levy01} used Freyd categories.
Although strong monad, Freyd category, and copower have a strong connection with \emph{monoidal action (actegory)}, actegory itself had not been treated as a ``first-class citizen'' as a model of effects.
Interestingly, because the expressive power is restricted enough, SEC can be directly modeled by equivariant functors (morphisms between actegories).
While only the soundness holds in our semantics (i.e.\ the completeness result presumably fails due to its ``lax'' nature), this semantics allows us to perform the term model construction.

Categorical gluing (also called sconing or Freyd cover in the literature) is a method to create a structure satisfying a certain categorical notion from a morphism preserving structures to which the notion is related.
Although categorical gluing is not always possible for every type of categorical structure, it is well known that cartesian closed structure (the structure of simply-typed lambda calculus) admits such gluing construction, with which one can prove some syntactic properties (e.g.\ conservativity) of the lambda calculus \cite{Crole, SconingAndRelators}.
Moreover, categorical gluing is closely related to logical predicates and logical relations \cite{SconingAndRelators, DBLP:conf/tlca/Hasegawa99}.
As a corollary of the present work, we prove that gluing along equivariant functors (\emph{actegorical gluing}) is indeed possible, and derive logical predicates for SEC.
We also show that in some typical cases, actegorical gluing can derive the same logical predicates constructed by $\top\top$-lifting for the metalanguage \cite{DBLP:conf/tlca/LindleyS05, DBLP:conf/csl/Katsumata05}.

Our contributions summarize as follows:
\begin{itemize}
  \item We present a proof-theoretic reformulation of intuitionistic possibility modality.
  \item We propose a new calculus (SEC) capturing a weaker notion of computation (semi-effects).
  \item We show that SEC is modeled by actegories and equivariant functors.
  \item We prove that categorical gluing along equivariant functors is possible, and show that the $\top\top$-lifting of strong monads is reducible to this gluing in some typical cases.
\end{itemize}

\subsection*{Construction of the paper}

Section 2 presents related work.
In Section 3, we recall Moggi's metalanguage and introduce our calculus SEC.
Their syntactic definitions and logical properties are discussed.
In Section 4, we present some basics of actegories and equivariant functors, and give categorical models for SEC.
Section 5 describes categorical gluing along models of SEC and its connection with fibrations.
As an application, we present certain flavors of logical predicates for SEC.
Comparison with $\top\top$-lifting is also presented here.
Section 6 concludes the paper and discusses future work.

\section{Related work}



\textbf{Proof-theoretic reconstruction of modal logic.}
It has been a long-standing issue to find a nice proof-theoretic account of intuitionistic modal logics.
Pfenning and Davies proposed a reformulation of intuitionistic modalities of both necessity $\Box$ and possibility $\Diamond$ \cite{DBLP:journals/mscs/PfenningD01}.
In the presence of both two modalities, a clean categorical account by an $\mathcal{L}$-strong monad is possible \cite{DBLP:journals/tcs/Kobayashi97}.
For the necessity-only fragment, a more refined calculus based on stratification of the modality and its semantics based on iterated enrichment of categories are presented in \cite{DBLP:journals/entcs/NishiwakiKM18,DBLP:journals/jip/KakutaniMN19}.
A brief survey of this field is found in \cite{DBLP:journals/corr/Kavvos16c}.

\noindent
\textbf{Categorical gluing.}
Mitchell and Scedrov \cite{SconingAndRelators} pointed out that the classical fundamental lemma for logical predicates is obtained as the uniqueness of the morphism from the classifying category of simply-typed $\lambda$-calculus to a cartesian closed category constructed by gluing (or sconing).
In \cite{Crole}, a conservativity proof of $\lambda\times$-calculus over equational logic of algebras is presented.
The proof uses the gluing technique and exploits the universal property of na\"ive translation from an algebraic theory to a $\lambda\times$-theory.
In \cite{DBLP:conf/ppdp/Fiore02} and \cite{DBLP:journals/corr/abs-1809-08646}, a normalization proof of simply-typed $\lambda$-calculus by gluing is presented.
Whether these techniques can be adapted to our results remains unclear.

\noindent
\textbf{Logical predicates.}
Logical predicates (and logical relations) have been used to prove syntactic results for many calculi (e.g.\ \cite{DBLP:journals/iandc/Statman85}).
Examples include the computational adequacy result of PCF \cite{Streicher}.
Hermida \cite{Hermida} generalized the logical predicates for simply-typed $\lambda$-calculus to cartesian closed categories using the internal logic and (Grothendieck) fibrations.
We extensively use the results from this work.
In \cite{DBLP:conf/tlca/Hasegawa99}, Hasegawa showed that logical predicates for certain fragments of linear logic can be described by \emph{subgluing} via fibrational arguments, from which some semantic results (i.e.\ the ability to obtain new models) for gluing and subgluing are derived.
Our present work is similar to Hasegawa's work.
There are some categorical formulations of logical predicate for monadic computation.
Our construction is closely related to \cite{DBLP:conf/csl/Katsumata05} (see Subsection \ref{sec:actegorical gluing and TT-lifting}).
On the other hand, the relationship between \cite{DBLP:journals/mscs/Goubault-LarrecqLN08} and the present work remains unknown.

\section{Semi-effect calculus}

In this section, we introduce \emph{Semi-Effect Calculus (SEC)}, which will be studied throughout the paper.
SEC is obtained by careful analysis of Moggi's metalanguage.

\subsection{Preliminaries on logical harmony and stability}

We recall some basic notions from proof-theoretic semantics (PTS).
The materials in this subsection will be necessary to understand the construction in the next subsection.

PTS is an approach to investigate the meaning of a logical constant (connective) by means of the structural nature of the natural deduction system associated to the logic.
Unlike traditional Tarski-style semantics, PTS is considered a rather informal, philosophically-motivated semantics.
Nonetheless, PTS is supposed to help more conceptual understanding of logics and provide a criterion for designing a well-behaved natural deduction system and hence the corresponding term calculus.

\emph{Logical harmony} (a la Dummett) in PTS is such a property that (it is expected that) every ``meaningful'' logical connective shall enjoy.
We consider Prior's \emph{tonk} \cite{Prior60}, which is an imaginary logical connective having the introduction rule (I-rule) of disjunction and the elimination rule (E-rule) of conjunction.
\[
  \AXC{$\Gamma \vdash A_i$}
  \RL{I-$\mathrm{tonk}_i$}
  \UIC{$\Gamma \vdash A_1 \mathrel{\mathrm{tonk}} A_2$}
  \DP
  \qquad
  \AXC{$\Gamma \vdash A_1 \mathrel{\mathrm{tonk}} A_2$}
  \RL{E-$\mathrm{tonk}_i$}
  \UIC{$\Gamma \vdash A_i$}
  \DP
\]
Clearly, having $\mathrm{tonk}$ makes the logic syntactically inconsistent (i.e.\ proves everything).
Some criteria have been proposed to answer why $\mathrm{tonk}$ is nonsense (and others are not).
Prawitz' \emph{inversion principle} (e.g.\ \cite{Read15}), a (candidate of) formulation of logical harmony, claims that an E-rule should not be ``weaker'' than the I-rule, in the sense that using the E-rule immediately after the I-rule should only prove propositions that are already in the premises of the I-rule.
$\mathrm{tonk}$ does not satisfy this property and hence is rejected.
There is also a converse criterion, called \emph{stability} \cite{Dummett91}, which states that an E-rule should not be ``too strong'' compared to the I-rule.
All meaningful connectives (including connectives in ordinary intuitionistic logic) are considered to enjoy both the inversion principle and stability.

\subsection{Lax logic}

Lax logic is an intuitionistic modal logic with one possibility modality operator $\dia$.
It features the following rules for $\dia$ along with the usual rules for intuitionistic propositional logic.
\[
    \AXC{$\Gamma \vdash_\textrm{lax} A$}
    \RL{I-$\dia$}
    \UIC{$\Gamma \vdash_\textrm{lax} \dia A$}
    \DP
    \qquad
    \AXC{$\Gamma \vdash_\textrm{lax} \dia A$}
    \AXC{$\Gamma, A \vdash_\textrm{lax} \dia B$}
    \RL{E-$\dia$}
    \BIC{$\Gamma \vdash_\textrm{lax} \dia B$}
    \DP
\]
Lax logic's significance is the Curry-Howard correspondence with Moggi's metalanguage.
Term assignment to I-$\dia$ and E-$\dia$ yields terms $\return{M}$ and $\letin{x}{M_1}{M_2}$ in the metalanguage in an evident way.
It is also shown that operational aspects of the metalanguage are easily adapted to well-known proof-theoretic notions (e.g.\ proof normalization) \cite{DBLP:journals/jfp/BentonBP98, DBLP:journals/iandc/FairtloughM97}.

According to PTS, however, this formalism of lax logic is unsatisfactory, for that $\dia$ is unstable.
Because I-$\dia$ proves $\dia A$ from any $A$, $\dia A$ is considered to have precisely the same information as $A$.
To be as strong as I-$\dia$, E-$\dia$ then must be such a rule that directly extracts $A$ from any $\dia A$, or dually, turns any sequent $\Gamma, A \vdash_{\mathrm{lax}} B$ with premise $A$ into $\Gamma \vdash_{\mathrm{lax}} B$ given $\Gamma \vdash_{\mathrm{lax}} \dia A$.
Clearly, the actual E-$\dia$ rule has an extra restriction on the form of conclusion, namely $\dia B$, by which stability fails.
(See \cite{Read15} for more details.)

\subsection{Simple adjoint calculus} \label{sec:sac}

Adjoint calculus \cite{DBLP:conf/csl/Benton94} is a calculus for linear logic that incorporates two styles of judgments, one for linear reasoning and the other for non-linear (classical) reasoning.
Exploiting the idea of adjoint calculus, we decompose the modality $\dia$ into a composite of two modalities $\rightadj \comp \leftadj$.
To this end, we restrict our focus to a fragment of lax logic where every judgment has precisely one premise.
This fragment Curry-Howard-corresponds to what is called the \emph{simple metalanguage} in Moggi's original paper \cite{Moggi91}, where every term has precisely one free variable.
In fact, the decomposition presented in the sequel is the same as restriction of adjoint calculus to the single variable fragments.
After this fact, we call the decomposed calculus \emph{simple adjoint calculus (SAC)}.

\begin{figure}
  \begin{typingrules}
    \AXC{$\type{x}{\tau}\vdashv\type{M}{\tau_1}$}
    \RL{$\morphism{f}{\tau_1}{\tau_2}$}
    \UIC{$\type{x}{\tau}\vdashv \type{f(M)}{\tau_2}$}
    \DP
    \quad
    \AXC{$\context{\Gamma}{\Delta}\vdashc\type{N}{A_1}$}
    \RL{$\morphism{g}{A_1}{A_2}$}
    \UIC{$\context{\Gamma}{\Delta}\vdashc \type{g(N)}{A_2}$}
    \DP
    \quad
    \AXC{$\type{x}{\tau}\vdashv\type{M}{\tau'}$}
    \RL{$\morphism{h}{\tau'}{A}$}
    \UIC{$\context{\type{x}{\tau}}{}\vdashc \type{h(M)}{A}$}
    \DP
    \quad
    \AXC{$\mathstrut$}
    \UIC{$\type{x}{\tau}\vdashv\type{x}{\tau}$}
    \DP
    \quad
    \AXC{$\mathstrut$}
    \UIC{$\context{{\cdot}}{\type{v}{A}}\vdashc\type{v}{A}$}
    \DP
    \quad
    \AXC{$\context{\type{x}{\tau}}{{\cdot}}\vdashc\type{N}{A}$}
    \UIC{$\type{x}{\tau}\vdashv\type{\reify{N}}{\rightadj A}$}
    \DP
    \quad
    \AXC{$\type{x}{\tau}\vdashv\type{M}{\rightadj A}$}
    \UIC{$\context{\type{x}{\tau}}{{\cdot}}\vdashc\type{\reflect{M}}{A}$}
    \DP
    \quad
    \AXC{$\type{x}{\tau}\vdashv\type{M}{\tau'}$}
    \UIC{$\context{\type{x}{\tau}}{{\cdot}}\vdashc \type{\val{M}}{\leftadj\tau'}$}
    \DP
    \quad
    \AXC{$\context{\Gamma}{\Delta}\vdashc\type{N_1}{\leftadj\tau}$}
    \AXC{$\context{\type{x}{\tau}}{{\cdot}}\vdashc\type{N_2}{A}$}
    \BIC{$\context{\Gamma}{\Delta}\vdashc \type{\letvalin{x}{N_1}{N_2}}{A}$}
    \DP
  \end{typingrules}
  \unskip
  \caption{Typing rules of SAC}
  \label{sac:typing rules}
  \centering
\end{figure}

Figure \ref{sac:typing rules} presents the complete list of typing rules of SAC.
In the figure, \fbox{$\context{\Gamma}{\Delta}$} denotes either \fbox{$\context{\type{x}{\tau}}{}$} or \fbox{$\context{}{\type{v}{A}}$}.
Therefore, a judgment in SAC is in one of the following forms of \fbox{$\type{x}{\tau}\vdashv \type{M}{\tau'}$}, \fbox{$\context{\type{x}{\tau}}{\emptycontext}\vdashc \type{N}{A}$}, or \fbox{$\context{\emptycontext}{\type{v}{A}}\vdashc \type{N}{A'}$}.
Note that every judgment has exactly one free variable.

To convey the intuition, we start by explaining the semantics first rather than syntactic details.
SAC's semantics is simply given by any adjunction between any categories:
\begin{center}
  \begin{tikzcd}
    \C \ar[r, bend left, "\leftadj"] \ar[r, phantom, "\rotatebox{90}{$\vdash$}"] & \D \ar[l, bend left, "\rightadj"]
  \end{tikzcd}.
\end{center}
As the symbols suggest, we identify the type operators $\leftadj$ and $\rightadj$ with the left and right adjoint functors in the model.
Namely, we identify a judgment $\vdashv$ with a morphism in $\C$ and $\vdashc$ with $\D$.
By identifying context \fbox{$\context{\type{x}{\tau}}{}$} with \fbox{$\context{}{\type{v}{\leftadj \tau}}$}, one may think of $\mathtt{val}$ as the functor $\leftadj$'s action on morphisms $\leftadj_{\tau,\tau'} \colon \C(\tau,\tau') \to \D(\leftadj\tau,\leftadj\tau')$.
Similarly, $\mathtt{reify}$ and $\mathtt{reflect}$ are identified with functions sending a morphism to its transpose.
While we have not yet introduced enough syntactic notions, the intention of the following statement should now be clear.
That is, SAC serves as an internal language of adjunctions.

\begin{theorem}
  There is a sound and complete interpretation of SAC in an adjunction.
\end{theorem}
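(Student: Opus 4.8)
The plan is to prove the two directions separately, obtaining soundness from a direct interpretation and completeness from a term-model (classifying adjunction) construction. Fix a model, i.e.\ an adjunction $\leftadj \dashv \rightadj$ between categories $\C$ and $\D$ as in the displayed diagram. First I would define the interpretation $\sem{-}$ by recursion: a value type $\tau$ goes to an object of $\C$ and a computation type $A$ to an object of $\D$, with $\sem{\leftadj\tau} = \leftadj\sem{\tau}$ and $\sem{\rightadj A} = \rightadj\sem{A}$; using the identification of $\context{\type{x}{\tau}}{}$ with $\context{}{\type{v}{\leftadj\tau}}$, a derivable judgment $\type{x}{\tau}\vdashv\type{M}{\tau'}$ is sent to a morphism $\sem{\tau}\to\sem{\tau'}$ of $\C$ and a computation judgment to a morphism of $\D$ out of the object interpreting its single context entry (so $\leftadj\sem{\tau}$ or $\sem{A}$). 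The crucial clauses are that $\keywfont{reify}$ and $\keywfont{reflect}$ denote the two directions of the adjunction transpose $\C(\sem{\tau},\rightadj\sem{A}) \cong \D(\leftadj\sem{\tau},\sem{A})$, that $\keywfont{val}$ is the action of $\leftadj$ on morphisms, and that $\letvalin{x}{N_1}{N_2}$ is the composite of $\sem{N_1}$ and $\sem{N_2}$ in $\D$. Soundness is the statement that provably equal terms receive equal denotations, which I would establish by induction over the equational theory (whose key axioms I take to include $\refrei$, $\reiref$, the naturality laws, and the $\beta\eta$ laws for $\keywfont{val}$ and let-val): the substantive cases are $\refrei$ and $\reiref$, validated because the transpose is a bijection; the naturality equations for $\keywfont{reify}$, $\keywfont{reflect}$ and the signature-morphism rules, validated by naturality of the transpose, equivalently the triangle identities; and the $\keywfont{val}$/let-val laws, validated by functoriality of $\leftadj$.

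For completeness I would build the classifying adjunction out of the syntax. Let $\C_{\mathrm{SAC}}$ have value types as objects and provable-equality classes $\classof{M}$ of judgments $\type{x}{\tau}\vdashv\type{M}{\tau'}$ as morphisms $\tau\to\tau'$, with identities $\classof{x}$ and composition by substitution; let $\D_{\mathrm{SAC}}$ have computation types as objects and classes of judgments $\context{}{\type{v}{A}}\vdashc\type{N}{A'}$ as morphisms $A\to A'$. The functor $\leftadj\colon\C_{\mathrm{SAC}}\to\D_{\mathrm{SAC}}$ acts on morphisms through $\keywfont{val}$, and the functor $\rightadj\colon\D_{\mathrm{SAC}}\to\C_{\mathrm{SAC}}$ acts by reflecting a variable, substituting the given term, and reifying the result. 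One then exhibits $\leftadj\dashv\rightadj$ by taking $\keywfont{reify}$ and $\keywfont{reflect}$ as the transpose operations: the hom-set bijection $\C_{\mathrm{SAC}}(\tau,\rightadj A)\cong\D_{\mathrm{SAC}}(\leftadj\tau,A)$ is exactly the content of $\refrei$ and $\reiref$, which say these operations are mutually inverse, while naturality in both arguments is extracted from the remaining equations. Completeness then follows by the standard term-model argument: the classifying adjunction is itself such an adjunction, hence a model, and under its interpretation the denotation of a term is canonically its own equivalence class; so two terms that agree in every model in particular agree here and are therefore provably equal. This simultaneously identifies the term model as the generic model.

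I expect the completeness half to be the main obstacle, and within it the purely syntactic verification that $\rightadj$ is a well-defined functor and that $\keywfont{reify}$ and $\keywfont{reflect}$ form a natural transpose. Concretely, I must show that reflecting, substituting, and reifying respects provable equality and preserves identities and composition, and that the transpose commutes with pre- and post-composition; each of these reduces to a commuting-conversion or naturality equation that has to be available in, or derivable from, the equational theory. The delicate bookkeeping is to handle the two shapes of computation judgment uniformly through the identification $\context{\type{x}{\tau}}{}\cong\context{}{\type{v}{\leftadj\tau}}$, so that a single adjunction accounts for both value and computation reasoning; arranging the axioms to be exactly strong enough for the transpose to be a natural isomorphism, and no stronger, is where the real work lies.
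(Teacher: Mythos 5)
You cannot be compared against the paper's own proof here, because the paper does not give one: the theorem is stated in Subsection~\ref{sec:sac} with no argument in the main text, and the appendix proves only the \sec{} analogues (Theorem~\ref{thm:model by ef} and Theorem~\ref{thm:theory to ef}); the authors evidently regard it as immediate from the identification of $\keywfont{reify}$/$\keywfont{reflect}$ with the two adjoint transposes and of $\keywfont{val}$ with the action of $\leftadj$ on morphisms. Your soundness half is exactly that identification plus a routine induction, and it is fine: the substantive cases are a substitution lemma for $\beta_\leftadj$, functoriality of $\leftadj$ for $\eta_\leftadj$, bijectivity of transposition for $\beta_\rightadj$/$\eta_\rightadj$, and associativity of composition in $\D$ for comm.\,conv.

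The completeness half, however, has a genuine gap, and it sits exactly at the step you describe as ``exactly the content of $\refrei$ and $\reiref$''. With your hom-sets $\D_{\mathrm{SAC}}(A,A')\defeq\{\classof{\context{}{\type{v}{A}}\vdashc\type{N}{A'}}\}$, the transpose of $\classof{\type{x}{\tau}\vdashv\type{M}{\rightadj A}}$ is $\classof{\letvalin{x}{v}{\reflect{M}}}$, and the candidate inverse sends $\classof{N}$ to $\classof{\reify{(N[\val{x}/v])}}$. One round trip indeed reduces to $\beta_\leftadj$ and $\eta_\rightadj$; but the other, after applying $\beta_\rightadj$, leaves the obligation $(\letvalin{x}{v}{N[\val{x}/v]}) = N$, which is \emph{not} an instance of any postulated axiom. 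It must be proved by induction on $N$, and the case $N = g(N')$ needs $(\letvalin{x}{N_1}{g(N_2)}) = g(\letvalin{x}{N_1}{N_2})$. That equation is valid in every adjunction model (both sides denote $\sem{g}\comp\sem{N_2}\comp\sem{N_1}$), but it is not among SAC's axioms, and it is not an instance of comm.\,conv.\ as written: there the context $C[-]$ surrounds the \emph{bound} term $N_1$, and the only instances of that schema that are sound for the intended semantics are those where the hole of $C$ sits in the body of nested $\keywfont{let\;val}$s, which amount to associativity and nothing more. So the hom-set bijection does not follow from $\beta_\rightadj$/$\eta_\rightadj$ alone, and whether it can be established at all is a nontrivial derivability question about the equational theory (one can even build a model of all the stated axioms, interpreting computation types as algebras and $g$ as an arbitrary map of carriers, in which the needed equation fails). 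The same missing equation is silently used by your claim that in the term model ``the denotation of a term is canonically its own equivalence class'', since recovering $N$ from $\classof{\letvalin{x}{v}{N[\val{x}/v]}}$ is precisely that equation; and switching to the paper's SEC-style case-analysis hom-sets does not evade it, because completeness for computation-variable judgments still requires inverting the substitution $[\val{x}/v]$. This is not a pedantic point: it is exactly the phenomenon the paper records in Remark~\ref{rem:failure of completeness}, where the corresponding property fails for \sec{} and completeness is lost. A correct proof must show that the $\rightadj$-axioms (or a suitably strengthened commuting-conversion schema) repair it for SAC; your proposal flags the area of difficulty but asserts the crucial bijection rather than proving it.
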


Let us explain the syntax of SAC in detail.
All types in SAC are classified into two classes, which we call \emph{value} types and \emph{computation} types.
Note that these terminologies are arbitrary.
The model may no longer be a Kleisli adjunction, hance with no flavor of computation.
We call terms of value type (resp.\,computation type) \emph{value terms} (resp.\,\emph{computation terms}).
We use metavariables $M,M',\cdots$ for value terms, $N,N',\cdots$ for computation terms, and $L,L',\cdots$ for any terms.
A \emph{signature of SAC} consists of a set $V$ of base value types, a set $C$ of base computation types, and a set $F$ of function symbols.
Fixing a signature defines the sets of all value and computation types, which are freely generated by the base types and type operators $\leftadj$ and $\rightadj$.

Equations are given to typed terms as in the metalanguage.
We only consider equations between terms with the same type under the same context (i.e.\ equations-in-context).
The definitional equality (postulated equations) of SAC is given by the following rules.
The rules for congruence, reflexivity, symmetry, transitivity, and substitution are omitted for brevity.
\begin{align*}
  \reflect{(\reify{N})} &=_{A} N \tag{$\beta_\rightadj$} \\
  M &=_{\rightadj A} \reify{(\reflect{M})} \tag{$\eta_\rightadj$} \\
  (\letvalin{x}{\val{M}}{N}) &=_{A} N[M/x] \tag{$\beta_\leftadj$} \\
  N &=_{\leftadj\tau} (\letvalin{x}{N}{\val{x}}) \tag{$\eta_\leftadj$} \\
  (\letvalin{x_1}{C[N_1]}{N_2}) &=_{A} C[\letvalin{x_1}{N_1}{N_2}] \tag{comm.\,conv.}
\end{align*}
Here, $C[-]$ denotes any typed context.
Given a signature, a \emph{theory of SAC} is a set of equations-in-context in the signature.

Note that we can easily redefine $\mathtt{return}$ and $\mathtt{let}$ in the simple metalanguage in SAC:
\begin{align*}
  \return{M} &\defeq \reify{(\val{M})} \\
  (\letin{x}{M_1}{M_2}) &\defeq \reify{(\letvalin{x}{\reflect{M_1}}{M_2})}
\end{align*}
However, the converse is not possible for that the class of models is widened from any monads to any adjunctions.

Unlike lax logic, the modalities in SAC are considered stable.
The strength of the introduction and elimination rules of $\leftadj$ is equalized (at least in its succedents) in the sense that $\mathtt{val}$ creates $\leftadj \tau$ from any $\tau$ and $\mathtt{let\;val}$ destructs $\leftadj \tau$ into any proof term with a hole of type $\tau$.
This is also confirmed by checking the associativity rule of the metalanguage is rephrased with a commutative conversion rule with typed context $C[-]$ in SAC.

\subsection{Semi-effect calculus} \label{sec:sec}

Now that we have accomplished our proof-theoretic reconstruction, we further derive another calculus that is interesting as a programming language.
In SAC, we could freely switch back and forth between the realms of values (terms under $\vdashv$) and computations (terms under $\vdashc$).
By removing the rules for $\mathtt{reify}$ and $\mathtt{reflect}$ from the calculus and allowing multiple variables in the value context, we obtain a new calculus, which we dub \emph{semi-effect calculus (SEC)}.
In SEC, the realms of values and computations are no longer treated dually.
Instead values can only ``act'' on computations in a way we later justify via semantic arguments.
Still, the calculus has a flavor of computation as it incorporates $\mathtt{val}$ and $\mathtt{let\;val}$.
We call this phenomenon \emph{semi-effectful}.

As in SAC, the set of types in SEC is given by a set of value types and a set of computation types, denoted by $\tau$ and $A$ respectively:
\begin{align*}
  \tau &\bnfdefeq \sigma \mid \tau\times\tau\mid 1\\
  A    &\bnfdefeq b \mid \leftadj \tau
\end{align*}
where $\sigma$ is any base value type and $b$ is any base computation type.
Notice that we also assume finite product types in values.
Since the right adjoint modality $\rightadj$ is dropped, nested computation types such as $\rightadj\leftadj\rightadj\leftadj\tau$ are no longer valid.
Each function symbol has one of three sorts: $\morphism{}{\vec{\tau_i}}{\tau}$, $\morphism{}{\vec{\tau_i}, A}{A'}$, and $\morphism{}{\vec{\tau_i}}{A}$, where $\vec{\tau_i}$ denotes $\tau_1,\ldots,\tau_n$ for some $n \in \N$.

\begin{figure}
  \begin{typingrules}
    \AXC{$\Gamma \vdashv \type{M_1}{\tau_1}$}
    \AXC{$\cdots$}
    \AXC{$\Gamma \vdashv \type{M_n}{\tau_n}$}
    \RL{$\morphism{f}{\vec{\tau_i}}{\tau}$}
    \TIC{$\Gamma \vdashv \type{f(M_1,\ldots,M_n)}{\tau}$}
    \DP
    \quad
    \AXC{$\Gamma \vdashv \type{M_1}{\tau_1}$}
    \AXC{$\cdots$}
    \AXC{$\Gamma \vdashv \type{M_n}{\tau_n}$}
    \AXC{$\context{\Gamma}{\Delta} \vdashc \type{N}{A}$}
    \RL{$\morphism{g}{\vec\tau_i,A}{A'}$}
    \QIC{$\context{\Gamma}{\Delta} \vdashc \type{g(M_1,\ldots,M_n,N)}{A'}$}
    \DP
    \quad
    \AXC{$\Gamma \vdashv \type{M_1}{\tau_1}$}
    \AXC{$\cdots$}
    \AXC{$\Gamma \vdashv \type{M_n}{\tau_n}$}
    \RL{$\morphism{h}{\vec\tau_i}{A}$}
    \TIC{$\context{\Gamma}{} \vdashc \type{h(M_1,\ldots,M_n)}{A}$}
    \DP
    \quad
    \AXC{$\mathstrut$}
    \RL{$(\type{x}{\tau}) \in \Gamma$}
    \UIC{$\Gamma \vdashv \type{x}{\tau}$}
    \DP
    \quad
    \AXC{$\mathstrut$}
    \UIC{$\context{\Gamma}{\type{v}{A}} \vdashc \type{v}{A}$}
    \DP
    \quad
    \AXC{$\Gamma \vdashv \type{M}{\tau}$}
    \UIC{$\context{\Gamma}{{\cdot}} \vdashc \type{\val{M}}{\leftadj\tau}$}
    \DP
    \quad
    \AXC{$\context{\Gamma}{\Delta} \vdashc \type{N_1}{\leftadj\tau}$}
    \AXC{$\context{\type{x}{\tau},\Gamma}{{\cdot}} \vdashc \type{N_2}{A}$}
    \BIC{$\context{\Gamma}{\Delta} \vdashc \type{\letvalin{x}{N_1}{N_2}}{A}$}
    \DP
  \end{typingrules}
  \unskip
  \caption{Typing rules of \sec{} (rules for finite product types are omitted)}
  \label{sec:typing rules}
  \centering
\end{figure}

\figurename{} \ref{sec:typing rules} lists the typing rules of \sec{}.
A judgment in SEC has either of forms \fbox{$\Gamma\vdashv \type{M}{\tau'}$} or \fbox{$\context{\Gamma}{\Delta}\vdashc \type{N}{A}$}.
Here, $\Gamma$ is a context of zero or more value variables and $\Delta$ is a context of zero or one computation variable.
While contexts in SEC have unusual forms, the usual properties of typing judgment hold without difficulty.

\begin{lemma}
  The uniqueness of typing holds for both $\vdashv$ and $\vdashc$.
  The weakening, contraction, and exchange rules hold for the value context.
  The structural rule of substitution holds for both the value and computation contexts.
\end{lemma}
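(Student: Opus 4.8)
The plan is to prove all four statements by structural induction, exploiting that each rule in \figurename~\ref{sec:typing rules}, together with the omitted product rules, is \emph{syntax-directed}: the outermost shape of the subject term uniquely determines which rule could conclude the judgment. I would treat the claims in the order uniqueness, then weakening/contraction/exchange, then substitution, so that no item depends on a later one; only substitution will call on a previously established item (weakening).

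For \textbf{uniqueness of typing} I induct on the subject term. Each term former matches exactly one rule, so it suffices to see that the conclusion type is forced. A variable reads its type off the (duplicate-free) context; an application $f(\vec M)$, $g(\vec M, N)$, or $h(\vec M)$ takes the codomain declared by the function symbol; $\val M$ has type $\leftadj\tau$ with $\tau$ the unique type of $M$ by the induction hypothesis. The only case worth a remark is $\letvalin{x}{N_1}{N_2}$, whose bound type is not written in the term: by the induction hypothesis $N_1$ has a unique type, which the rule forces into the shape $\leftadj\tau$; since $\leftadj(-)$ is a constructor it is injective, so $\tau$ and hence the body context $\type{x}{\tau},\Gamma$ are determined, and the induction hypothesis then pins down the type $A$ of $N_2$, which is the type of the whole term.

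\textbf{Weakening, contraction, and exchange} for the value context follow by a single induction on derivations, carried out simultaneously for $\vdashv$ and $\vdashc$. The content is that the value context $\Gamma$ is threaded uniformly into the premises of every rule, so extending, permuting, or merging its variables simply propagates through. The one twist is the \texttt{let val} rule, whose body is typed under the enlarged context $\type{x}{\tau},\Gamma$: one first $\alpha$-renames the binder $x$ to keep it disjoint from the variables being manipulated, after which the operation on $\Gamma$ commutes with prepending $x$. The computation context is left untouched throughout, and no contraction is claimed for it, matching the fact that its single variable is used linearly.

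For \textbf{substitution} I prove two statements, each by induction on the derivation of the term into which we substitute. \emph{Value substitution}: given $\Gamma \vdashv \type{M}{\tau_0}$, replacing $x$ by $M$ sends any judgment with value context $\Gamma,\type{x}{\tau_0}$ — of either $\vdashv$ or $\vdashc$ form — to the corresponding judgment with context $\Gamma$; the delicate case is again \texttt{let val}, whose body is typed under $\Gamma,\type{x}{\tau_0}$ further enlarged by the binder, so to apply the induction hypothesis there, $M$ must first be lifted into that enlarged context by weakening and exchange, making this part rest on the previous paragraph. \emph{Computation substitution}: replacing the unique computation variable $v$ by a computation term $P$ preserves $\vdashc$-judgments, and this statement is essentially linear and hence clean, since in every rule $v$ is confined to a single premise — in \texttt{let val} the body carries the empty computation context $\emptycontext$, so $v$ can occur only in the head — and the substitution simply recurses into that premise, threading out the context $\Delta$ inherited from $P$, with no duplication or capture. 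I expect the only genuinely non-mechanical point to be exactly this bookkeeping at the \texttt{let val} rule, where the two contexts are handled asymmetrically: the value binder is prepended to $\Gamma$, so structural operations and value substitution on $\Gamma$ must be pushed under the binder using weakening and $\alpha$-renaming, while the computation context of the body is forced to be empty, which is what keeps computation substitution linear; once this asymmetry is managed consistently, the remaining cases — including the omitted finite-product rules, which are entirely standard — go through routinely.
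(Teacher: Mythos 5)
Your proof is correct: the paper itself gives no proof of this lemma, remarking immediately before it that ``the usual properties of typing judgment hold without difficulty,'' and your structural inductions (uniqueness by induction on the term, structural rules and then substitution by induction on derivations, with value substitution invoking the previously proved weakening) are exactly the routine argument the paper is taking for granted. You also correctly isolate the only points where SEC deviates from the standard setting --- the binder prepended to $\Gamma$ in the \texttt{let val} rule and the forced-empty computation context of its body, which is what makes substitution for the computation variable linear and capture-free.
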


SEC inherits the equation-in-context rules from SAC.
Using concepts up to here, we can introduce the theory of \sec.

\begin{definition}
  A \emph{signature} of \sec{} is given by sets $V$ and $C$ of value/computation base types and a set $F$ of function symbols.
  A \emph{theory} of \sec{} consists of a signature $\Sigma$ and a set $\Ax$ of \emph{axioms}, well-formed equations under $\Sigma$.
\end{definition}

By seeing $\mathtt{let\,val}$ as $\mathtt{let}$ and $\mathtt{val}$ as $\mathtt{return}$, we can easily transport examples of Moggi's metalanguage (e.g.\ stateful, nondeterministic, and so on) to SEC.
Moreover, SEC can express a term that is not ``effectful'' but ``semi-effectful''.
Here we demonstrate this by showing an example using Haskell's \Applicative{} \cite{DBLP:journals/jfp/McbrideP08}, which is a generalization of \texttt{Monad}.

Recall that a functor \texttt{f} in Haskell is \Applicative{} if it is endowed with two operators
\begin{align*}
  \mathtt{pure} &\colon \mathtt{a} \to \mathtt{f\; a} \\
  \texttt{<*>} &\colon \mathtt{f\; (a \to b)} \to \mathtt{f\; a} \to \mathtt{f\; b}
\end{align*}
satysfing some laws.
A leading example that is not \texttt{Monad} but \Applicative{} is \texttt{ZipList}.
\texttt{ZipList a} is a type of finite or infinite sequence of type \texttt{a}.
Its associated \texttt{pure} is given by $\texttt{pure}\,x \defeq (x)_{i < \infty}$ and \texttt{<*>} is given by $(f_i)_{i < n} \texttt{<*>} (x_i)_{i < m} \defeq (f_i x_i)_{i < \min\{n,m\}}$.
Because \texttt{ZipList} is not \texttt{Monad}, we cannot use Moggi's metalanguage to reason about it.
On the other hand, in SEC, such reasoning is possible.
We define a theory for \texttt{ZipList} $\T_{\texttt{ZipList}}$ as the internal language of the Freyd category associated to the lax monoidal functor of \texttt{ZipList}, where we defer the technical details to Example \ref{ex:model of sec} and Corollary \ref{cor:internal language}.
Here we only point out that $\vdashc$ corresponds to the applicative context, whereas $\vdashv$ is the pure context.
Inside this theory, terms of \texttt{ZipList} can be defined in a style very much like applicative-do \cite{DBLP:conf/haskell/MarlowJKM16}:
\[
  \letvalin{x}{\mathtt{[1,2,3]}}{\letvalin{y}{\mathtt{[4,5]}}{\val{(x + y)}}}.
\]
This term roughly corresponds to the following expression in applicative-do:
\begin{equation}
  \label{eq:applicative do}
  \texttt{do \{ x <- [1,2,3]; y <- [4,5]; pure (x + y) \}}
\end{equation}
which is desugared to \texttt{pure (\textbackslash x y -> x + y) <*> [1,2,3] <*> [4,5]} and results in \texttt{[5,7]}.
For the sake of soundness, applicative-do disallows a term at the position of \texttt{[4,5]} in \eqref{eq:applicative do} to use \texttt{x}.
However, there is no such limitation in SEC, and thus the following is perfectly valid:
\[
  \letvalin{x}{\mathtt{[1,2,3]}}{\letvalin{y}{\val{(x + 1)}}{\val{(x + y)}}}.
\]
In this way, we obtain a logic of \texttt{ZipList} for free, in which we can reason e.g.\ as follows:
\begin{gather*}
  \begin{split}
    \context{\Gamma}{\Delta} \vdashc &\left(\letvalin{x}{\mathtt{[1,2,3]}}{\letvalin{y}{\mathtt{[4,5]}}{\val{(x + y)}}}\right) \\ &= \left(\letvalin{x}{\mathtt{[4,5]}}{\letvalin{y}{\mathtt{[1,2,3]}}{\val{(x + y)}}}\right).  
  \end{split}    
\end{gather*}

Note that SEC admits more models beyond \Applicative{}, as we will see in Section \ref{sec:model}.

\section{Categorical models for SEC}
\label{sec:model}

In this section, we introduce a categorical semantics of \sec.
Our semantics is built upon monoidal actions.
We fix a monoidal category $(\M,\tensor, I, r, l, a)$.

\begin{definition}[monoidal action, actegory, e.g.\ \cite{janelidze2001note}]
  Let $\C$ be a category.
  A bifunctor $\morphism{(-)\cdot (-)}{\M\times \C}{\C}$ is an \emph{$\M$-action} on $\C$ if there are natural isomorphisms $\morphism{\eta_c}{I\cdot c}{c}$ and $\morphism{\mu_{m_1,m_2,c}}{(m_1\tensor m_2)\cdot c}{m_1\cdot (m_2\cdot c)}$ making the following diagrams commute.
  \[
    \begin{tikzcd}[column sep=1.5cm]
      (m_1 \tensor m_2 \tensor m_3) \cdot c \ar[r, "\mu_{m_1 \tensor m_2, m_3, c}"] \ar[d, "\mu_{m_1, m_2 \tensor m_3, c}"'] & (m_1 \tensor m_2) \cdot (m_3\cdot c) \ar[d, "\mu_{m_1, m_2, m_3 \cdot c}"] \\
      m_1 \cdot ((m_2 \tensor m_3) \cdot c) \ar[r, "\id \cdot \mu_{m_2, m_3, c}"'] & m_1 \cdot (m_2 \cdot (m_3 \cdot c))
    \end{tikzcd}
    \quad
    \begin{tikzcd}[column sep=.2cm]
      (m \tensor I) \cdot c \ar[rr, "\mu_{m, I, c}"] \ar[rd, "r_m \cdot \id"'] && m \cdot (I \cdot c) \ar[ld, "\id \cdot \eta_c"] \\
      & m \cdot c
    \end{tikzcd}
  \]
  The left diagram implicitly uses the associativity $a$.
  An \emph{$\M$-actegory} is a category with a fixed $\M$-action on it.
\end{definition}

We often omit the prefix $\M$- from $\M$-action if it is inferrable from the context.

\begin{example}
  \label{ex:monoidal action}
  \begin{enumerate}
    \item 
      \label{ex:moncat has an action}
      Any monoidal category $\M$ is automatically an $\M$-actegory, where the action $\functor{(-)\cdot (-)}{\M\times\M}{\M}$ is given by the tensor product $\tensor$.
      
    \item
      Monoidal action subsumes the classical notion of monoid action.
      Any set is identifieid with a (small) discrete category and any monoid $(M, *, e)$ is identified with a monoidal category whose underlying category is discrete and whose tensor is given by $*$.
      Under this identification, a set $A$ is an $M$-actegory if and only if $A$ has a monoid action of $M$.
  \end{enumerate}
\end{example}

Morphisms of actegories are defined in the following sense.

\begin{definition}
  Let $\C,\D$ be $\M$-actegories.
  A functor $\morphism{F}{\C}{\D}$ is (lax) $\M$-equivariant (resp. strong $\M$-equivariant) if there is a coherent natural transformation (resp. isomorphism) $\morphism{\phi^F_{m, c}}{m\action F(c)}{F(m\action c)}$.
  We mean by \emph{coherence} that the diagrams below commute.
  \[
    \begin{tikzcd}
      (m \tensor m') \cdot Fc \ar[r, "\mu_{m, m', Fc}"] \ar[d, "\phi^F_{m \tensor m', c}"'] & m \cdot (m' \cdot Fc) \ar[r, "\id \cdot \phi^F_{m', c}"] & m \cdot F(m' \cdot c) \ar[d, "\phi^F_{m, m' \cdot c}"] \\
      F((m \tensor m') \cdot c) \ar[rr, "F\mu_{m, m', c}"'] && F(m \cdot (m' \cdot c))
    \end{tikzcd}
    \quad
    \begin{tikzcd}[column sep=.2cm]
      & I \cdot Fc \ar[dl, "\phi^F_{I,c}"'] \ar[dr, "\eta_{Fc}"] \\
      F(I \cdot c) \ar[rr, "F\eta_c"'] && Fc
    \end{tikzcd}
  \]
\end{definition}

We will omit the superscript $F$ for $\phi^F$ when this does not make confusion.
\emph{Strict $\M$-equivariant functor} is also defiend in the same mannar.

\begin{example}
  \label{ex:ef}
  \begin{enumerate}
    \item 
      Consider the canonical $\M$-action on $\M$ (see Example \ref{ex:monoidal action}).
      An equivariant functor $\functor{F}{\M}{\M}$ is precisely a \emph{strong} functor \cite{Kock1972StrongFA} $F$ on $\M$, where the strength $\functor{t_{A, B}}{A\tensor FB}{F(A\tensor B)}$ is $\phi^F_{A,B}$.

    \item
      Freyd category or value/producer structure \cite{DBLP:phd/ethos/Levy01} is a special case of strong equivariant functor.
      A Freyd category is an identity-on-objects functor $\functor{J}{\V}{\C}$ such that
      (1) $\V$ has finite products,
      (2) $\C$ has a $\V$-action, and
      (3) The $v\times (-)$ can be extended to the $\V$-action on $\C$ along $J$ for any $v\in \V$.
      These conditions say that $J$ is strict $\V$-equivariant.
  \end{enumerate}
\end{example}

Given a strong $\V$-equivariant functor $\morphism{\leftadj}{\V}{\C}$ where $\V$ has finite products and $\C$ has a $\V$-action w.r.t.\ the cartesian structure of $\V$, we can interpret theories
of \sec{}.
The interpretation follows the traditional category-of-contexts paradigm.
It is defined inductively once we fix data for base types and function symbols.
We will use $\V$ to interpret types and terms in the realm of values, and use $\C$ for the realm of computations.

Types and contexts in the realm of values are interpreted in $\V$ as usual: $\semantics{\tau_1 \times \tau_2} \defeq \semantics{\tau_1} \times \semantics{\tau_2}$ and $\semantics{\Gamma} \defeq \prod_{(\type{x_i}{\tau_i}) \in \Gamma} \semantics{\tau_i}$.
We use $\Gamma$ and $\tau$ almost interchangeably by this identification.
Computation types of the form $\leftadj \tau$ are interpreted using the functor $\leftadj$ by $\semantics{\leftadj\tau} \defeq \leftadj\semantics{\tau}$.
The two kinds of computation context have different interpretations:
$(\context{\Gamma}{\type{v}{A}})$ is interpreted by the action $\interpret{\context{\Gamma}{\type{v}{A}}} \defeq \interpret{\Gamma}\cdot \semantics{A}$, and $(\context{\Gamma}{\emptycontext})$ is interpreted by application of the equivariant functor $\interpret{\context{\Gamma}{\emptycontext}} \defeq \leftadj\interpret{\Gamma}$.

As to (well-typed) terms, we only show the case of computation terms (the case of value terms is rather obvious).
The interpretation of computation variables just discards value variables: $\semantics{\context{\Gamma}{\type{v}{A}}\vdashc \type{v}{A}} \defeq \interpret{\Gamma}\cdot\semantics{A}
\xrightarrow{\bang\cdot\id}1\cdot\semantics{A}\xrightarrow{\eta}\semantics{A}$.
$\mathtt{val}$ sends a value term to a computation term with the functor: $  \interpret{\context{\Gamma}{\emptycontext}\vdashc\type{\val{M}}{\leftadj\tau}} \defeq \leftadj\interpret{\Gamma\vdashv\type{M}{\tau}}$.
The most involved case is $\mathtt{let}\;\mathtt{val}$. $\interpret{\context{\Gamma}{\emptycontext}\vdashc\type{\letvalin{x}{N_1}{N_2}}{A}}$ is given by:
\[
  \leftadj\semantics{\Gamma} \xrightarrow{\leftadj\delta} \leftadj(\semantics{\Gamma} \times \semantics{\Gamma}) \xrightarrow{\phi^{-1}} \semantics{\Gamma} \cdot \leftadj\semantics{\Gamma} \xrightarrow{\id \cdot \semantics{N_1}} \semantics{\Gamma} \cdot \leftadj\semantics{\tau} \xrightarrow{\phi} \leftadj(\semantics{\Gamma \times \tau}) \xrightarrow{\semantics{N_2}} \semantics{A}.
\]
In the same vein, we can interpret the case when $\letvalin{x}{N_1}{N_2}$ has a free computation variable.

Given an interpretation $\semantics{-}$ of $\T$, an equation-in-context is defined to be \emph{valid w.r.t.\ $\semantics{-}$} if the two terms are externally equal i.e.\ interpreted by the same morphism.
It then follows that this interpretation is indeed sound.

\begin{theorem}
  \label{thm:model by ef}
  Let $\T$ be any theory of \sec{} and $\interpret{-}$ be an interpretation of $\T$.
  Assume that all axioms of $\T$ are satisfied by $\interpret{-}$.
  Then all equations derivable in $\T$ are satisfied by $\interpret{-}$.
\end{theorem}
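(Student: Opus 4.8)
The plan is to proceed by induction on the derivation of an equation-in-context $L_1 = L_2$ in $\T$, following the standard pattern for soundness of a categorical semantics. The base cases split into the axioms of $\T$, which hold by the hypothesis that $\interpret{-}$ satisfies them, and reflexivity, which is immediate since $\interpret{L} = \interpret{L}$. Symmetry and transitivity are dispatched at once because external equality---equality of morphisms in $\V$ or $\C$---is an equivalence relation, and the congruence rules reduce to the observation that every interpretation clause defines the denotation of a compound term as a fixed composite built from the denotations of its immediate subterms; hence replacing a subterm by an externally equal one leaves the composite unchanged. The value-level cases, including those for the omitted finite-product rules, are handled by the cartesian structure of $\V$ exactly as in the simply-typed $\lambda$-calculus, and require none of the actegorical coherence data.

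The technical heart is a substitution lemma asserting that the interpretation commutes with syntactic substitution---both of a value term for a value variable and of a computation term for the computation variable. In the value case this takes the form $\interpret{N[M/x]} = \interpret{N} \comp s$, where $s$ is the morphism induced by $\interpret{M}$ together with the projections out of the value context (carried into $\C$ by the action or by $\leftadj$ as appropriate). I would prove this by induction on the term receiving the substitution, treating the two zones of the context $\context{\Gamma}{\Delta}$ separately. The only delicate clause is $\mathtt{let\,val}$, where one must commute the substitution morphism past the chain $\leftadj\delta$, $\phi^{-1}$, $\id \action \interpret{N_1}$, $\phi$, $\interpret{N_2}$ using naturality of $\phi$ and of the diagonal $\delta$.

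With the substitution lemma in hand, the three equation schemes inherited from SAC become diagram chases in the model. For $\beta_\leftadj$ I would set $\interpret{\val{M}} = \leftadj\interpret{M}$ in the $\mathtt{let\,val}$ chain and collapse it using the unit coherence triangle for $\phi$ and naturality, identifying the result with $\interpret{N[M/x]}$ via the substitution lemma. For $\eta_\leftadj$ the bound branch is $\interpret{\val{x}} = \leftadj\pi_x$, so the chain must collapse to $\interpret{N}$ precomposed with an identity, which again follows from the unit coherence and naturality. The commutative-conversion rule is the most laborious: one must show that the $\mathtt{let\,val}$ chain commutes with an arbitrary typed context $C[-]$, which is in essence an associativity argument resting on the multiplicativity coherence diagram for $\phi$ together with the pentagon and triangle coherence for the action $(\mu,\eta)$ and naturality of every structural isomorphism involved.

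I expect the commutative-conversion case, together with setting up the substitution lemma for the unusual two-zone contexts, to be the main obstacle, since these are exactly the steps where the full coherence of the equivariant structure $\phi$ and of the monoidal action must be deployed. The remaining cases are routine once the interpretation has been seen to be compositional and to respect substitution.
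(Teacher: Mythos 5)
Your proposal is correct and follows essentially the same route as the paper's proof: induction on derivations with the routine cases dispatched by compositionality, and the substantive work concentrated in diagram chases for $(\beta)$, $(\eta)$, and (comm.\,conv.) using naturality of $\phi$, the coherence of the action, the cartesian structure of $\V$, and a case split on whether $\Delta$ is empty. The only difference is one of emphasis: you state the substitution lemma explicitly as a separate ingredient, whereas the paper leaves it implicit (its $\beta$-diagram terminates at $\interpret{N}\comp\leftadj\pair{\id}{\interpret{M}}$ and silently identifies this with $\interpret{N[M/x]}$), and your $(\beta)$ case in fact needs only naturality of $\phi$ and product universality, not the unit coherence triangle.
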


\begin{proof}
  The proof is tedious but routine.
  One needs to be careful of whether $\Delta$ is empty or non-empty.
  (See Appendix for the detailed proof.)
\end{proof}

We will call such a strong $\V$-equivariant functor $\functor{\leftadj}{\V}{\C}$ a \emph{model of \sec{}}.

\begin{example} \label{ex:model of sec}
  \begin{enumerate}
    \item Given a strong monad on a cartesian category $\C$,
      we get a model of \sec{} by the Kleisli construction (see Example \ref{ex:ef}).
      By the definition of Kleisli category, a term \fbox{$\context{\Gamma}{\type{v}{\leftadj \tau}}\vdashc \type{N}{\leftadj\tau'}$}
      is interpreted by a morphism $\morphism{}{\interpret{\Gamma}\times \semantics{\tau}}{T\semantics{\tau'}}$ in $\C$.
      Furthermore, $\semantics{\val{M}}=\eta\comp\semantics{M}$ and $\semantics{\letvalin{x}{N_1}{N_2}}= \semantics{N_2}^{\#}\comp t\comp \pair{\id}{\semantics{N_1}}$ hold where $\semantics{N_2}^{\#}$ is the Kleisli lifting of $\semantics{N_2}$ and $N_2$ has no free computation variables. These interpretaions agree with those of $\semantics{\return{M}}$ and $\semantics{\letin{x}{N_1}{N_2}}$ in \cite{Moggi91}.

    \item
      There is a model of \sec{} that is not a Freyd category.
      The simplest is the inclusion $\iota_1 \colon 1 \to 1 + 1$ where $1$ is the terminal category and the action $\ast \cdot (-)$ is the identity.
    
    \item
      \label{ex:model from lax monoidal}
      It is folklore that a lax monoidal functor $F$ on a CCC $\C$ induces a Freyd category $J \colon \C \to \D$ \cite{DBLP:journals/entcs/HeunenJ06,DBLP:journals/entcs/LindleyWY11}.
      A morphism $f \colon X \to Y$ in $\D$ is given by a morphism $f \colon 1 \to F(Y^X)$ in $\C$.
      (A similar construction is also found in the semantics of multi-staged computation \cite{DBLP:journals/entcs/NishiwakiKM18}.)
      Because a lax monoidal functor models $\mathtt{Applicative}$ in Haskell, this serves as a model of the example presented at the end of Section 3.
      In this sense we consider SEC is semi-effectful, admitting more models than what were not supported by traditional models of effects, namely monads.
  \end{enumerate}
\end{example}

Every model of \sec{} gives rise to its \emph{internal language}, a theory of \sec{} such that all objects and morphisms of the model are base types and function symbols and $\Ax$ contains all such equations-in-context $L_1 = L_2$ that $\semantics{L_1} = \semantics{L_2}$.

\begin{corollary}
  \label{cor:internal language}
  Let $\T_F$ be the internal language of model $F$.
  The following are equivalent.
  \begin{itemize}
    \item $\T_F \vdash L_1 = L_2$ (i.e., equation-in-context $L_1 = L_2$ is derivable in $\T_F$)
    \item $\semantics{L_1} = \semantics{L_2}$ in $F$.
  \end{itemize}
\end{corollary}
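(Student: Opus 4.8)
The plan is to read the biconditional as the conjunction of soundness and completeness relative to the internal language, and to observe that both halves reduce to results already at hand. First I would make explicit the \emph{canonical interpretation} $\semantics{-}$ of $\T_F$ in $F$ that is implicitly used in the definition of $\Ax$: since the base value types of $\T_F$ are the objects of $\V$, the base computation types are the objects of $\C$, and the function symbols are the morphisms of the model (sorted by the three shapes $\vec{\tau_i}\to\tau$, $\vec{\tau_i},A\to A'$, and $\vec{\tau_i}\to A$), there is an evident interpretation sending each item of base data to itself and extending to compound types and terms by the inductive clauses preceding Theorem~\ref{thm:model by ef}.

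For the implication $\semantics{L_1}=\semantics{L_2}\Rightarrow\T_F\vdash L_1=L_2$, I would appeal directly to the definition of the internal language. By construction $\Ax$ contains \emph{every} well-formed equation-in-context $L_1=L_2$ satisfying $\semantics{L_1}=\semantics{L_2}$ in $F$; hence any such equation is already an axiom of $\T_F$, and an axiom is derivable in one step. This direction needs no induction on derivations.

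For the converse $\T_F\vdash L_1=L_2\Rightarrow\semantics{L_1}=\semantics{L_2}$, I would invoke soundness. The key preliminary observation is that the canonical interpretation satisfies \emph{all} axioms of $\T_F$, and this is definitional: $\Ax$ was defined to consist exactly of the equations that $\semantics{-}$ validates. With the hypothesis of Theorem~\ref{thm:model by ef} thus met, that theorem yields that every equation derivable in $\T_F$ is satisfied by $\semantics{-}$, which is precisely the desired implication. Combining the two gives the stated equivalence.

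I do not anticipate a genuine obstacle: the corollary is essentially a repackaging of Theorem~\ref{thm:model by ef} together with the defining property of $\Ax$. The only place demanding care is the bookkeeping in the canonical interpretation — assigning each morphism of the model to a function symbol of the matching sort, in particular distinguishing the computation morphisms $\leftadj\semantics{\Gamma}\to\semantics{A}$ from those of the form $\semantics{\Gamma}\cdot\semantics{A}\to\semantics{A'}$ — so that the inductive interpretation clauses remain well-typed and return the named base data on the nose.
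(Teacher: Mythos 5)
Your proposal is correct and matches the paper's (implicit) argument exactly: the paper states this as an immediate corollary of Theorem~\ref{thm:model by ef} together with the definition of the internal language, which is precisely your decomposition into soundness (derivable $\Rightarrow$ semantically equal, via the theorem, whose hypothesis holds definitionally) and the one-step observation that every semantically valid equation is an axiom of $\T_F$ and hence derivable.
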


Conversely, we can construct a strong equivariant functor from any theory of \sec.

\begin{theorem}
  \label{thm:theory to ef}
  Any theory of \sec{} induces a strong equivariant functor.
\end{theorem}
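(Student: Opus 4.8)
The plan is to carry out the standard classifying-category (term model) construction: from a given theory $\T=(\Sigma,\Ax)$ I build a strong $\V$-equivariant functor $\functor{\leftadj}{\V}{\C}$ directly out of the syntax, arranged so that it validates exactly the $\T$-provable equations, so that it is the generic model of $\T$ and yields completeness with respect to it (complementing the soundness of Theorem~\ref{thm:model by ef}). First I would build $\V$ as the category of value contexts: objects are value contexts $\Gamma$ (equivalently value types, via the finite products), and a morphism $\Gamma\to\Gamma'$ is a tuple of value terms $(\Gamma\vdashv\type{M_i}{\tau_i})$, one per declaration of $\Gamma'$, quotiented by $\T$-provable equality, with substitution as composition and variables as identities. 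That $\V$ is a finite-product category is routine: associativity and unitality follow from the substitution rule, and the product structure is read off from the type formers $\times$ and $1$ and the omitted product rules. This is just the classifying category of a finite-product theory and causes no difficulty.

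Second, I would build $\C$ together with its $\V$-action and the functor $\leftadj$. On objects I take the base computation types $b$, the objects $\leftadj\tau$ (one per value type $\tau$), and the formal action objects $\Gamma\cdot b$, subject to the object-level actegory identities $I\cdot X=X$ and $(\Gamma_1\times\Gamma_2)\cdot X=\Gamma_1\cdot(\Gamma_2\cdot X)$. The crucial design choice is to define the action on $\leftadj$-objects strictly, by $\Gamma\cdot\leftadj\tau\defeq\leftadj(\Gamma\times\tau)$; this is exactly what forces the equivariance morphism $\morphism{\phi_{\Gamma,\tau}}{\Gamma\cdot\leftadj\tau}{\leftadj(\Gamma\times\tau)}$ to be an identity there, hence invertible, so that the functor comes out strong (indeed strict on these objects). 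For morphisms I present $\C$ as the $\V$-actegory generated by the computation terms: each derivable $\context{\Gamma}{\type{v}{A}}\vdashc\type{N}{A'}$ is a generating arrow $\Gamma\cdot A\to A'$, the action of a value morphism is given by extending the value context, composition is substitution into the computation variable, and I quotient by the SEC equations together with naturality of $\mathtt{val}$ and the actegory coherence. The functor $\leftadj$ sends $\morphism{f}{\tau}{\tau'}$ to the class of $\letvalin{x}{v}{\val{f(x)}}$, and $\phi^{-1}$ is available because $\mathtt{let\,val}$ lets us re-decompose an $\leftadj$-object, matching the use of $\phi^{-1}$ in the interpretation of $\mathtt{let\,val}$ in Section~\ref{sec:model}.

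Third, I would verify the three bundles of equations. Functoriality of $\leftadj$ (that $\leftadj\id=\id$ and $\leftadj(g\comp f)=\leftadj g\comp\leftadj f$) follows from $\eta_\leftadj$ and from $\beta_\leftadj$ with one commutative conversion; that $\cdot$ is a bifunctor satisfying the pentagon and triangle of a $\V$-action follows from the structural rules for the value context (weakening, contraction, exchange, and substitution from the preceding lemma); and that $\phi$ is natural and coherent reduces, on $\leftadj$-objects, to its being an identity, and against the action generators to the commutative conversion. Well-definedness of each clause modulo $\T$-provable equality is where most of the bookkeeping sits, and the required calculations are essentially the soundness argument of Theorem~\ref{thm:model by ef} read backwards.

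I expect the main obstacle to be precisely the $\V$-action on $\C$ and the verification that $\leftadj$ is strong rather than merely lax. The difficulty is twofold: since the codomain of every judgment is a bare type, action objects with a nontrivial value context (e.g.\ $\Gamma\cdot b$) and the arrows into them have no direct term syntax and must be introduced and coherently related by hand, so showing the action is a well-defined bifunctor with coherent structural isomorphisms is delicate; and confirming that $\phi$ is genuinely invertible is only possible because the stipulation $\Gamma\cdot\leftadj\tau\defeq\leftadj(\Gamma\times\tau)$ makes it strict on $\leftadj$-objects, exactly reflecting the fact that the semantics of $\mathtt{let\,val}$ already presupposes $\phi^{-1}$. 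Once these are settled, the remaining coherence and well-definedness checks are tedious but mechanical.
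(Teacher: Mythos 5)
Your overall route is the paper's own: build the classifying value category $\V$, build a computation category $\C$ out of the syntax, and force the equivariance structure $\phi$ to be an identity by arranging that the action on the image of $\leftadj$ lands back in that image. However, at the decisive point your design diverges from the paper's, and the divergence is a gap rather than a simplification. In the paper's term model the functor does \emph{not} send $\tau$ to the syntactic type $\leftadj\tau$ with computation terms as its arrows: it sends $\tau$ to a fresh object $(\context{\tau}{\emptycontext})$ whose arrows to and from other such objects are \emph{value terms only} (indeed $\C_{\T}((\context{\Gamma}{A}),(\context{\Gamma'}{\emptycontext}))\defeq\emptyset$, and $F$ is the identity on value terms, not $f\mapsto\classof{\letvalin{x}{v}{\val{f(x)}}}$). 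This severing is what makes every hom-set of $\C_{\T}$ explicitly describable (a pair of terms, a value term, or nothing) and closed under composition. Its price, recorded in Remark \ref{rem:failure of completeness}, is that the term model no longer interprets a term by itself ($\semantics{\val{M}}$ is $M$, not $\val{M}$), so the theorem deliberately does \emph{not} yield completeness; your opening claim that your construction is the generic model and ``yields completeness'' contradicts the paper's abstract and Remark \ref{rem:failure of completeness}.

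Here is where your version breaks. Because you take $\leftadj f\defeq\classof{\letvalin{x}{v}{\val{f(x)}}}$, computation terms count as morphisms between $\leftadj$-objects, and then the arrows into genuine action objects $\Gamma\cdot b$ cannot be ``introduced and coherently related by hand'': they fail to close under composition with your other morphisms. Concretely, take a base value type $\sigma$, a base computation type $b$, and function symbols $\morphism{g_0}{\leftadj 1}{\leftadj\sigma}$ and $\morphism{k_0}{\leftadj\sigma}{b}$. Your category contains $k'\defeq\classof{\context{\emptycontext}{\type{u}{\leftadj 1}}\vdashc\type{g_0(u)}{\leftadj\sigma}}\colon\leftadj 1\to\leftadj\sigma$ and $p\defeq(\id_\sigma\cdot\classof{k_0(v)})\comp\leftadj\delta\colon\leftadj\sigma\to\sigma\cdot b$ (using your identification $\sigma\cdot\leftadj\sigma=\leftadj(\sigma\times\sigma)$), hence must contain $p\comp k'\colon\leftadj 1\to\sigma\cdot b$. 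But every arrow into $\sigma\cdot b$ produced by your generators carries a pure value term into $\sigma$ as its ``value component'', and there is no value term $1\to\sigma$: the would-be component depends on the effect performed by $g_0$ and is representable by no term or pair of terms. (In the paper's model the corresponding hom-set is empty, which is consistent precisely because $k'$ is not a morphism out of $F1$ there.) So your hom-sets into action objects must be taken as formal composites modulo a congruence, i.e.\ a presented actegory, and nothing in your plan establishes that this quotient is non-degenerate — which is exactly what your genericity/completeness claims, and any later use of the model, would require. A secondary issue: your strictification $I\cdot X=X$, $(\Gamma_1\times\Gamma_2)\cdot X=\Gamma_1\cdot(\Gamma_2\cdot X)$ over the non-strict cartesian structure of $\V$ forces $\leftadj$ to send unitors and associators to identities, which needs a coherence argument you do not supply; the paper avoids this entirely by taking $\eta$ and $\mu$ to be the evident projection and reassociation terms.
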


\begin{proof}
  We perform the term model construction as follows.
  The value category $\V$ is constructed as usual from value terms (see e.g.\ \cite{Crole}).
  The construction of the computation category $\C$ is somewhat tricky; it is defined by case distinction of computation context:
  \begin{align*}
    \C((\context{\Gamma}{A}),(\context{\tau}{A'}))
      &\defeq \{(\classof{\Gamma \vdashv \type{M}{\tau}}, \classof{\context{\Gamma}{A} \vdashc \type{N}{A'}})\} \\
    \C((\context{\Gamma}{\cdot}),(\context{\Gamma'}{\cdot}))
      &\defeq \V(\Gamma,\Gamma') \\
    \C((\context{\Gamma}{A}),(\context{\Gamma'}{\cdot}))
      &\defeq \emptyset
  \end{align*}
  where $\classof{\cdots}$ denotes the equivalence class of judgments up to the definitional equality.
  The $\V$-action on $\C$ is then given by $\tau \cdot (\context{\Gamma}{\Delta}) \defeq (\context{\tau, \Gamma}{\Delta})$.
  The equivariant functor $\leftadj$ sends $\tau$ to $(\context{\tau}{\cdot})$.
  See Appendix for the detailed construction.
\end{proof}

\begin{remark}
  \label{rem:failure of completeness}
  The crucial point of our term model construction is that the syntactic functor $\leftadj$ is defiend to be $\leftadj(\tau) \defeq (\context{\tau}{\emptycontext})$ instead of $(\context{\emptycontext}{\leftadj\tau})$.
  In fact, setting $\leftadj(\tau) \defeq (\context{\emptycontext}{\leftadj\tau})$ only gives us a lax equivariant functor.
  However, by setting $\leftadj(\tau) \defeq (\context{\tau}{\emptycontext})$, it in turn no longer holds that the term model interprets a term by itself.
  For example, $\semantics{\val M}$ is given by $M$ instead of $\val M$.
  As a result, Theorem \ref{thm:theory to ef} does not imply completeness of our semantics.
\end{remark}

Before proceeding, we introduce the notion of morphism of models of \sec{}.

\begin{definition}[equivariant natural transformation]
  \label{def:morphism of lef}
  Let $\C, \D$ be $\M$-actegories and $F, G: \functor{}{\C}{\D}$ be lax equivariant functors.
  A natural transformation $\morphism{\theta}{F}{G}$ is \emph{equivariant} if $\theta_{m\cdot c} \comp \phi^F_{m,c} = \phi^G_{m,c} \comp (m \cdot \theta_c)$ holds for all $m \in \M$ and $c \in \C$.
\end{definition}

$\M$-actegories, lax equivariant functors, and equivariant natural transformations form a 2-category $\Act{\M}$.
Replacing ``lax equivariant functors'' with ``strong equivariant functors'' yields another 2-category.
Any (strong) monoidal functor $\functor{F}{\M}{\M'}$ induces the change-of-base 2-functor $\functor{F^\ast}{\Act{\M'}}{\Act{\M}}$.

\begin{example}
  \label{ex:2-category of actegories}
  \begin{enumerate}
    \item
      For any 2-categorical notion X, we call an X internal to $\Act{\M}$ an \emph{$\M$-equivariant X}.
      A strong monad $T$ on $\M$ is just an $\M$-equivariant monad.
      Also, the Kleisli resolution $J \dashv K$ of $T$ is an $\M$-equivariant adjunction.
      Note that every equivariant left adjoint is strong equivariant.
      $J$ is a model of \sec{} in this way.

    \item
      Change of base along a strong monoidal functor $\functor{F}{\M}{\M'}$ makes $F$ a strong $\M$-equivariant functor, since $m \cdot_{\M'} F(m') = F(m) \tensor_{\M} F(m') \cong F(m \tensor_{\M} m') = F(m \cdot_{\M} m')$.
  \end{enumerate}
\end{example}

A morphism of models of \sec{} is defined in the language of actegories.

\begin{definition}[morphism of models of \sec]
  \label{def:morphism of models}
  Let $\V$ be a category with finite products and $\functor{F}{\V}{\C}, \functor{F'}{\V}{\C'}$ be models of \sec{}.
  A \emph{morphism of models of \sec{}} is given by a lax equivariant functor $\functor{H}{\C}{\C'}$ and an equivariant natural transformation $\morphism{\theta}{F'}{HF}$.
  \begin{center}
    \begin{tikzpicture}
      \matrix[row sep=.5cm, column sep=.7cm]{
        & \node (V) {$\V$};\\
        \node (C){$\C$}; & & \node (s) {$\C'$};\\
      };
      \coordinate (empty) at ($(V) !.6! (s)$);
      \draw 
        (V) edge node [above left] {$F$} (C)
            edge node [above right] {$F'$} (s)
        (C) edge node (a) [below, align=center] {$H$} (s);
  
      \node (arrow)
        at ($(C)!.7!(empty)$) [inner sep=0mm, rotate=30, label=$\theta$] {$\Leftarrow$};
    \end{tikzpicture}
  \end{center}
\end{definition}

Definition of 2-cells of models of \sec{} is omitted.
Note that every morphism of models is a morphism in $\int \Act{\V}^{\mathrm{co}}(\V,-)$, where $\int$ is the Grothendieck construction.

\begin{remark}
  \label{rem:failure of CHL}
  For reasons similar to Remark \ref{rem:failure of completeness}, the Curry-Howard-Lambek correspondence fails in our semantics.
  Specifically, we do not have a (canonical) equivalence of models of \sec{}: $F \xrightarrow{\simeq} \mathrm{Syn}(\T_F)$, where $\mathrm{Syn}(\T)$ is the term model of theory $\T$.
\end{remark}

\section{Categorical gluing for (lax) equivariant functors}

\subsection{Categorical gluing}

\emph{Categorical gluing} (also known as \emph{sconing}) is a technique to obtain a new model from a morphism of models.
It is a special case of the comma construction (see e.g.\ \cite{maclane:71}).

\begin{definition}[categorical gluing]
  \label{def:gluing}
  Given a functor $\functor{\Gamma}{\C}{\D}$, the \emph{gluing category of $\C$ to $\D$ along $\Gamma$} is obtained as a comma category $\glue{\D}{\Gamma}$.
  %
  The gluing category is equipped with the projection functor $\functor{\pi}{(\glue{\D}{\Gamma})}{\C}$.
\end{definition}

One important and interesting fact about gluing is that the gluing category $\glue{\D}{\Gamma}$ often inherits the involved structures.
In other words, when $\C$ and $\D$ have a certain kind of categorical structure and $\functor{\Gamma}{\C}{\D}$ preserves it, the gluing category $\glue{\D}{\Gamma}$ often has the structure and the projection functor $\pi$ preserves it.

\begin{example}
  \label{ex:gluing with fp}
  Let $\C$ and $\D$ be categories with (chosen) finite products and
  $\functor{\Gamma}{\C}{\D}$ be a functor preserving them (up to isormophism).
  Then the gluing category $\glue{\D}{\Gamma}$ has finite products and the projection functor $\pi$ (strictly) preserves them.
  Specifically, the terminal object is given by $(1, 1, \gamma)$, and the binary product of $(d, c, f)$ and $(d', c', f')$ is given by to by
  \[
     \bigl(d\times d', c\times c', d\times d'\xrightarrow{f\times f'}\Gamma c\times \Gamma c'\xrightarrow{\gamma_{c, c'}}\Gamma (c\times c')\bigr)
  \]
  where $\gamma$ and $\gamma_{c, c'}$ are the associated isomorphisms.
\end{example}

An important variance of gluing is \emph{subgluing} \cite{DBLP:conf/tlca/Hasegawa99}.

\begin{definition}[subgluing]
  \label{def:subgluing}
  Suppose a functor $\functor{\Gamma}{\C}{\D}$ is given. 
  By restricting the objects in $\glue{\D}{\Gamma}$ to subobjects, we get the full subcategory $\subglue{\D}{\Gamma}$ of $\glue{\D}{\Gamma}$.
  In other words, $(D, C, \morphism{f}{D}{\Gamma C})$ is an object in $\subglue{\D}{\Gamma}$ if and only if $f$ is a subobject.
  This category $\subglue{\D}{\Gamma}$ is called the subgluing of $\C$ to $\D$ along $\Gamma$.
\end{definition}

The gluing category and subgluing category for $\functor{\Gamma}{\C}{\D}$ are obtained as a pullback \cite{DBLP:conf/tlca/Hasegawa99}.

\begin{center}
  \begin{tikzpicture}
    \matrix[row sep=1cm, column sep=1cm]{
      \node (glue) {$(\glue{\D}{\Gamma})$}; & \node (arrow) {$\arrowcat{\D}$};\\
      \node (c) {$\C$}; & \node (d) {$\D$};\\
    };
    \draw
      (glue)  edge (arrow)
              edge (c)
      (c)     edge node [below] {$\Gamma$} (d)
      (arrow) edge node [right] {$\cod$} (d);
    \node at ($(glue) + (.4, -.5)$) {$\lrcorner$};
  \end{tikzpicture}
  \begin{tikzpicture}
    \matrix[row sep=1cm, column sep=1cm]{
      \node (glue) {$\subglue{\D}{\Gamma}$}; & \node (arrow) {$\Sub(\D)$};\\
      \node (c) {$\C$}; & \node (d) {$\D$};\\
    };
    \draw
      (glue)  edge (arrow)
              edge (c)
      (c)     edge node [below] {$\Gamma$} (d)
      (arrow) edge node [right] {$\Sub$} (d);
    \node at ($(glue) + (.4, -.5)$) {$\lrcorner$};
  \end{tikzpicture}
\end{center}

\subsection{Actegorical gluing} \label{sec:actegorical gluing}

We are able to present categorical gluing for actegories and lax equivariant functors.
First we show that gluing along lax equivariant functors yields an $\M$-actegory.

\begin{proposition}[actegorical gluing]
  \label{prop:simple gluing}
  Let $\C, \D$ be $\M$-actegories and $\functor{\Gamma}{\C}{\D}$ a lax equivariant functor.
  The gluing category $\glue{\D}{\Gamma}$ is an $\M$-actegories and the projection functor $\functor{\pi}{\glue{\D}{\Gamma}}{\C}$ is strict equivariant.
\end{proposition}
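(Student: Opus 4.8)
The plan is to transport the finite-product construction of Example~\ref{ex:gluing with fp} from the cartesian structure to the $\M$-action, using the lax structure $\phi^\Gamma$ in place of the lax monoidal structure $\gamma$. First I would define the $\M$-action on objects of $\glue{\D}{\Gamma}$ by
\[
  m \cdot (d, c, f) \defeq \bigl( m \cdot d,\ m \cdot c,\ m\cdot d \xrightarrow{m \cdot f} m \cdot \Gamma c \xrightarrow{\phi^\Gamma_{m,c}} \Gamma(m\cdot c) \bigr),
\]
and on morphisms componentwise, $m \cdot (\alpha, \beta) \defeq (m \cdot \alpha,\, m \cdot \beta)$. Checking that $m \cdot (\alpha,\beta)$ is again a morphism of the comma category amounts to naturality of $\phi^\Gamma$ in its $\C$-argument, and bifunctoriality of $(-)\cdot(-)$ on $\glue{\D}{\Gamma}$ reduces to bifunctoriality of the actions on $\C$ and $\D$ together with naturality of $\phi^\Gamma$ in the $\M$-argument.

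Next I would define the coherence isomorphisms of the glued action componentwise, setting $\eta_{(d,c,f)} \defeq (\eta^\D_d, \eta^\C_c)$ and $\mu_{m_1,m_2,(d,c,f)} \defeq (\mu^\D_{m_1,m_2,d}, \mu^\C_{m_1,m_2,c})$. The substance of the argument is to verify that these pairs are legitimate morphisms of $\glue{\D}{\Gamma}$, i.e.\ that they satisfy the comma-square condition relating the $f$-components of the source and target objects. For $\mu$ this square is exactly the associativity coherence diagram for $\phi^\Gamma$ (the first diagram in the definition of lax equivariant functor), postcomposed using functoriality of $m_1 \cdot (-)$ and combined with naturality of $\mu^\D$ applied to $f \colon d \to \Gamma c$; for $\eta$ it is the unit triangle for $\phi^\Gamma$ together with naturality of $\eta^\D$. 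Once the square is checked, each pair is invertible because a morphism of a comma category is an isomorphism exactly when both of its components are, and $\eta^\C, \eta^\D, \mu^\C, \mu^\D$ are all isomorphisms.

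Finally I would verify the two action-coherence diagrams (the associativity square and the unit triangle) for the glued action. Since every arrow involved is defined componentwise, each diagram commutes if and only if it commutes after projecting to $\C$ and to $\D$ separately; but those projected diagrams are precisely the corresponding coherence diagrams for the actions on $\C$ and on $\D$, which hold by assumption. The projection $\functor{\pi}{\glue{\D}{\Gamma}}{\C}$ sends $(d,c,f)$ to $c$ and, by construction, $m \cdot (d,c,f)$ to $m \cdot c$ on the nose, so $\pi(m \cdot X) = m \cdot \pi(X)$ with the identity serving as $\phi^\pi$; this makes $\pi$ strict equivariant. I expect the only delicate step to be the comma-square verification for $\mu$, where the associativity coherence diagram of $\phi^\Gamma$ must be combined with the naturality of $\mu^\D$ in the correct order; everything else is a routine componentwise check.
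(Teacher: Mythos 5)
Your proposal is correct and follows essentially the same route as the paper's proof: the action on objects via $(m\cdot d,\, m\cdot c,\, \phi^\Gamma_{m,c}\comp(m\cdot f))$, componentwise action on morphisms verified by bifunctoriality and naturality of $\phi^\Gamma$, and componentwise coherence isomorphisms. In fact you supply more detail than the paper, which leaves the comma-square checks for $\eta$ and $\mu$ (unit triangle, resp.\ associativity coherence of $\phi^\Gamma$ plus naturality of the unitor/associator of the $\D$-action) as ``straightforward.''
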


Proposition \ref{prop:simple gluing} can be generalized in terms of fibration.

\begin{proposition}
  \label{prop:lax equivariant and opfibration}
  Let $\B, \C, \E$ be $\M$-actegories and $\functor{\Gamma}{\C}{\B}$ a lax equivariant functor.
  In addition, let $\functor{p}{\E}{\B}$ be a strict equivariant functor which is also an opfibration, and suppose the condition $(*)$ holds.
  \begin{itemize}
    \item[$(*)$]
    For any object $m$ in $\M$, the functor $\functor{m\cdot (-)}{\E}{\E}$ preserves opcartesian morphisms.
  \end{itemize}
  Consider the pullback diagram in $\Cat$ below.
  \begin{center}
    \begin{tikzpicture}
      \matrix[row sep=.5cm, column sep=1cm]{
        \node (G) [label=below right:$\lrcorner$] {$\G$}; & \node (E) {$\E$};\\
        \node (C) {$\C$}; & \node (B) {$\B$};\\
      };
      \draw 
        (G) edge (E)
        edge node [left] {$q$} (C)
        (E) edge node [right] {$p$} (B)
        (C) edge node [below] {$\Gamma$} (B);
    \end{tikzpicture}
  \end{center}
  In this diagram, $\G$ has an $\M$-action and the functor $\functor{q}{\G}{\C}$ is strict equivariant.
\end{proposition}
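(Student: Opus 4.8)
The plan is to build the $\M$-action on the pullback $\G=\C\times_\B\E$ objectwise, using the opcartesian lifts supplied by $p$ to transport the $\E$-action across the lax structure map of $\Gamma$. Recall that an object of $\G$ is a pair $(c,e)$ with $c\in\C$, $e\in\E$ and $\Gamma c=pe$, and a morphism $(c,e)\to(c',e')$ is a pair $(f,g)$ with $\Gamma f=pg$. Since $p$ is strict equivariant, $p(m\cdot e)=m\cdot pe=m\cdot\Gamma c$, while the lax structure map of $\Gamma$ gives $\morphism{\phi^\Gamma_{m,c}}{m\cdot\Gamma c}{\Gamma(m\cdot c)}$. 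As $p$ is an opfibration I take the opcartesian lift of $\phi^\Gamma_{m,c}$ at $m\cdot e$, with codomain $(\phi^\Gamma_{m,c})_!(m\cdot e)$ lying over $\Gamma(m\cdot c)$, and set $m\cdot(c,e)\defeq\bigl(m\cdot c,\ (\phi^\Gamma_{m,c})_!(m\cdot e)\bigr)$, an object of $\G$. The action on a morphism $(f,g)$ has first component $m\cdot f$; its second component is the unique map over $\Gamma(m\cdot f)$ obtained by applying the universal property of $\overline{\phi^\Gamma_{m,c}}$ to $\overline{\phi^\Gamma_{m,c'}}\comp(m\cdot g)$, which lies over $\Gamma(m\cdot f)\comp\phi^\Gamma_{m,c}$ by naturality of $\phi^\Gamma$. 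Functoriality of $\functor{(-)\cdot(-)}{\M\times\G}{\G}$ follows from the uniqueness clause. The functor $q$ is the first projection, so $q(m\cdot(c,e))=m\cdot c=m\cdot q(c,e)$ holds on the nose and its coherence maps are identities; hence $q$ will be strict equivariant.

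It remains to supply the structure isomorphisms $\eta^\G,\mu^\G$ and check the actegory coherences. In each case the first component is taken from $\C$ (namely $\eta^\C_c$, resp.\ $\mu^\C_{m_1,m_2,c}$) and the second is the unique comparison produced by a universal property. For $\eta^\G$ I factor $\eta^\E_e$, which lies over $\eta^\B_{\Gamma c}=\Gamma\eta^\C_c\comp\phi^\Gamma_{I,c}$ by the unit coherence of $\Gamma$ and strictness of $p$, through $\overline{\phi^\Gamma_{I,c}}$; here $\phi^\Gamma_{I,c}=(\Gamma\eta^\C_c)^{-1}\comp\eta^\B_{\Gamma c}$ is an isomorphism by the unit triangle, so its opcartesian lift is an isomorphism and the induced comparison is too. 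For $\mu^\G$, writing $e_2$ for $(\phi^\Gamma_{m_2,c})_!(m_2\cdot e)$, I consider the $\E$-morphism
\[
  (m_1\tensor m_2)\cdot e \xrightarrow{\ \mu^\E\ } m_1\cdot(m_2\cdot e) \xrightarrow{\ m_1\cdot\overline{\phi^\Gamma_{m_2,c}}\ } m_1\cdot e_2 \xrightarrow{\ \overline{\phi^\Gamma_{m_1,m_2\cdot c}}\ } (\phi^\Gamma_{m_1,m_2\cdot c})_!(m_1\cdot e_2),
\]
which by the $\mu$-coherence diagram of $\Gamma$ lies over $\Gamma\mu^\C_{m_1,m_2,c}\comp\phi^\Gamma_{m_1\tensor m_2,c}$; factoring through $\overline{\phi^\Gamma_{m_1\tensor m_2,c}}$ yields the second component of $\mu^\G$, over $\Gamma\mu^\C_{m_1,m_2,c}$.

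The crux, and the step where hypothesis $(*)$ is indispensable, is showing this $\mu$-comparison is an isomorphism. By $(*)$ the functor $m_1\cdot(-)$ carries the opcartesian map $\overline{\phi^\Gamma_{m_2,c}}$ to an opcartesian map, so the displayed composite is a composite of opcartesian morphisms (the two lifts are opcartesian and $\mu^\E$ is an isomorphism, hence opcartesian) and is therefore opcartesian. Since $\overline{\phi^\Gamma_{m_1\tensor m_2,c}}$ is opcartesian and the whole composite is opcartesian, the cancellation property of opcartesian morphisms forces the induced comparison to be opcartesian; being opcartesian over the isomorphism $\Gamma\mu^\C_{m_1,m_2,c}$, it is an isomorphism. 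Naturality of $\eta^\G,\mu^\G$ and the pentagon and triangle coherences then reduce, on first components, to the corresponding identities in $\C$ and, on second components, to equalities of two maps that are each the unique factorization of one and the same opcartesian composite through a common opcartesian lift, so uniqueness makes them agree. I expect this last bookkeeping to be routine; the genuine content is concentrated in the opcartesianness argument for $\mu^\G$, which is precisely what $(*)$ is designed to enable. (Taking $\E=\arrowcat\D$ and $p=\cod$ recovers Proposition~\ref{prop:simple gluing}.)
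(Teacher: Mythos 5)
Your construction is correct and coincides with the paper's own proof: the action $m\cdot(c,e)\defeq(m\cdot c,(\phi^\Gamma_{m,c})_!(m\cdot e))$ via opcartesian lifts, the definition of morphism actions, $\eta^\G$, and $\mu^\G$ by unique factorization through $\overline{\phi^\Gamma}$, and the reduction of the coherences to uniqueness are all exactly the paper's argument. Your explicit justification that $\mu^\G$ is an isomorphism --- composite of opcartesian maps (using $(*)$ for $m_1\cdot\overline{\phi^\Gamma_{m_2,c}}$), left cancellation, and the fact that an opcartesian morphism over the isomorphism $\Gamma\mu^\C$ is an isomorphism --- is in fact more detailed than the paper's terse ``by several properties of opcartesian morphisms,'' and correctly isolates $\mu^\G$ (rather than $\eta^\G$) as the place where $(*)$ is needed.
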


\begin{proof}
  The $\M$-action on $\G$ is defined using the universality of opcartesian lifting of $\morphism{\phi^{\Gamma}}{m\cdot\Gamma C}{\Gamma (m\cdot C)}$.
  Notice $m\cdot \Gamma C = m\cdot pX=p(m\cdot X)$ for any $C\in \C$ and $X\in \E$ satisfying $\Gamma C=pX$.
  See Appendix for the detailed proof.
\end{proof}

Notice that Proposition \ref{prop:simple gluing} is just an instance of Proposition \ref{prop:lax equivariant and opfibration} when $p$ is the codomain opfibration $\functor{\cod}{\arrowcat{\D}}{\D}$.




\begin{remark}
  \label{rmk:lax equivariant and fibration}
  The ``opfibration'' in the statement of Proposition \ref{prop:lax equivariant and opfibration} cannot be simply replaced by ``fibration'' because the coherent natural transformation $\phi$ for $\Gamma$ is the form of $\morphism{\phi_{m, C}}{m\cdot \Gamma C}{\Gamma(m\cdot C)}$ and the cartesian lifting of $\phi_{m, C}$ cannot be considered in contrast to the opcartesian lifting.
\end{remark}

Although a pullback along a lax equivariant functor does not inherit the action as stated in Remark \ref{rmk:lax equivariant and fibration}, if we restrict $\Gamma$ to a strong equivariant one, we can get a similar proposition to Proposition \ref{prop:equivariant and fibration}.
Moreover, the condition $(*)$ can be dropped.

\begin{proposition}
  \label{prop:equivariant and fibration}
  Let $\B, \C, \E$ be $\M$-actegories and $\functor{\Gamma}{\C}{\B}$ a strong equivariant functor.
  In addition, let $\functor{p}{\E}{\B}$ be a fibration that is strict equivariant.
  Consider the diagram in Proposition \ref{prop:lax equivariant and opfibration}.
  In the diagram, $\G$ has an $\M$-action and the functor $\functor{q}{\G}{\C}$ is strict equivariant.
\end{proposition}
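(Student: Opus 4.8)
The plan is to mimic the proof of Proposition~\ref{prop:lax equivariant and opfibration}, but to exploit that $\Gamma$ is now \emph{strong}, so that each component $\phi^{\Gamma}_{m,C}\colon m\cdot\Gamma C\to\Gamma(m\cdot C)$ is an isomorphism. This lets us use the \emph{cartesian} liftings supplied by the fibration $p$ (in place of the opcartesian liftings used before), and it is precisely invertibility of $\phi^{\Gamma}$ that will let us drop condition $(*)$. Recall that an object of the pullback $\G$ is a pair $(C,X)$ with $C\in\C$, $X\in\E$ and $\Gamma C=pX$, and a morphism $(C,X)\to(C',X')$ is a pair $(f,g)$ with $\Gamma f=pg$. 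To define the action on objects, fix $m\in\M$ and $(C,X)\in\G$. By strict equivariance of $p$ we have $p(m\cdot X)=m\cdot pX=m\cdot\Gamma C$, so, writing $\psi\defeq\inverse{(\phi^{\Gamma}_{m,C})}$, the arrow $\psi\colon\Gamma(m\cdot C)\to m\cdot\Gamma C=p(m\cdot X)$ is a morphism in $\B$ into $p(m\cdot X)$. Taking its cartesian lifting along $p$ produces an object $\cleavage{\psi}(m\cdot X)\in\E$ lying over $\Gamma(m\cdot C)$ together with a cartesian morphism $\chi_{(C,X)}\colon\cleavage{\psi}(m\cdot X)\to m\cdot X$, and I set $m\cdot(C,X)\defeq(m\cdot C,\ \cleavage{\psi}(m\cdot X))$, which indeed lies in $\G$ since its two $p$/$\Gamma$-images agree.

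Next I would extend the action to morphisms using the universal property of these liftings. Given $(f,g)\colon(C,X)\to(C',X')$ and writing $\psi'\defeq\inverse{(\phi^{\Gamma}_{m,C'})}$, the composite $(m\cdot g)\comp\chi_{(C,X)}\colon\cleavage{\psi}(m\cdot X)\to m\cdot X'$ lies, by strictness of $p$, over $(m\cdot\Gamma f)\comp\psi$; and naturality of $\phi^{\Gamma}$ in $C$ gives the identity $(m\cdot\Gamma f)\comp\psi=\psi'\comp\Gamma(m\cdot f)$, so this composite factors through the base $\psi'$ of the cartesian morphism $\chi_{(C',X')}$. The induced unique mediating map over $\Gamma(m\cdot f)$ supplies the $\E$-component of $m\cdot(f,g)$, and since its $p$-image is $\Gamma(m\cdot f)=\Gamma(q\text{-component})$, the pair $(m\cdot f,\,\text{that map})$ is a morphism of $\G$. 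Functoriality in $(C,X)$, and bifunctoriality in $m$ (the action of a morphism $m\to m'$ of $\M$ is handled identically, using naturality of $\phi^{\Gamma}$ in $m$), both follow from the uniqueness clause of the cartesian universal property. The unitor $\eta$ and associator $\mu$ for $\G$ are then built with $\C$-components $\eta^{\C}$, $\mu^{\C}$ and with $\E$-components induced between the relevant cartesian liftings; crucially, because each $\phi^{\Gamma}_{m,C}$ is invertible, every such cartesian lifting is an \emph{isomorphism} in $\E$, so the induced $\E$-components are automatically invertible.

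That $\functor{q}{\G}{\C}$ is strict equivariant is then immediate, since by construction the $\C$-component of $m\cdot(C,X)$ is exactly $m\cdot C$, whence $q(m\cdot(C,X))=m\cdot q(C,X)$ on the nose (and likewise on morphisms), so $\phi^{q}$ is the identity. I expect the main obstacle to be the verification of the coherence pentagon and triangle for $\mu$ and $\eta$ on $\G$: one must check that the $\E$-components defined via distinct cartesian liftings agree along each coherence diagram. This reduces, by cartesianness, to checking equality of the two sides after applying $p$, which in turn follows from the coherence of the actions on $\C$ and $\E$ together with the coherence of $\phi^{\Gamma}$. The payoff of strongness of $\Gamma$ is exactly here: because $\psi=\inverse{(\phi^{\Gamma}_{m,C})}$ is an isomorphism, its cartesian lifting is an isomorphism in $\E$, so the liftings compose correctly without any preservation hypothesis on $m\cdot(-)$; this is why the condition $(*)$ of Proposition~\ref{prop:lax equivariant and opfibration} is no longer needed, the remaining equalities following from uniqueness of cartesian factorizations.
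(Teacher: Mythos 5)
Your proposal is correct and follows essentially the same route as the paper: the action is defined by taking the cartesian lifting of $\inverse{(\phi^{\Gamma}_{m,C})}\colon\Gamma(m\cdot C)\to m\cdot\Gamma C=p(m\cdot X)$, with strict equivariance of $q$ holding on the nose, and condition $(*)$ dropped because cartesian liftings of the invertible $\phi^{\Gamma}_{m,C}$ are isomorphisms, hence remain (cartesian) isomorphisms under $m'\cdot(-)$. Your write-up in fact spells out the morphism-level construction and coherence checks in more detail than the paper, which only presents the object-level definition and defers the rest to the analogous opfibration argument.
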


Note that, when $p$ is a bifibration in the situation of Proposition \ref{prop:equivariant and fibration}, there are two ways to define an $\M$-action on $\G$ by Proposition \ref{prop:lax equivariant and opfibration} and \ref{prop:equivariant and fibration}.
These coincide in the sense that they are isomorphic in $\Act{\M}$.
We can also consider subgluing.

\begin{proposition}
  \label{prop:subgluing for lef and ef}
  Consider the assumption of Proposition \ref{prop:simple gluing}.
  Assume moreover that functors $v\cdot (-)$ preserve monos for all $v \in \V$.
  If either of the following holds, $\subglue{\D}{\Gamma}$ has an $\M$-action and $\functor{\pi}{\subglue{\D}{\Gamma}}{\C}$ is strict equivariant.
  \begin{enumerate}
    \item \label{item:subgluing EM factor} $\D$ admits epi-mono factorization.
    \item \label{item:subgluing monic} $\phi^\Gamma$ is (componentwise) monic.
  \end{enumerate}
\end{proposition}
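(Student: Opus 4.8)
The plan is to handle the two cases by reducing each to machinery already in place, starting in both from the $\M$-action on $\glue{\D}{\Gamma}$ furnished by Proposition \ref{prop:simple gluing}, under which $m\cdot(D,C,f)=(m\cdot D,\,m\cdot C,\,\phi^\Gamma_{m,C}\comp(m\cdot f))$, and recalling that $\subglue{\D}{\Gamma}$ is the full subcategory of $\glue{\D}{\Gamma}$ on the objects whose structure map is monic, equivalently the pullback of $\functor{\Gamma}{\C}{\D}$ along $\functor{\Sub}{\Sub(\D)}{\D}$. In both cases the strict equivariance of $\pi$ will come out automatically, since $\pi$ returns only the $\C$-component and every construction below leaves that component equal to $m\cdot C$.

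For case \ref{item:subgluing monic} I would simply restrict the action on $\glue{\D}{\Gamma}$. When $f$ is monic, $m\cdot f$ is monic by the standing hypothesis that the action functors preserve monos, and $\phi^\Gamma_{m,C}$ is monic by assumption, so the composite $\phi^\Gamma_{m,C}\comp(m\cdot f)$ is monic; hence $\subglue{\D}{\Gamma}$ is closed under the action. As it is a full subcategory, the unit and associativity isomorphisms $\eta,\mu$ and all their coherence diagrams restrict verbatim, and $\pi$ restricts to a strict equivariant functor. Here the alternative fibrational route through $\Sub$ as a fibration is unavailable, because that would require $\Gamma$ to be strong as in Proposition \ref{prop:equivariant and fibration}, whereas $\Gamma$ is only lax; so a direct restriction is the natural approach. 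This case is essentially immediate.

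For case \ref{item:subgluing EM factor} I would instead invoke the fibrational presentation and read the statement as an instance of Proposition \ref{prop:lax equivariant and opfibration} with $\B\defeq\D$, $\E\defeq\Sub(\D)$, $p\defeq\Sub$, and the given lax equivariant $\Gamma$. This requires three ingredients. First, an $\M$-action on $\Sub(\D)$ making $\Sub$ strict equivariant: I set $m\cdot(S\hookrightarrow X)\defeq(m\cdot S\hookrightarrow m\cdot X)$, which is well defined precisely because $m\cdot(-)$ preserves monos, and $\Sub$ is then strict equivariant by construction. Second, that $\Sub$ is an opfibration: epi–mono factorization supplies the opcartesian lift of $\morphism{g}{X}{X'}$ at a subobject $S$ as the image pushforward $\exists_g S=\mathrm{im}\bigl(g\comp(S\hookrightarrow X)\bigr)\hookrightarrow X'$. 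Third, condition $(*)$, discussed below.

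The hard part will be condition $(*)$: that each $m\cdot(-)$ on $\Sub(\D)$ preserves opcartesian morphisms, i.e.\ that the action commutes with image pushforwards, $m\cdot\exists_g S\cong\exists_{m\cdot g}(m\cdot S)$ inside $m\cdot X'$. Writing $g\comp(S\hookrightarrow X)=\iota\comp e$ with $\iota$ monic and $e$ epic, one has $(m\cdot g)\comp(m\cdot S\hookrightarrow m\cdot X)=(m\cdot\iota)\comp(m\cdot e)$, and $m\cdot\iota$ is again monic by the standing hypothesis; the crux is then to recognise $m\cdot\iota$ as the image of this composite, which turns on the behaviour of $m\cdot(-)$ on the epic factor $e$ and on the factorization structure of $\D$. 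Once $(*)$ is established, Proposition \ref{prop:lax equivariant and opfibration} delivers the $\M$-action on $\subglue{\D}{\Gamma}=\G$ together with the strict equivariance of $\pi=q$, and no further coherence verification is needed.
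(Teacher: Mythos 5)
Your overall route is the same as the paper's: Condition \ref{item:subgluing monic} is handled by restricting the gluing action of Proposition \ref{prop:simple gluing} (the paper phrases this as defining the action ``by extension with $\phi^\Gamma$''), and Condition \ref{item:subgluing EM factor} by viewing $\subglue{\D}{\Gamma}$ as the pullback of $\Gamma$ along the subobject opfibration $\functor{}{\Sub(\D)}{\D}$ and invoking Proposition \ref{prop:lax equivariant and opfibration}. Your treatment of Condition \ref{item:subgluing monic} is complete and correct, and your side remark that the fibration route of Proposition \ref{prop:equivariant and fibration} is unavailable here (because $\Gamma$ is only lax) is accurate.

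The problem is Condition \ref{item:subgluing EM factor}: you reduce it to condition $(*)$ of Proposition \ref{prop:lax equivariant and opfibration} and then stop, writing ``once $(*)$ is established.'' That is the essential step, not a routine verification. Concretely, $(*)$ for the subobject opfibration says that $m\cdot(-)$ commutes with image pushforwards, i.e.\ $m\cdot\mathrm{im}(g\comp\iota_S)\cong\mathrm{im}\bigl(m\cdot(g\comp\iota_S)\bigr)$, and this is exactly what makes the associativity isomorphism $\mu$ of the glued action invertible: writing $\iota_T\comp e$ for the image factorization of $\phi_{m',C}\comp(m'\cdot\iota_S)$, the subobject part of $(m\tensor m')\cdot(C,S)$ transported along $\Gamma\mu_{m,m',C}$ works out to $\mathrm{im}\bigl(\phi_{m,m'\cdot C}\comp(m\cdot\iota_T)\comp(m\cdot e)\bigr)$, while that of $m\cdot(m'\cdot(C,S))$ is $\mathrm{im}\bigl(\phi_{m,m'\cdot C}\comp(m\cdot\iota_T)\bigr)$. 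These agree if $m\cdot e$ is again epic, but the standing hypotheses only give preservation of monos, and a functor can preserve monos while destroying epis (e.g.\ a covariant hom-functor on abelian groups), in which case $\mathrm{im}(m\cdot f)$ can be a proper subobject of $m\cdot\mathrm{im}(f)$. So your argument, as written, needs either an additional hypothesis (that the functors $v\cdot(-)$ preserve the epi parts of the chosen factorizations) or an argument special to the maps $\phi^\Gamma$, neither of which you supply. It is only fair to note that the paper's own one-line proof of this case cites ``the preceding propositions'' without checking $(*)$ either, so you have isolated a point the paper glosses over; but as a proof attempt, yours is incomplete at exactly that point.
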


\begin{proof}
  Condition \ref{item:subgluing EM factor} makes the subobject functor $\functor{}{\sub{\D}}{\D}$ an opfibration, so $\subglue{\D}{\Gamma}$ inherits the action by the preceding propositions.
  Under Condition \ref{item:subgluing monic} an action is directly defined by extension with $\phi^\Gamma$.
\end{proof}

\begin{proposition}
  \label{prop:model by morphism of model}
  A morphism $(\functor{H}{\C}{\D}, \morphism{\theta}{F'}{HF})$ of models of \sec{} induces another model of \sec{}, i.e.\ a strong equivariant functor $\functor{L}{\V}{(\glue{\D}{H})}$ by the following construction.
  \begin{align*}
    Lv &\coloneqq (F'v, Fv, \morphism{\theta_{v}}{F'v}{HFv})\\
    L(\morphism{a}{v}{u}) &\coloneqq \morphism{(F'a, Fa)}{\theta_{v}}{\theta_{u}}
  \end{align*}
\end{proposition}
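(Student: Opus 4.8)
The plan is to check that the stated assignment is a well-defined functor and then to equip it with a strong equivariant structure assembled componentwise from those of $F$ and $F'$. First I would confirm that $L$ lands in $\glue{\D}{H}$ and is functorial. On objects, $\theta_v \colon F'v \to HFv$ is exactly the datum of a morphism $F'v \to H(Fv)$, so $Lv = (F'v, Fv, \theta_v)$ is a genuine object of the comma category $\glue{\D}{H}$. On a morphism $\morphism{a}{v}{u}$, the pair $(F'a, Fa)$ is a morphism $Lv \to Lu$ precisely when the square $\theta_u \comp F'a = HFa \comp \theta_v$ commutes, and this is nothing but the naturality square of $\theta \colon F' \to HF$. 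Preservation of identities and composites is then immediate, since composition and identities in $\glue{\D}{H}$ are computed in each component and $F, F'$ are functors.

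Next I would recall the $\V$-action on $\glue{\D}{H}$ produced by Proposition \ref{prop:simple gluing}: it sends $v$ and $(d, c, \morphism{f}{d}{Hc})$ to $(v\cdot d, v\cdot c, \phi^H_{v,c} \comp (v\cdot f))$, the third leg being obtained by post-composing the transported map with the lax coherence $\phi^H$ of $H$. With this in hand I would define the coherence of $L$ componentwise by $\phi^L_{v,w} \defeq (\phi^{F'}_{v,w}, \phi^F_{v,w})$, where $\phi^F$ and $\phi^{F'}$ are the strong coherence isomorphisms of $F$ and $F'$. To see that this pair really is a morphism $v\cdot Lw \to L(v\cdot w)$ in $\glue{\D}{H}$, I must verify the comma-category compatibility square, which unwinds to
\[
  \theta_{v\cdot w} \comp \phi^{F'}_{v,w} = H\phi^F_{v,w} \comp \phi^H_{v,Fw} \comp (v\cdot \theta_w).
\]

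The composite $H\phi^F_{v,w} \comp \phi^H_{v,Fw}$ on the right is exactly the coherence $\phi^{HF}_{v,w}$ of the composite lax equivariant functor $HF$, so this equation is precisely the equivariance axiom of $\theta \colon F' \to HF$ from Definition \ref{def:morphism of lef}. This identification is the crux of the argument and the step I expect to require the most care, since it hinges on correctly expanding the coherence of the composite $HF$ as $H\phi^F \comp \phi^H$ and matching the resulting leg against the action on the third component of a glued object. Everything else is bookkeeping.

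Finally, because $\phi^{F'}_{v,w}$ and $\phi^F_{v,w}$ are isomorphisms ($F, F'$ being strong) and an isomorphism in a comma category is exactly a pair of componentwise isomorphisms, $\phi^L_{v,w}$ is invertible, so $L$ is strong. The naturality of $\phi^L$ and the two coherence diagrams for an equivariant functor are equalities of morphisms in $\glue{\D}{H}$; since morphisms and the action there are computed componentwise and the two projections are jointly faithful (and strict equivariant), these equalities reduce to the corresponding diagrams for $\phi^F$ in $\C$ and $\phi^{F'}$ in $\D$, which hold because $F$ and $F'$ are strong equivariant. This exhibits $\functor{L}{\V}{\glue{\D}{H}}$ as a strong $\V$-equivariant functor, i.e.\ a model of \sec{}, as claimed.
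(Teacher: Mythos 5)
Your proof is correct and follows essentially the same route as the paper's: define $L$ componentwise, use naturality of $\theta$ for morphisms, set $\phi^L \defeq (\phi^{F'},\phi^F)$, and reduce the gluing-square condition to the equivariance of $\theta$ from Definition \ref{def:morphism of lef}, with invertibility and coherence checked componentwise. The only difference is one of detail: you explicitly expand the composite coherence as $\phi^{HF}_{v,w} = H\phi^F_{v,w}\comp\phi^H_{v,Fw}$, a step the paper leaves implicit behind its citation of Definition \ref{def:morphism of models}.
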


In fact, $L$ in Proposition \ref{prop:model by morphism of model} can also be described more conceptually from the point of view of the codomain fibration. 
Assume the situation in Proposition \ref{prop:model by morphism of model}.
We define a functor $\functor{G}{\V}{\arrowcat{\D}}$ by $Gv \defeq (\morphism{\theta_{v}}{F'v}{HFv})$.
This is well-defined by the naturality of $\theta$ and fanctoriality of $F$ and $F'$.
By definition of $G$, $\cod \comp G=HF$ holds where $\cod$ is the codomain functor $\functor{}{\arrowcat{\D}}{\D}$.
Since $\glue{\D}{H}$ is a pullback of $\cod$ along $H$, we obtain by universality a mediating functor $\functor{L}{\V}{(\glue{\D}{H})}$, which indeed coincides with $L$ in Proposition \ref{prop:model by morphism of model}.

\begin{center}
  \begin{tikzpicture}
    \matrix[row sep=.6cm, column sep=.8cm]{
      \node (v) {$\V$};\\
      & \node (g) [label=below right:$\lrcorner$] {$\glue{\D}{H}$}; & \node (arrow c') {$\arrowcat{\D}$};\\
      & \node (c) {$\C$}; & \node (c') {$\D$};\\
    };
    \draw 
      (v) edge [dashed] node [below, near start] {$L$} (g)
          edge [bend right=30] node [left] {$F$} (c)
          edge [bend left=10] node [below] {$G$} (arrow c')
      (g) edge node [right] {$\pi$} (c)
          edge node [below] {$\pi'$} (arrow c')
      (arrow c') edge node [left] {$\cod$} (c')
      (c) edge node [below] {$H$} (c');
    \draw [->]
      (v) .. node [above right, near end] {$F'$} controls +(2, 0) and +(2.5, 2.8)  .. (c');
    \node at (1.5, .7) [inner sep=0mm, rotate=30, label=$\theta$] {$\Leftarrow$};
  \end{tikzpicture}
\end{center}

Combining Proposition \ref{prop:subgluing for lef and ef} and \ref{prop:model by morphism of model}, we obtain the subgluing version of Proposition \ref{prop:model by morphism of model}.

\begin{proposition}
  \label{prop:model by morphism of model for subgluing}
  Let $(\functor{H}{\C}{\D}, \morphism{\theta}{F'}{HF})$ be a morphism of models of \sec{} such that $\theta$ and $\phi^H$ are (componentwise) monic and the functors $v\cdot (-)$ preserve monos for all $v \in \V$.
  Then we obtain a $\V$-actegory $\subglue{\D}{H}$ by Proposition \ref{prop:subgluing for lef and ef}.
  Moreover, we can get a model of \sec{} i.e.\ a strong equivariant functor $\functor{L}{\V}{\subglue{\D}{H}}$.
\end{proposition}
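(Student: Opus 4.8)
The plan is to assemble the statement from two facts already established: the subgluing action of Proposition~\ref{prop:subgluing for lef and ef} and the mediating functor $L$ of Proposition~\ref{prop:model by morphism of model}. The three monic hypotheses are exactly what is needed to make the latter corestrict along the former.

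First I would instantiate Proposition~\ref{prop:subgluing for lef and ef} with $\Gamma = H$, which is legitimate since $H$ is lax equivariant. The assumption that $\phi^H$ is componentwise monic is Condition~\ref{item:subgluing monic}, and together with the hypothesis that each $v \cdot (-)$ preserves monos this supplies a $\V$-action on $\subglue{\D}{H}$ for which $\functor{\pi}{\subglue{\D}{H}}{\C}$ is strict equivariant. Crucially, this action is the restriction of the gluing action: on an object $(D, C, \morphism{f}{D}{HC})$ with $f$ monic it reads $v \cdot (D,C,f) = (v\cdot D,\, v\cdot C,\, \phi^H_{v,C}\comp(v\cdot f))$, and the composite $\phi^H_{v,C}\comp(v\cdot f)$ is again monic precisely because $v\cdot(-)$ preserves monos and $\phi^H$ is monic.

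Next I would take the functor $\functor{L}{\V}{\glue{\D}{H}}$ of Proposition~\ref{prop:model by morphism of model}, which sends $v$ to $(F'v, Fv, \theta_v)$, and observe that it corestricts to $\subglue{\D}{H}$. This is where the hypothesis that $\theta$ is componentwise monic enters: each $\theta_v$ is a subobject, so every $Lv$ lies in the full subcategory $\subglue{\D}{H}$; since that subcategory is full, $L$ remains functorial after corestriction. It then remains to verify that the corestricted $L$ is still \emph{strong} equivariant. Its strength, viewed in $\glue{\D}{H}$, is the pair $(\phi^{F'}_{v,u}, \phi^F_{v,u})$, an isomorphism because $F$ and $F'$ are strong. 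I would check that its source $v\cdot Lu$ lies in $\subglue{\D}{H}$ as well --- its structure map is $\phi^H_{v,Fu}\comp(v\cdot\theta_u)$, monic by the same argument as above --- so that both endpoints of the strength sit inside the subcategory. As $\subglue{\D}{H}\hookrightarrow\glue{\D}{H}$ is full and faithful, the strength and all its coherence diagrams transport verbatim from the gluing to the subgluing, whence $\functor{L}{\V}{\subglue{\D}{H}}$ is a strong equivariant functor, i.e.\ a model of \sec.

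The main obstacle --- in truth the only nonformal point --- is the bookkeeping that the subgluing action of Proposition~\ref{prop:subgluing for lef and ef} genuinely coincides with the restriction of the gluing action, so that the strength of $L$ computed in $\glue{\D}{H}$ serves simultaneously as a strength in $\subglue{\D}{H}$. Once this is confirmed, everything follows from the three monic hypotheses conspiring to keep both the objects $Lv$ and the objects $v\cdot Lv$ inside the subgluing.
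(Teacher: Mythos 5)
Your proposal is correct and follows exactly the paper's route: the paper's proof is the one-line remark that Condition~\ref{item:subgluing monic} of Proposition~\ref{prop:subgluing for lef and ef} applies, combined (as stated in the surrounding text) with the functor $L$ of Proposition~\ref{prop:model by morphism of model}. You have simply spelled out the details the paper leaves implicit --- that monicity of $\theta$ puts each $Lv$ in the subgluing, that monicity of $\phi^H$ together with $v\cdot(-)$ preserving monos keeps the action inside it, and that fullness transports the strength and its coherence --- all of which are the intended verifications.
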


\begin{proof}
  We use Condition \ref{item:subgluing monic} from Proposition \ref{prop:subgluing for lef and ef}.
\end{proof}

\subsection{Flavor of logical predicates}

As a toy example of the tools developed up to here, we derive logical predicates for \sec{}.
We fix a theory $\T$ of \sec.

Let $\functor{F}{\V}{\C}$ be the term model of $\T$ (see Theorem \ref{thm:theory to ef}).
For some object $(\context{\Gamma}{\emptycontext})$ in $\C$, consider the functor $\functor{H=\C((\context{\Gamma}{\emptycontext}), -)}{\C}{\Sets}$.
By the definition of the term model, this functor can be made a morphism of models as follows.
To make $\Sets$ a $\V$-actegory, we define its action by the functor $\functor{\V(\Gamma, -)\times\id_{\Sets}}{\V\times\Sets}{\Sets}$.
This bifunctor is indeed a $\V$-action on $\Sets$ by the universality of products.
A model of \sec{} over $\Sets$ (i.e.\ a strong equivariant functor $\functor{}{\V}{\Sets}$) is given by $\V(\Gamma, -)$.
Again, it is easy to see that this is indeed a model.
Finally, by giving $\theta_{\tau}(\classof{\Gamma\vdashv\type{M}{\tau}}) = \classof{\Gamma\vdashv\type{M}{\tau}}$, $H$ becomes a morphism of models from $F$ to $\V(\Gamma, -)$.
In this situation, we obtain the subgluing model $\functor{L}{\V}{\subglue{\Sets}{H}}$ by Proposition \ref{prop:model by morphism of model for subgluing}.
Let us give an explicit definition of $L$.
For an object $\tau$ in $\V$, $L(\tau)$ is given by:
\[
  L(\tau) \defeq \V(\Gamma, \tau) \subseteq \C((\context{\Gamma}{\emptycontext}), (\context{\tau}{\emptycontext})).
\]
To see what the $\V$-action on $\subglue{\Sets}{H}$ does, take any objects $\tau \in \V$ and $P \in \subglue{\Sets}{H}$ where $P \subseteq \C((\context{\Gamma}{\emptycontext}),(\context{\tau'}{\Delta}))$ for some $(\context{\tau'}{\Delta}) \in \C$.
When $\Delta$ is not empty, say $\Delta=A$, $\tau \cdot P$ is given by
\[
  \biggl\{
    \left(\classof{\Gamma \vdashv \type{\pair{M}{M'}}{\tau \times \tau'}}, \classof{\context{\Gamma}{\emptycontext} \vdashc \type{N}{A}}\right)
    \biggmid
    (\classof{\Gamma\vdashv\type{M'}{\tau'}}, \classof{\context{\Gamma}{\emptycontext}\vdashc\type{N}{A}}) \in P
  \biggr\}\raisebox{-2ex}{.}
\]
When $\Delta$ is empty, $\tau \cdot P$ is defined in the same way by using only the first component.

To summarize, we obtain the definition of logical predicates for \sec{} as follows.
\begin{definition}
  Let $\Gamma$ be a value context.
  We define a set $I$ by
  \[
    I = \ob\C_{\T} \cup \{A\mid \text{$A$ is a computation type}\}.
  \]
  An $I$-indexed family of predicates $\{P_{i}\}_{i\in I}$ is a \emph{logical predicate for $\T$} if:
  \begin{itemize}
    \item 
      $P_b \subseteq \biggl\{
        \classof{\context{\Gamma}{\emptycontext} \vdashc \type{N}{b}}
      \biggr\}$

    \item
      $P_{\leftadj\tau}=P_{(\context{\tau}{\emptycontext})}=\biggl\{
        \classof{\Gamma \vdashv \type{M}{\tau}}
      \biggr\}$

    \item 
      $P_{(\context{\tau}{b})} = \biggl\{
        \left(\classof{\Gamma \vdashv \type{M}{\tau}}, \classof{\context{\Gamma}{\emptycontext} \vdashc \type{N}{A}}\right)
        \biggmid
        \classof{\context{\Gamma}{\emptycontext}\vdashc\type{N}{b}} \in P_b
      \biggr\}$

    \item 
      $P_{(\context{\tau}{\leftadj\tau'})} = \biggl\{
        \left(\classof{\Gamma \vdashv \type{M}{\tau}}, \classof{\Gamma \vdashv \type{M'}{\tau'}}\right)
        \biggmid
        \classof{\Gamma \vdashv \type{M'}{\tau'}} \in P_{\tau'}
      \biggr\}$

    \item
      A function $\fun{\interpret{g}}{P_{(\context{\prod\tau_i}{\Delta})}}{P_{A}}$
      for each function symbol $\morphism{g}{\vec\tau_i, \Delta}{A}$

  \end{itemize}
\end{definition}

Note that $P_{(\context{\tau}{A})}$ is equal to $\V(\Gamma,\tau) \times P_{A}$.

\begin{lemma}[fundamental lemma]
  Any logical predicate $\{P_{i}\}_{i\in I}$ for $\T$ straightforwardly defines a (set-theoretic) interpretation $\interpret{-}$.
  In particular, a derivable judgement $\context{\type{x}{\tau'}}{\Delta}\vdashc\type{N}{A}$ is interpreted as a function of the following form.
  \[
    \morphism{\interpret{\context{\type{x}{\tau'}}{\Delta}\vdashc\type{N}{A}}}{P_{(\context{\tau'}{\Delta})}}{P_{A}}
  \]
\end{lemma}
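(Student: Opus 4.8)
The plan is to observe that a logical predicate for $\T$ is precisely an interpretation of $\T$ in the \sec{}-model $\subglue{\Sets}{H}$, and then to invoke soundness (Theorem \ref{thm:model by ef}). The underlying model, the strong equivariant functor $\functor{L}{\V}{\subglue{\Sets}{H}}$, is the one produced by Proposition \ref{prop:model by morphism of model for subgluing} from the morphism of models $(H,\theta)$; its monicity hypotheses, that $\theta$ and $\phi^H$ are componentwise monic and that each $v\cdot(-)$ preserves monos, are verified directly from the term-model description: $\theta$ is a subset inclusion, products in $\Sets$ preserve injections, and $\phi^H$ is monic by inspection. The remaining interpretation data, namely an object of $\subglue{\Sets}{H}$ for each base computation type and a morphism for each function symbol, are supplied respectively by the predicates $P_b$ and by the maps $\interpret{g}$ (which by definition are restrictions of the syntactic function symbols to the predicates). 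Thus $\{P_i\}_{i\in I}$ is nothing but a choice of interpretation data for the fixed model $L$.

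The decisive point is that the clauses defining $\{P_i\}_{i\in I}$ coincide, type by type, with the interpretation of \sec{} types inside $\subglue{\Sets}{H}$ induced by $L$. Indeed $P_{\leftadj\tau}=P_{(\context{\tau}{\emptycontext})}$ is the full hom-set $\V(\Gamma,\tau)=L(\tau)$, which is exactly the value supplied by the equivariant functor of the model; the clause $P_{(\context{\tau}{A})}=\V(\Gamma,\tau)\times P_A$ is precisely the $\V$-action $\tau\cdot P_A$ computed just above the Definition; and $P_{(\context{\tau}{b})}$ unfolds the same product. Consequently the inductive interpretation of terms from Section \ref{sec:model}, carried out in $L$, assigns to a derivable judgement $\context{\type{x}{\tau'}}{\Delta}\vdashc\type{N}{A}$ a morphism of $\subglue{\Sets}{H}$ from the object carrying $(\context{\tau'}{\Delta})$ to the one carrying $A$. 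Because the structure maps of $\subglue{\Sets}{H}$ are subset inclusions, the syntactic second component $b$ of such a morphism forces its first component to be the restriction of $H(b)$; hence the projection $\functor{\pi}{\subglue{\Sets}{H}}{\C}$ is faithful, the underlying $\C$-morphism of $\interpret{N}$ is the syntactic term itself (so $H$ applied to it is substitution into $N$), and the first component is the desired function $\fun{\interpret{N}}{P_{(\context{\tau'}{\Delta})}}{P_A}$. Faithfulness of $\pi$ moreover transports validity of the axioms of $\T$ from the term model to $L$, so the hypothesis of Theorem \ref{thm:model by ef} is met and the interpretation is sound.

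The one step that is not purely formal, and which I expect to be the main obstacle, is checking the $\mathtt{let\,val}$ case, i.e.\ that the action formula for $\tau\cdot P$ stated before the Definition genuinely matches the composite $\leftadj\delta$, $\phi^{-1}$, $\id\cdot\interpret{N_1}$, $\phi$, $\interpret{N_2}$ used to interpret $\letvalin{x}{N_1}{N_2}$ in Section \ref{sec:model}. Unwinding both sides reduces to verifying that substituting, for $x$, the value witness extracted from $\interpret{N_1}\in P_{\leftadj\tau}$ into the witness produced by $\interpret{N_2}$ yields an element of $P_A$; all other cases are immediate, since value types carry the full predicate (so every value term is related) and the function-symbol case holds by the supplied data $\interpret{g}$. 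Should a self-contained argument be preferred, the same statement follows by a direct induction on the typing derivation, with $\mathtt{let\,val}$ as the sole nontrivial case.
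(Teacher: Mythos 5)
Your proposal is correct and follows essentially the same route as the paper, which states the lemma without a separate proof precisely because---as you observe---a logical predicate is, by construction, exactly the interpretation data for the subgluing model $L$ of Proposition \ref{prop:model by morphism of model for subgluing}, so that the interpretation of a derivable judgement is a morphism of $\subglue{\Sets}{H}$, i.e.\ a function between the predicates lying over a morphism of $\C$. The one imprecision is your parenthetical claim that the underlying $\C$-component of $\interpret{N}$ is ``the syntactic term itself'': by Remark \ref{rem:failure of completeness} the term model does not interpret terms by themselves (e.g.\ $\semantics{\val{M}}$ is the class of $M$, not of $\val{M}$), but this claim is not load-bearing for the lemma as stated.
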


As a corollary, it follows that for any closed computation term $\type{N}{A}$, $P_{A}(N)$ holds.

\begin{remark}
  \label{rem:logical predicates at leftadj}
  We remark that $P_{\leftadj \tau}$ is not a set of terms of type $\leftadj \tau$ for the same reason of Remark \ref{rem:failure of completeness}.
  Namely, requiring $P_{\leftadj \tau}$ to be a set of terms of type $\leftadj \tau$ does not give rise to a strong equivariant functor.
  Due to this fact, we were unable to derive interesting syntactic results using this logical predicate.
  There are two ways to fix this: changing syntax or semantics.
  We expect whichever direction is hopeful, though we leave further investigation for our future work.
\end{remark}

\subsection{Actegorical gluing and \texorpdfstring{$\top\top$}{TT}-lifting}
\label{sec:actegorical gluing and TT-lifting}

The (categorical) $\top\top$-lifting \cite{DBLP:conf/csl/Katsumata05} is a technique to derive lifting of strong monads along a fibration.
It was originally introduced as a categorical formulation of logical predicates for the metalanguage \cite{DBLP:conf/tlca/LindleyS05}.
The basic idea of the $\top\top$-lifting comes with the following lemma.

\begin{lemma}
  \label{lem:fibred monad cod fibration}
  For a fibration $p: \E \to \B$, the projection functor $\pi: \Mnd(p) \to \Mnd(\B)$ is also a fibration, where $\Mnd(p)$ is the category of fibred monads over $p$.
\end{lemma}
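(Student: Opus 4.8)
The plan is to verify the defining lifting property of $\pi$ directly, using that $\pi$ forgets the fibred lift, $\pi(T,\lifting{T}) = T$. Given an object $(T,\lifting{T})$ of $\Mnd(p)$ and a monad morphism $\sigma\colon S\to T$ in $\Mnd(\B)$, I will construct a fibred monad $(S,\lifting{S})$ together with a cartesian morphism $(S,\lifting{S})\to(T,\lifting{T})$ of $\Mnd(p)$ lying over $\sigma$. Here $\lifting{T}$ denotes the fibred lift of $T$, i.e.\ a cartesian endofunctor on $\E$ whose unit and multiplication lie strictly over those of $T$. The construction is nothing but \emph{reindexing $\lifting{T}$ along $\sigma$}. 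Fixing a cleavage of $p$, for each $X\in\E$ over $B\defeq pX$ I set $\lifting{S}X\defeq \sigma_B^*(\lifting{T}X)$, the reindexing of $\lifting{T}X$ along $\sigma_B\colon SB\to TB$, and write $k_X\colon\lifting{S}X\to\lifting{T}X$ for the corresponding cartesian morphism over $\sigma_B$.

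Next I would upgrade $\lifting{S}$ to a fibred monad. Functoriality of $\lifting{S}$ on a morphism $f\colon X\to Y$ of $\E$ is forced by the universal property of $k_Y$, via the equation $k_Y\comp\lifting{S}f=\lifting{T}f\comp k_X$; uniqueness of cartesian factorizations then yields $\lifting{S}(\id)=\id$ and preservation of composites. That $\lifting{S}$ is a fibred functor follows from the standard cancellation lemma: if $f$ is cartesian then $\lifting{T}f\comp k_X=k_Y\comp\lifting{S}f$ is a composite of cartesian arrows, hence cartesian, and since $k_Y$ is cartesian so is $\lifting{S}f$. The unit and multiplication of $\lifting{S}$ are again defined by universal property: using the monad-morphism laws $\sigma\comp\eta^S=\eta^T$ and $\sigma\comp\mu^S=\mu^T\comp(\sigma\star\sigma)$, the morphisms $\lifting{\eta^T}_X$ and $\lifting{\mu^T}_X\comp\lifting{T}(k_X)\comp k_{\lifting{S}X}$ lie over $\sigma_B\comp\eta^S_B$ and $\sigma_B\comp\mu^S_B$ respectively, and hence factor uniquely through $k_X$ to give $\lifting{\eta^S}_X$ over $\eta^S_B$ and $\lifting{\mu^S}_X$ over $\mu^S_B$ with $k$ mediating. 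The monad laws for $(\lifting{S},\lifting{\eta^S},\lifting{\mu^S})$, the verticality and naturality of $\lifting{\eta^S},\lifting{\mu^S}$, and the assertion that $k\colon(S,\lifting{S})\to(T,\lifting{T})$ is a morphism of $\Mnd(p)$ over $\sigma$ all reduce to the corresponding facts for $\lifting{T}$ and $\sigma$ together with the uniqueness half of the universal property of the $k_X$.

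Finally I would check that $k$ is cartesian over $\sigma$ in $\Mnd(p)$. Given another object $(R,\lifting{R})$, a monad morphism $\tau\colon R\to S$, and a morphism $(\sigma\comp\tau,\lifting{\rho})\colon(R,\lifting{R})\to(T,\lifting{T})$ over $\sigma\comp\tau$, each component $\lifting{\rho}_X$ lies over $\sigma_B\comp\tau_B$ and hence factors uniquely as $k_X\comp\lifting{\tau}_X$ with $\lifting{\tau}_X$ over $\tau_B$; the family $\lifting{\tau}$ is automatically a fibred monad morphism by uniqueness, giving the required mediating map $(\tau,\lifting{\tau})$ and establishing the cartesian property. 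The main obstacle is not any single step but the accumulated bookkeeping of the second paragraph: all the coherence conditions must be discharged, and while each is a one-line consequence of the uniqueness of factorizations through the cartesian arrows $k_X$, one must take care that the reindexed unit and multiplication really land over $\eta^S$ and $\mu^S$ — this is exactly where the hypothesis that $\sigma$ is a \emph{monad} morphism, and not merely a natural transformation, is essential.
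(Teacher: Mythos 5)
Your proof is correct, but there is nothing in the paper to compare it against: Lemma \ref{lem:fibred monad cod fibration} is stated as recalled background motivating the $\top\top$-lifting and is attributed to Katsumata \cite{DBLP:conf/csl/Katsumata05}; the paper's appendix contains no proof of it. Judged on its own, your argument is the standard and correct one: defining $\lifting{S}X \defeq \sigma_{pX}^{*}(\lifting{T}X)$ by pointwise reindexing, transporting functoriality, unit, multiplication, and all coherence through the uniqueness half of the universal property of the cartesian arrows $k_X$, and verifying fibredness of $\lifting{S}$ and cartesianness of $(\sigma,k)$ in $\Mnd(p)$ by the cancellation property of cartesian morphisms. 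You correctly isolate where the monad-morphism laws $\sigma\comp\eta^S=\eta^T$ and $\sigma\comp\mu^S=\mu^T\comp(\sigma\star\sigma)$ enter, namely in showing that the transported unit and multiplication lie over $\eta^S$ and $\mu^S$. One caveat worth adding: in its proof of Proposition \ref{prop:equivariant fibration from monoidal} the paper invokes the fact that the lifted multiplication $\tilde\mu$ is $p$-cartesian, which suggests its working definition of fibred monad may require the unit and multiplication of $\lifting{T}$ to be cartesian natural transformations, not merely to lie over those of $T$. If so, your construction still goes through, but you must additionally check that $\lifting{\mu^S}$ (and $\lifting{\eta^S}$) are cartesian; this follows by the same cancellation argument you already use, since $k_X\comp\lifting{\mu^S}_X=\lifting{\mu^T}_X\comp\lifting{T}(k_X)\comp k_{\lifting{S}X}$ is a composite of cartesian morphisms and $k_X$ is cartesian.
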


To accomodate the continuation monad $S^{S^{(-)}}$ on $\E$, however, the lemma is insufficient because $S^{S^{(-)}}$ is not fibred even if $p$ strictly preserves the CCC structure.
The crucial ingredient of the $\top\top$-lifting was that it generalized Lemma \ref{lem:fibred monad cod fibration} by replacing $\Mnd(p)$ with $\Mnd'(p)$ the category of not-necessarily-fibred monads over $p$.
This is further generalized to (non-fibred) strong monads.
Consequently, given a preorder bifibration $p: \E \to \B$ preserving the CCC structure, a strong monad $T$ over $\B$, and some objects $S, R$ such that $S = TR$, the $\top\top$-lifting constructs a strong monad over $\E$ by the cartesian lifting of the canonical $\sigma$:
\begin{center}
  \begin{tikzcd}
    T^{\top\top} \ar[r, "\bar{\sigma}"] & S^{S^{(-)}} & \Mnd_{\text{strong}}'(p) \ar  [d, "\pi"] \\
    T \ar[r, "\sigma"'] & TR^{TR^{(-)}} & \Mnd_{\text{strong}}(\B)
  \end{tikzcd}  
\end{center}

We relate the $\top\top$-lifting to actegorical gluing in the following sense.

\begin{proposition}
  \label{prop:equivariant fibration from monoidal}
  Let $(T, \tilde{T})$ be a fibred monad over a fibration $p: \E \to \B$.
  Its Kleisli resolution gives us another fibration $p_T$ and a pullback diagram in $\Cat$.
  \begin{center}
    \begin{tikzcd}
      \E \ar[r, "\tilde{J}"] \ar[d, "p"'] \ar[loop left, "\tilde{T}"] \ar[rd, phantom, very near start, "\lrcorner"] & \E_{\tilde{T}} \ar[d, "p_T"] \\
      \B \ar[r, "J"'] \ar[loop left, "T"] & \B_{T}
    \end{tikzcd}    
  \end{center}
  If $\E$ and $\B$ are monoidal and $p$ is strict monoidal, and if $T$ and $\tilde{T}$ are both strong monads such that $\tilde{T}$'s strength $\tilde{t}$ is above $T$'s strength $t$, then both $\B_T$ and $\E_{\tilde{T}}$ have an $\E$-action and $p_T$ strictly preserves it.
\end{proposition}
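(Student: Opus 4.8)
The plan is to produce the two $\E$-actions directly from the strong-monad structures and then to verify equivariance of $p_T$ by hand, the decisive hypotheses being that $p$ is strict monoidal and that $\tilde t$ lies above $t$. Note that the general transport results (Propositions~\ref{prop:lax equivariant and opfibration} and~\ref{prop:equivariant and fibration}) do not apply off the shelf here: they manufacture an action on the \emph{pullback corner}, which in the displayed square is $\E$, where an action is already present; what we instead need is to move actions onto $\B_T$ and $\E_{\tilde T}$. So I would argue directly, treating $p_T$ being a fibration as part of the given Kleisli-resolution data.

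First I would build the actions. Since $\tilde T$ is a strong monad on the monoidal category $\E$, Example~\ref{ex:2-category of actegories}(1) makes its Kleisli category $\E_{\tilde T}$ an $\E$-actegory: on objects $m\cdot X\defeq m\tensor X$, and a Kleisli map $\morphism{f}{X}{\tilde T Y}$ is sent to $m\cdot f\defeq \tilde t_{m,Y}\comp(\id\tensor f)$. Likewise $T$ strong on $\B$ makes $\B_T$ a $\B$-actegory, with $a\cdot b\defeq a\tensor b$ and $a\cdot g\defeq t_{a,b'}\comp(\id\tensor g)$ for $\morphism{g}{b}{Tb'}$. As $p$ is strict (hence strong) monoidal, the change-of-base construction of Example~\ref{ex:2-category of actegories}(2) turns the $\B$-actegory $\B_T$ into an $\E$-actegory with action $m\cdot b\defeq p(m)\tensor b$. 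This supplies both $\E$-actions, and all coherence is inherited from the cited constructions, so no pentagon or triangle has to be rechecked.

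It then remains to show $\morphism{p_T}{\E_{\tilde T}}{\B_T}$ is strict equivariant, i.e.\ that its comparison map is an identity. On objects, strict monoidality of $p$ gives $p_T(m\cdot X)=p(m\tensor X)=p(m)\tensor p(X)=m\cdot p_T(X)$. On a Kleisli map $\morphism{f}{X}{\tilde T Y}$, I would apply $p$ to $m\cdot f=\tilde t_{m,Y}\comp(\id\tensor f)$; using that $p$ strictly preserves $\tensor$ and that $p\tilde T=Tp$ for a fibred monad, this becomes $p(\tilde t_{m,Y})\comp(\id\tensor p(f))$, and the hypothesis ``$\tilde t$ above $t$'' reads exactly $p(\tilde t_{m,Y})=t_{p(m),p(Y)}$, whence $p_T(m\cdot f)=t_{p(m),p(Y)}\comp(\id\tensor p(f))=m\cdot p_T(f)$. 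Thus $p_T$ commutes with the actions on the nose. I expect the only real obstacle to be strictness bookkeeping: one must check that each of $p(m\tensor X)=p(m)\tensor p(X)$, $p\tilde T=Tp$, and $p\tilde t=t$ holds literally rather than up to coherent isomorphism, which is precisely what the strict-monoidal and fibred-monad hypotheses secure; once these hold on the nose, every equivariance square degenerates to an identity.
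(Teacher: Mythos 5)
Your handling of the second, conditional half of the proposition is correct and is essentially the paper's own argument: the paper likewise puts the Kleisli $\E$-action on $\E_{\tilde T}$ via the strength $\tilde t$, puts the $\E$-action on $\B_T$ by $X \cdot K \defeq pX \tensor_{\B} K$ with morphism action $t \comp (pf \tensor_{\B} g)$ (which is precisely your change-of-base description), and then merely remarks that strict preservation by $p_T$ is straightforward --- the computation you spell out from $p\tilde t = t$, $p\tilde T = Tp$, and strict monoidality of $p$ is exactly that omitted verification, so on this half you are, if anything, more complete than the paper.

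The genuine gap is the first half. The sentence ``Its Kleisli resolution gives us another fibration $p_T$ and a pullback diagram in $\Cat$'' is a conclusion of the proposition, not given data, and the paper spends most of its proof establishing it; by declaring that you will treat ``$p_T$ being a fibration as part of the given Kleisli-resolution data'' you prove only part of the statement. Concretely, three things must be checked. First, that $p_T(X) \defeq pX$, $p_T(\morphism{f}{X}{\tilde T Y}) \defeq \morphism{pf}{pX}{TpY}$ is a functor at all: well-typedness uses $Tp = p\tilde T$, and preservation of Kleisli identities and composition uses $p\tilde\eta = \eta_p$ and $p\tilde\mu = \mu_p$. Second, that $p_T$ is a fibration: for $\fun{u}{I}{TpY}$ the paper takes the $p$-cartesian lifting $\morphism{\lifting{u}}{u^{*}(\tilde T Y)}{\tilde T Y}$ and shows it is $p_T$-cartesian, using that $\tilde T$ preserves cartesian morphisms and that $\tilde\mu$ is $p$-cartesian. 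Third, that the displayed square is a pullback, i.e.\ $\B \times_{\B_T} \E_{\tilde T} \cong \E$: a morphism $(u,f)$ of the pullback satisfies $\eta_p \comp u = pf$, and since $\tilde\eta$ is cartesian, $f$ factors uniquely through $\tilde\eta$, so that $X \mapsto (pX, X)$, $f \mapsto (pf, \tilde\eta \comp f)$ is an isomorphism commuting with the projections. This last point is not decoration: the pullback property is what allows the paper, immediately after the proposition, to read $T^{\top\top}$ as arising by gluing along $J$, so omitting it leaves the proposition's main purpose unsupported.
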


Note that the $\E$-actions of $\B$ and $\B_T$ are given by change of base along $p$.
Therefore, if $p$ has a monoidal reflection (the prototypical example is the subobject fibration $\sub{\Sets} \to \Sets$), we can perform change of base along the reflection on the whole diagram above, which in turn allows us to give $\B$-actions to $\E$ and $\E_T$ and recover the original $\B$-actions for $\B$ and $\B_T$.

Proposition \ref{prop:equivariant fibration from monoidal} states that if $T^{\top\top}$ obtained by the $\top\top$-lifting is fibred, there exists a fibration that is strict equivariant, which yields $T^{\top\top}$ by gluing along $J$.
Although $T^{\top\top}$ is not fibred in general, some $T^{\top\top}$s that naturally arise in the semantics are indeed fibred.

For the subobject fibration $\sub{\Sets} \to \Sets$, we can calculate its $\top\top$-lifting as follows:
\[
  T^{\top\top}(P \subseteq X) \defeq \biggl(\biggl\{
    m \in TX
    \biggmid
    \forall c \in TR^X.
    (\forall x. P(x) \to S(c(x)))
    \to S(c^{\#}(m))
  \biggr\} \subseteq TX\biggr)
\]
where $c^{\#}$ is the Kleisli lifting of $c$.
The following configurations yield fibred monads.

\begin{enumerate}
  \item
    \emph{Exception.} When $T(X) \defeq X \uplus E$ and $(S \subseteq TR) \defeq (\{\ast\} \subseteq T(\{\ast\}))$ for some $E$, $T^{\top\top}$ is computed by $T^{\top\top}(P \subseteq X) = (P \subseteq X \uplus E)$.
  \item
    \emph{Partiality.} This case is subsumed by the exception monad where $E = \{\bot\}$.
  \item
    \emph{Nondeterminism.} When $T(X) \defeq \powerset_{\mathrm{fin}}(X)$ and $(S \subseteq TR) \defeq (\{\emptyset\} \subseteq T(\emptyset))$, $T^{\top\top}$ is computed by $T^{\top\top}(P \subseteq X) = (\{ m \in \powerset_{\mathrm{fin}}(X) \mid \exists x \in m. P(x) \} \subseteq \powerset_{\mathrm{fin}}(X))$.
\end{enumerate}

Nonexamples include the state monad and the continuation monad.
Proposition \ref{prop:equivariant fibration from monoidal} gives us another view of logical predicates of the metalanguage.
Such a view is sometimes more direct.
In the case of the exception monad, $\sub{\Sets}_{T^{\top\top}}$ has the following structure.
\begin{center}
  \AXC{$(P \subseteq X) \xrightarrow{f} (Q \subseteq Y)$ in $\sub{\Sets}_{T^{\top\top}}$}
  \doubleLine
  \UIC{$X \xrightarrow{f} Y $ in $\Sets$ such that $P(x)$ implies $\begin{cases}
    Q(f(x)) & (f(x) \in Y) \\
    \text{false} & (f(x) \in E)
  \end{cases}$}
  \DP  
\end{center}
Instantiating this to the case when $E = \{\bot\}$, the condition at the bottom may be viewed as the partial correctness of Hoare logic, by identifying $X$ and $Y$ as state (sub)spaces.
\begin{center}
  \AXC{$\morphism{f}{X}{Y}$ is in $\sub{\Sets}_{T^{\top\top}}(P, Q)$}
  \doubleLine
  \UIC{$\vdash_{\text{partial}} \{P\}\,f\,\{Q\}$}
  \DP  
\end{center}

\section{Concluding remarks}

This paper presented a new calculus of semi-effects (SEC) and its categorical models.
As an application of our semantics, we introduced actegorical gluing and derived logical predicates for the calculus.
A brief comparison with the $\top\top$-lifting is also presented.
SEC incorporates a more general notion of effects, as exemplified with $\mathtt{Applicative}$.
Unlike related work, our semantics is purely defined in terms of actegories and equivariant functors.

As pointed out in Remark \ref{rem:failure of completeness}, \ref{rem:failure of CHL}, and \ref{rem:logical predicates at leftadj}, we do not consider our semantics is fully satisfactory.
We expect that the true semantics of \sec{} is in the middle of lax equivariant and strong equivariant.
However, we do not know whether as clean an account as the present work is possible in this direction.
Another direction worth studying would be higher-order extensions.
All our development took place in a first-order setting.
While SEC's value side can be easily extended to a higher-order language, solely extending the computation side with higher-order functionals is not justified by the semantics, as Kleisli categories usually do not inherit a closed structure.
We would also need to find examples that are not supported by \Applicative{} but useful in practice.
Such examples might help to introduce to existing functional programming languages a new general framework for structuring programs.


\newpage
\appendix

\section{Omitted definitions}

\begin{definition}[strong monad]
  Let $\C$ be a category with finite products.
  A strong monad over $\C$ is a quadruple $(T, \eta, \mu, t)$ of a functor $\functor{T}{\C}{\C}$ and
  natural transformations $\morphism{\eta}{\id_{\C}}{T}$, $\morphism{\mu}{T^2}{T}$ and $\morphism{t}{(-)\times T(-)}{T(-\times -)}$ such that $(T, \eta, \mu)$ is a monad over $\C$ and the following diagrams commute.
  \begin{center}
   \begin{tikzpicture}
      \matrix[row sep=1cm, column sep=.7cm]{
        & \node (1TA) {$1\times TA$}; \\
       \node (T1A) {$T(1\times A)$}; & & \node (TA) {$TA$};\\
      };
      \draw 
        (1TA) edge node [above right] {$r_{TA}$} (TA)
              edge node [above left] {$t_{1, A}$} (T1A.25)
        (T1A) edge node [below] {$Tr_A$} (TA);
    \end{tikzpicture}
    \begin{tikzpicture}
      \matrix[row sep=1cm, column sep=1cm]{
        \node (ABTCl) {$(A\times B)\times TC$}; & & \node (ABTCr) {$A\times (B\times TC)$};\\
        \node (TABCl) {$T((A\times B)\times C)$}; & & \node (ATBC) {$A\times T(B\times C)$};\\
                                        & \node (TABCb) {$T(A\times (B\times C))$};\\
      };
      \draw 
        (ABTCl) edge node [above] {$\alpha_{A, B, TC}$} (ABTCr)
                edge node [left] {$t_{A\times B, C}$} (TABCl)
        (ABTCr)  edge node [right] {$\id_A\times t_{B, C}$} (ATBC)
        (TABCl)  edge node [below left] {$T\alpha_{A, B, C}$} (TABCb)
        (ATBC)  edge node [below right] {$t_{A, B\times C}$} (TABCb);
    \end{tikzpicture}
    \begin{tikzpicture}
      \matrix[row sep=1cm, column sep=1cm]{
        & \node (AB) {$A\times B$};\\
        \node (ATB) {$A\times TB$}; & & \node (TAB) {$T(A\times B)$};\\
        \node (ATTB) {$A\times T^2B$}; & \node (TATB) {$T(A\times TB)$}; & \node (TTAB) {$T^2(A\times B)$};\\
      };
      \draw 
        (AB) edge node [above left] {$\id_A\times \eta_B$} (ATB)
             edge node [above right] {$\eta_{A\times B}$} (TAB)
        (ATB) edge node [auto] {$t_{A, B}$} (TAB)
        (ATTB) edge node [left] {$\id_A\times \mu_B$} (ATB)
               edge node [below] {$t_{A, TB}$} (TATB)
        (TATB) edge node [below] {$Tt_{A, B}$} (TTAB)
        (TTAB) edge node [right] {$\mu_{A, B}$} (TAB);
    \end{tikzpicture}
  \end{center}
  It is straightforward to generalize the above definition to any monoidal category.
\end{definition}

\begin{definition}
  The full typing rules for \sec{} including finite product types.
  \begin{typingrules}
    \AxiomC{$\mathstrut$}\RightLabel{$(\type{x}{\tau})\in\Gamma$}\UnaryInfC{$\Gamma\vdashv\type{x}{\tau}$}\DisplayProof\quad
    \AxiomC{$\Gamma\vdashv\type{M_1}{\tau_1}$}\AxiomC{$\cdots$}\AxiomC{$\Gamma\vdashv\type{M_n}{\tau_n}$}\RightLabel{$\morphism{f}{\vec{\tau_i}}{\tau}$}\TrinaryInfC{$\Gamma\vdashv \type{f(M_1,\ldots,M_n)}{\tau}$}\DisplayProof\quad
    \AxiomC{$\Gamma\vdashv \type{M}{\tau}$}\AxiomC{$\Gamma\vdashv \type{M'}{\tau'}$}\BinaryInfC{$\Gamma\vdashv\type{\pair{M}{M'}}{\tau\times\tau'}$}\DisplayProof\quad
    \AxiomC{$\Gamma\vdashv\type{M}{\tau\times\tau'}$}\UnaryInfC{$\Gamma\vdashv\type{\proj1{M}}{\tau}$}\DisplayProof\quad
    \AxiomC{$\Gamma\vdashv\type{M}{\tau\times\tau'}$}\UnaryInfC{$\Gamma\vdashv\type{\proj2{M}}{\tau'}$}\DisplayProof\quad
    \AxiomC{$\mathstrut$}\UnaryInfC{$\Gamma\vdashv\type{\unitterm}{1}$}\DisplayProof\quad
    \AxiomC{$\mathstrut$}\UnaryInfC{$\context{\Gamma}{\type{v}{A}}\vdashc\type{v}{A}$}\DisplayProof\quad
    \AxiomC{$\Gamma\vdashv\type{M_1}{\tau_1}$}\AxiomC{$\cdots$}\AxiomC{$\Gamma\vdashv\type{M_n}{\tau_n}$}\AxiomC{$\context{\Gamma}{\Delta}\vdashc\type{N}{A}$}\RightLabel{$\morphism{g}{\vec\tau_i,A}{A'}$}\QuaternaryInfC{$\context{\Gamma}{\Delta}\vdashc \type{g(M_1,\ldots,M_n,N)}{A'}$}\DisplayProof\quad
    \AxiomC{$\Gamma\vdashv\type{M_1}{\tau_1}$}\AxiomC{$\cdots$}\AxiomC{$\Gamma\vdashv\type{M_n}{\tau_n}$}\RightLabel{$\morphism{h}{\vec\tau_i}{A'}$}\TIC{$\context{\Gamma}{\Delta}\vdashc \type{h(M_1,\ldots,M_n)}{A'}$}\DisplayProof\quad
    \AxiomC{$\Gamma\vdashv\type{M}{\tau}$}\UnaryInfC{$\context{\Gamma}{{\cdot}}\vdashc\type{\val{M}}{\leftadj\tau}$}\DisplayProof\quad
    \AxiomC{$\context{\Gamma}{\Delta}\vdashc\type{N_1}{\leftadj\tau}$}\AxiomC{$\context{\type{x}{\tau},\Gamma}{{\cdot}}\vdashc\type{N_2}{A}$}
    \BinaryInfC{$\context{\Gamma}{\Delta}\vdashc \type{\letvalin{x}{N_1}{N_2}}{A}$}\DisplayProof
  \end{typingrules}
\end{definition}

\begin{definition}
 Substitution of terms by a variable.
 \begin{align*}
   x_j[\vec M_i/\vec x_i] &\defeq M_j\\
   y[\vec M_i/\vec x_i] &\defeq y \qquad (\text{$x_j \ne y$ for any $j$})\\
   (f(M_1,\ldots,M_n))[M/x] &\defeq f(M_1[M/x],\ldots,M_n[M/x])\\
   \pair{M_1'}{M_2'}[\vec M_i/\vec x_i] &\defeq \pair{M_1'[\vec M_i/\vec x_i]}{M_2'[\vec M/\vec x]}\\
   \proj1{M'}[\vec M/\vec x] &\defeq \proj1{M'[\vec M/\vec x]}\\
   \proj2{M'}[\vec M/\vec x] &\defeq \proj2{M'[\vec M/\vec x]}\\
   \unitterm[\vec M/\vec x]  &\defeq \unitterm\\
   (\val{M'})[\vec M/\vec x] &\defeq \val{(M'[\vec M/\vec x])}\\
   (\letvalin{y}{N_1}{N_2})[\vec M/\vec x] &\defeq \letvalin{y}{N_1[\vec M/\vec x]}{N_2[\vec M/\vec x]}\\
   g(M'_1,\ldots,M'_n,N)[\vec M/\vec x] &\defeq g(M_1'[\vec M/\vec x],\ldots,M_n'[\vec M/\vec x],N[\vec M/\vec x]) \\
   h(M'_1,\ldots,M'_n)[\vec M/\vec x] &\defeq h(M_1'[\vec M/\vec x],\ldots,M_n'[\vec M/\vec x])
 \end{align*}
\end{definition}

\begin{definition}
  The inference rules for equations-in-context.
  The rules for congruence, reflectivity, symmetry, transitivity, and substitution are omitted.
  \begin{inferencerules}
        \AxiomC{$(\Gamma\vdashv M_1=_{\tau}M_2)\in \Ax$}\UnaryInfC{$\Gamma\vdashv M_1=_{\tau}M_2$}\DisplayProof\quad
        \AxiomC{$(\context{\Gamma}{\Delta}\vdashc N_1=_{A}N_2)\in \Ax$}\UnaryInfC{$\context{\Gamma}{\Delta}\vdashc N_1=_{A}N_2$}\DisplayProof\quad
        \AxiomC{$\context{\Gamma}{\Delta}\vdashc\type{N_1}{\leftadj\tau}$}\AxiomC{$\context{\type{x}{\tau}, \Gamma}{\emptycontext}\vdashc\type{N_2}{A}$}\RightLabel{\normalfont\rmfamily(comm.\,conv.)}
        \BinaryInfC{$\context{\Gamma}{\Delta}\vdashc{}\letvalin{x_1}{C[N_1]}{N_2}=_{A}C[\letvalin{x_1}{N_1}{N_2}]$}\DisplayProof\quad\\[0.2\baselineskip]
        \AxiomC{$\Gamma\vdashv\type{M}{\tau}$}\AxiomC{$\context{\type{x}{\tau},\Gamma}{\emptycontext}\vdashc\type{N}{A}$}\RightLabel{$(\beta)$}\BinaryInfC{$\context{\Gamma}{\emptycontext}\vdashc (\letvalin{x}{\val{M}}{N})=_{A} N[M/x]$}\DisplayProof\quad
        \AxiomC{$\context{\Gamma}{\Delta}\vdashc\type{N}{\leftadj\tau}$}\RightLabel{$(\eta)$}\UnaryInfC{$\context{\Gamma}{\Delta}\vdashc (\letvalin{x}{N}{\val{x}})=_{\leftadj\tau}N$}\DisplayProof
      \end{inferencerules}
\end{definition}

\begin{definition}[internal language]
  Let $\functor{F}{\V}{\C}$ be a model of \sec{} that is ``small'', i.e.\ $\V$ and $\C$ are small.
  The internal language of $F$ is given by the following data.

  \emph{Signature.} Let $\Sigma_F$ be a signature of \sec{} such that
  \begin{enumerate}
    \item
      (base types.) Base types are given by the sets of objects.
      The set of base value types is $\ob\V$ and the set of base computation type is $\ob\C$.
      We write $\enc{\tau}$ for the type corresponding to $\tau \in \V$, and $\enc{A}$ for the type of $A \in \C$.

      Note that fixing these two defines the interpretation of all types and contexts.

    \item
      (function symbols.) Functions symbols are given by the sets of morphisms.
      Explicitly, we use the following set as the set of function symbols.
      \[
        \{ \vec{\tau} \xrightarrow{\enc{f}} \tau \mid \sem{\vec{\tau}} \xrightarrow{f} \sem{\tau} \}
        \cup \{ \vec{\tau},A \xrightarrow{\enc{g}} A' \mid \sem{\vec{\tau}} \cdot \sem{A} \xrightarrow{g} \sem{A'} \}
        \cup \{ \vec{\tau} \xrightarrow{\enc{h}} A \mid F\sem{\vec{\tau}} \xrightarrow{h} \sem{A} \}
      \]
  \end{enumerate}
  $\Sigma_F$ has a canonical interpretation $\sem{\cdot}$ in $F$.

  \emph{Theory.} The internal language $\T_F$ is a theory over $\Sigma_F$ given by the following set of axioms:
  \[
    (L_1 = L_2) \in \Ax \iff \sem{L_1} = \sem{L_2} \text{ in $F$}
  \]
\end{definition}

\begin{definition}
  Let $\functor{F}{\V}{\C}, \functor{F'}{\V}{\C'}$ be models of \sec{} sharing the domain $\V$.
  Let $(H, \theta), (H', \theta')$ be morphisms from $F$ to $F'$.
  A 2-cell $\morphism{\alpha}{(H, \theta)}{(H', \theta')}$ is a 2-cell in $\Act{\V}$ from $H$ to $H'$ subject to the following equation.
  \[
    \begin{tikzpicture}[baseline=(arrow.base)]
      \matrix[row sep=1cm, column sep=1cm]{
        & \node (V) {$\V$};\\
        \node (C){$\C$}; & & \node (s) {$\C'$};\\
      };
      \coordinate (empty) at ($(V) !.6! (s)$);
      \draw 
        (V) edge node [above left] {$F$} (C)
            edge node [above right] {$F'$} (s)
        (C) edge node (a) [below, align=center] {$H$} (s)
            edge[bend right=60] node [below] {H'} (s);
  
      \node (arrow)
        at ($(C)!.7!(empty)$) [inner sep=0mm, rotate=30, label=$\theta$] {$\Leftarrow$};
      \node (cell) at (0,-1.4) [label=$\alpha$, rotate=90] {$\Leftarrow$};
    \end{tikzpicture}
    =
    \begin{tikzpicture}[baseline=(arrow.base)]
      \matrix[row sep=1cm, column sep=1cm]{
        & \node (V) {$\V$};\\
        \node (C){$\C$}; & & \node (s) {$\C'$};\\
      };
      \coordinate (empty) at ($(V) !.6! (s)$);
      \draw 
        (V) edge node [above left] {$F$} (C)
            edge node [above right] {$F'$} (s)
        (C) edge node (a) [below, align=center] {$H'$} (s);
  
      \node (arrow)
        at ($(C)!.7!(empty)$) [inner sep=0mm, rotate=30, label=$\theta'$] {$\Leftarrow$};
    \end{tikzpicture}
  \]
\end{definition}

\section{Omitted proofs}

\begin{proof}[Proof of Theorem \ref{thm:model by ef}]
  The nontrivial point is to check that $\interpret{-}$ is sound with respect to ($\beta$), ($\eta$), and (comm.\,conv.).
  We check only $(\beta)$ and $(\eta)$.
  For $(\beta)$, see the following diagram.
  \begin{center}
    \begin{tikzpicture}
      \matrix[row sep=1.5cm, column sep=2cm]{
        \node (lg) {$\leftadj\interpret{\Gamma}$}; & \node (lgg) {$\leftadj (\interpret{\Gamma}\times\interpret{\Gamma})$}; & \node (glg) {$\interpret{\Gamma}\cdot\leftadj\interpret{\Gamma}$};\\
                                        & \node (lgt) {$\leftadj (\interpret{\Gamma}\times\interpret{\tau})$}; & \node (glt) {$\interpret{\Gamma}\cdot\leftadj\interpret{\tau}$};\\
                                        & & \node (lgt2) {$\leftadj (\interpret{\Gamma}\times\interpret{\tau})$}; \\
                                        & & \node (A) {$\interpret{A}$};\\
      };
      \draw 
        (lg)  edge node [above] {$\leftadj\delta$} (lgg)
              edge node [below left] {$\leftadj\pair{\id}{\interpret{M}}$} (lgt)
        (lgg) edge node [above] {$\inverse\phi$} (glg)
              edge node [right] {$\leftadj(\id\times\interpret{M})$} (lgt)
        (glg) edge node [right] {$\id\cdot\leftadj\interpret{M}$} (glt)
        (lgt) edge node [below] {$\inverse{\phi}$} (glt)
              edge [equal] (lgt2)
        (glt) edge node [right] {$\phi$} (lgt2)
        (lgt2) edge node [right] {$\interpret{N}$} (A);
      \node at (-2, 2.5) {(a)};
      \node at (2.5, 2) {(b)};
    \end{tikzpicture}
  \end{center}
  In the above diagram, the small diagram labeled (a) commutes by the universality of 
  the product and the one labeled (b) does by the naturality of $\phi$.

  For $(\eta)$, we show only the case where $\Delta$ is nonempty because the case where $\Delta$ is empty can be
  shown in essentially the same way. See the following diagram. 
  \begin{center}
    \begin{tikzpicture}
      \matrix[row sep=.15cm, column sep=.2cm]{
        \node (GD1) {$\interpret{\Gamma}\cdot\interpret{\Delta}$}; & & & & & &  \node (GGD1) {$(\interpret{\Gamma}\times\interpret{\Gamma})\cdot \interpret{\Delta}$};\\
        & & \node [label=above right:(a)]{};\\
        & & \node (1GD1) {$(1\times\interpret{\Gamma})\cdot\interpret{\Delta}$};\\
        & \node {(b)}; & & & \node {(c)};\\
        & & \node (1GD2) {$1\cdot (\interpret{\Gamma}\cdot\interpret{\Delta})$}; & & & & \node (GGD2) {$\interpret{\Gamma}\cdot(\interpret{\Gamma}\cdot\interpret{\Delta})$};\\
        & & & & \node {(e)};\\
        & & \node (1lt) {$1\cdot\leftadj\interpret{\tau}$}; & & & & \node (Glt) {$\interpret{\Gamma}\cdot\leftadj\interpret{\tau}$};\\
        & & & & \node {};\\
        & \node {};&  & \node {}; & \node (l1t) {$\leftadj(1\times\interpret{\tau})$}; & &  \node (lGt) {$\leftadj(\interpret{\Gamma}\times\interpret{\tau})$};\\
        & & & & \node {};\\
        \node (GD2) {$\interpret{\Gamma}\cdot\interpret{\Delta}$}; & & \node (lt1) {$\leftadj\interpret{\tau}$}; & & & & \node (lt2) {$\leftadj\interpret{\tau}$};\\
      };
      \draw 
        (GD1) edge [equal] (GD2)
              edge node [above] {$\delta\cdot\id$} (GGD1)
              edge node [above right] {$\inverse l\cdot \id$} (1GD1)
        (GGD1) edge node [right] {$\mu$} (GGD2)
               edge node [below right] {$(\bang\times\id)\cdot\id$} (1GD1)
               (1GD1) edge node [right] {$\mu$} (1GD2)
        (1GD2) edge node [above left] {$\eta$} (GD2)
               edge node [right] {$\id\cdot\interpret{M}$} (1lt)
        (GGD2) edge node [below] {$\bang\cdot\id$} (1GD2)
               edge node [right] {$\id\cdot\interpret{M}$} (Glt)
        (1lt)  edge node [right] {$\eta$} (lt1)
               edge node [above right] {$\alpha$} (l1t)
        (Glt)  edge node [below] {$\bang\cdot\id$} (1lt)
               edge node [right] {$\alpha$} (lGt)
        (l1t)  edge node [below right=-2mm] {$\leftadj\pi'$} (lt1)
        (lGt)  edge node [below] {$\leftadj(\bang\times\id)$} (l1t)
               edge node [right] {$\leftadj\pi'$} (lt2)
        (GD2)  edge node [below] {$\interpret{M}$} (lt1)
        (lt1)  edge [equal] (lt2);
      \node at (-2.8, -2.5) {(d)};
      \node at (-.5, -2.5) {(g)};
      \node at (2.8, -2) {(f)};
      \node at (1.5, -3.2) {(h)};
    \end{tikzpicture}
  \end{center}
  In the above diagram, each small diagram commutes because of:
  \begin{enumerate}
    \renewcommand{\labelenumi}{(\alph{enumi})}
    \item 
      the functoriality of the monoidal action and the fact that 
      $l\comp (\bang\times\id)\comp\delta=\id$,
    \item 
      the coherence for monoidal actions,
    \item 
      the naturality of $\mu$ and the functoriality of the monoidal action,
    \item 
      the naturality of $\eta$,
    \item 
      the functoriality of the monoidal action,
    \item 
      the naturality of $\alpha$,
    \item 
      the coherence for $\alpha$ and
    \item 
      the fact that $\pi'\comp (\bang\times\id) =\pi'$ and the functoriality of $\leftadj$.
  \end{enumerate}

  The calculation for (comm.\,conv.) is more complicated and the involved diagram gets bigger than those for the above two,
  but they still are straightforward, and we omit them.
\end{proof}

\begin{proof}[Proof of Example \ref{ex:model of sec}.\ref{ex:model from lax monoidal}]
  Let $\C$ be a cartesian closed category and $F \colon \C \to \C$ a lax monoidal functor.
  Let $\D$ be the category defined by the following data:
  \begin{align*}
    \ob\D &\defeq \ob\C \\
    \D(X,Y) &\defeq \C(1, F(X \Rightarrow Y)) \\
    \left(X \xrightarrow{\id_X} X\right) &\defeq \left(1 \xrightarrow{\iota} F1 \xrightarrow{F(\overline{\id_X})} F(X \Rightarrow X)\right) \\
    \left(Y \xrightarrow{g} Z\right) \circ \left(X \xrightarrow{f} Y\right) &\defeq
        1 \cong 1 \times 1 \\
        &\quad \xrightarrow{f \times g} F(X \Rightarrow Y) \times F(Y \Rightarrow Z) \\
        &\quad \xrightarrow{\mu} F((X \Rightarrow Y) \times (Y \Rightarrow Z)) \\
        &\quad \xrightarrow{F(\mathtt{comp})} F(X \Rightarrow Z)
  \end{align*}
  where $\iota$ and $\mu$ are the morphisms required by the lax monoidality of $F$, $X \Rightarrow Y$ is the exponent in $\C$, and $\mathtt{comp}$ is the moprhism $\lambda \langle f, g \rangle. \lambda x. g(f(x))$.
  In an abstract view, $\D$ is described in terms of enriched categories.
  Given a (symmetric) monodal closed category $\M$, there exists a 2-category $\ECat{\M}$ of $\M$-enriched categories.
  Similarly, given (symmetric) monodal closed categories $\M,\M'$ and a lax monoidal functor $K \colon \M \to \M'$, there exists a 2-functor $K_\ast : \ECat{\M} \to \ECat{\M'}$ defined by change-of-base along $K$.
  We can then define $\D$ by two successive applications of change-of-base ${\C(1,-)_\ast F_\ast \C}$, where $\C(1,-) \colon \C \to \Sets$ is the global section functor.

  $\D$ has an $\C$-action defined by finite products:
  \begin{align*}
    X \cdot Y &\defeq X \times Y \\
    \left(X \xrightarrow{f} X'\right) \cdot \left(Y \xrightarrow{g} Y'\right) &\defeq 
      1 \cong 1 \times 1 \\
      &\quad\xrightarrow{F\overline{f} \times g} F(X \Rightarrow X') \times F(Y \Rightarrow Y') \\
      &\quad\xrightarrow{\mu} F((X \Rightarrow X') \times (Y \Rightarrow Y')) \\
      &\quad\xrightarrow{F(\mathtt{prod})} F(X \times Y \Rightarrow X' \times Y')
  \end{align*}
  where $\mathtt{prod}$ is the morphism $\lambda \langle f, g \rangle. \lambda \langle x, y \rangle. \langle f x, g y \rangle$.

  Then there exists an identity-on-object functor $J \colon \C \to \D$, whose action on morphisms is given by:
  \[
    J(X \xrightarrow{f} Y) \defeq \left(1 \xrightarrow{\iota} F1 \xrightarrow{F\overline{f}} F(X \Rightarrow Y)\right).
  \]

  We show that $J$ is strict equivariant, for which $\phi^J_{X,Y} \colon X \cdot J(Y) \to J(X \times Y)$ is given by the identity.
  Here we only check that $\phi^J$ is natural in both variables, by chasing the following diagrams.
  \[
    \begin{tikzcd}
      1 \ar[rr, "\iota"] \ar[d, "\cong"'] && F1 \ar[ddd, "F(\overline{f \times g})"] \ar[dl, "\cong"] \ar[ddl, "\cong"] \\
      1 \times 1 \ar[d, "\iota \times \iota"'] \ar[r, "\id \times \iota"] & 1 \times F1 \ar[ld, "\iota \times \id"'] \\
      F1 \times F1 \ar[d, "F\overline{f} \times F\overline{g}"'] \ar[r, "\mu"] & F(1 \times 1) \ar[d, "F(\overline{f} \times \overline{g})"] \\
      F(X \Rightarrow X') \times F(Y \Rightarrow Y') \ar[r, "\mu"'] & F((X \Rightarrow X') \times (Y \Rightarrow Y')) \ar[r, "\mathtt{prod}"'] & F(X \times Y \Rightarrow X' \times Y')
    \end{tikzcd}
  \]
  In the above diagram, the bottom left composite is $f \cdot Jg$ and the upper right is $J(f \times g)$.
\end{proof}

\begin{proof}[Proof of Theorem \ref{thm:theory to ef}]
  Let $\T$ be any theory of \sec.
  We define the monoidal category $\V_{\T}$, the category $\C_{\T}$ with a $\V_{\T}$-action and the strong equivariant functor $\morphism{F}{\V_{\T}}{\C_{\T}}$.

  Firstly we define the cartesian category $\V_{\T}$.
  $\V_{\T}$ has value types as its object and as its morphisms $\morphism{}{\tau_1}{\tau_2}$
  equivalence classes of value terms $\classof{\type{x}{\tau_1}\vdashv \type{M}{\tau_2}}_{\T}$ derived under $\T$. 
  We define the equivalence class by
  \begin{align*}
    &\classof{\type{x}{\tau_1}\vdashv\type{M}{\tau_2}}_{\T} = \classof{\type{x'}{\tau_1}\vdashv\type{M'}{\tau_2}}_{\T} \\
    \iff &\type{x}{\tau_1}\vdashv M=_{\tau_2}M'[x/x']\text{ is derived}.
  \end{align*}
  The identity morphism $\id_{\tau}$ is $\classof{\type{x}{\tau}\vdashv\type{x}{\tau}}_{\T}$.
  The composition $\classof{\type{x_2}{\tau_2}\vdashv \type{M_3}{\tau_3}}_{\T}\comp \classof{\type{x_1}{\tau_1}\vdashv \type{M_2}{\tau_2}}_{\T}$ is
  $\classof{\type{x_1}{\tau_1}\vdashv \type{M_3[M_2/x_2]}{\tau_3}}_{\T}$.
  We can show that $\V_{\T}$ has binary products and the terminal object.
  The binary products of $\tau_1$ and $\tau_2$ is $\tau_1\times\tau_2$ and the terminal object is $1$.

  We define $\C_{\T}$ next.
  $\C_{\T}$ has two kinds of objects:
  \begin{itemize}
    \item
      pairs $(\tau\mid A)$ of a value type $\tau$ and a computation type $A$, and 
    \item 
      value types $\tau$, which we write $(\context{\tau}{\emptycontext})$.
  \end{itemize}

  For morphisms, 
  we have to be careful.
  At first, we define the equivalence class of typed computation terms as we do to define the morphisms in $\V_{\T}$.
  \begin{align*}
    &\classof{\context{\type{x}{\tau}}{\type{v}{A}}\vdashc\type{N}{A'}}_{\T} = \classof{\context{\type{x'}{\tau}}{\type{v'}{A}}\vdashv\type{N'}{A'}}_{\T} \\
    \iff &\context{\type{x}{\tau}}{\type{v}{A}}\vdashc N=_{A'}N'[v/v', x/x']\text{ is derived} \\[.5\baselineskip]
    &\classof{\context{\type{x}{\tau}}{\emptycontext}\vdashc\type{N}{A'}}_{\T} = \classof{\context{\type{x'}{\tau}}{\emptycontext}\vdashv\type{N'}{A'}}_{\T} \\
    \iff &\context{\type{x}{\tau}}{\emptycontext}\vdashc N=_{A'}N'[x/x']\text{ is derived}
  \end{align*}
  In the sequel, the subscript $\T$ for the equivalence classes will be omitted for simplicity.

  To define what is a morphism $\morphism{}{(\tau\mid\Delta)}{(\tau\mid\Delta')}$,
  we have to be careful of whether $\Delta$ and $\Delta'$ are empty or not.
  \begin{itemize}
    \item When $\Delta'$ is not empty, say $\Delta'=A'$, a morphism $\morphism{}{(\context{\tau}{\Delta})}{(\context{\tau'}{A'})}$ is a pair of 
      equivalence classes of a typed term that has the following form.
      \[
        (\classof{\type{x}{\tau}\vdashv \type{M'}{\tau'}}, \classof{\context{\type{x}{\tau}}{\Delta}\vdashc\type{N}{A'}})
      \]

    \item When both $\Delta$ and $\Delta'$ are empty, 
      a morphism $\morphism{}{(\context{\tau}{\emptycontext})}{(\context{\tau'}{\emptycontext})}$ is an equivalence class of typed value terms which has the following form.
      \[
        \classof{\type{x}{\tau}\vdashv \type{M'}{\tau'}}
      \]

    \item When $\Delta$ isn't empty and $\Delta'$ is empty, 
      $\C_{\T}((\context{\tau}{\Delta}), (\context{\tau'}{\Delta'}))$ is empty.
  \end{itemize}
  The identity morphism $\morphism{\id}{(\tau\mid\emptycontext)}{(\tau\mid\emptycontext)}$ is $\classof{\type{x}{\tau}\vdashv\type{x}{\tau}}_{\T}$ and
  $\morphism{\id}{(\tau\mid A)}{(\tau\mid A)}$ is $(\classof{\type{x}{\tau}\vdashv\type{x}{\tau}}_{\T}, \classof{\context{\type{x}{\tau}}{\type{v}{A}}\vdashc \type{v}{A}})$.
  Next, we consider the composition of the morphism $\morphism{f}{(\tau\mid\Delta)}{(\tau'\mid\Delta')}$ and $\morphism{g}{(\tau'\mid\Delta')}{(\tau''\mid\Delta'')}$.
  Note that there are three cases to consider.
  \begin{itemize}
    \item When $\Delta$, $\Delta'$ and $\Delta''$ are empty. 
      Let
      \begin{align*}
        f&=\classof{\type{x}{\tau}\vdashv \type{M}{\tau'}}_{\T}\\
        g&=\classof{\type{x'}{\tau'}\vdashv \type{M'}{\tau''}}_{\T}, 
      \end{align*}
      then we define $g\comp f = \classof{\type{x}{\tau}\vdashv\type{M'[M/x']}{\tau''}}_{\T}$.
    \item When $\Delta$ and $\Delta'$ is empty and $\Delta''$ is not empty.
      Let 
      \begin{align*}
        f &= \classof{\type{x}{\tau}\vdashv \type{M'}{\tau'}}\\
        g &= (\classof{\type{x'}{\tau'}\vdashv \type{M''}{\tau''}}, \classof{\context{\type{x'}{\tau'}}{\emptycontext}\vdashc\type{N''}{A''}}),
      \end{align*}
      then we define
      \[
        g\comp f = (\classof{\type{x}{\tau}\vdashv \type{M''[M'/x']}{\tau''}}, \classof{\context{\type{x}{\tau}}{\emptycontext}\vdashc\type{N''[M'/x']}{A''}}).
      \]

    \item When $\Delta$ is empty and $\Delta'$ and $\Delta''$ is not empty.
      Let 
      \begin{align*}
        f &= (\classof{\type{x}{\tau}\vdashv \type{M'}{\tau'}}, \classof{\context{\type{x}{\tau}}{\Delta}\vdashc\type{N'}{A'}})\\
         &= (\classof{\type{x'}{\tau'}\vdashv \type{M''}{\tau''}}, \classof{\context{\type{x'}{\tau'}}{\type{v'}{A'}}\vdashc\type{N''}{A''}}),
      \end{align*}
      then we define
      \[
        g\comp f = (\classof{\type{x}{\tau}\vdashv \type{M''[M'/x']}{\tau''}}, \classof{\context{\type{x}{\tau}}{\Delta}\vdashc\type{N''[M'/x', N'/v']}{A''}})
      \]
  \end{itemize}

  Up to here, we define two categories $\V_{\T}$ and $\C_{\T}$. Next, we define a 
  monoidal action $\functor{(-)\cdot(-)}{\V_{\T}\times\C_{\T}}{\C_{\T}}$.
  For objects $\tau$ in $\V_{\T}$ and $(\tau'\mid\Delta)$ in $\C_{\T}$, we define $\tau\cdot (\tau'\mid\Delta) = (\tau\times\tau'\mid\Delta)$.
  For morphisms $\morphism{\classof{\type{x}{\tau_1}\vdashv\type{M}{\tau_2}}_{\T}}{\tau_1}{\tau_2}$ in $\V_{\T}$ and 
  $\morphism{(\classof{\type{x'}{\tau_3}\vdashv\type{M'}{\tau_4}}_{\T}, \classof{\context{\type{y}{\tau_3}}{\Delta}\vdashc\type{N}{A}}_{\T})}{(\tau_3\mid\Delta)}{(\tau_4\mid A)}$,
  we define 
  \begin{align*}
    & \classof{\type{x}{\tau_1}\vdashv\type{M}{\tau_2}}_{\T}\cdot (\classof{\type{x'}{\tau_3}\vdashv\type{M'}{\tau_4}}_{\T}, \classof{\context{\type{y}{\tau_3}}{\Delta}\vdashc\type{N}{A}}_{\T})\\
    ={} & 
      \begin{aligned}[t]
        (&\classof{\type{z}{\tau_1\times\tau_3}\vdashv\type{\pair{M[\proj1{z}/x]}{M'[\proj2{z}/x']}}{\tau_2\times\tau_4}}_{\T},\\
        &\classof{\context{\type{w}{\tau_1\times\tau_3}}{\Delta}\vdashc \type{N/[\proj2{w}/y]}{A}}_{\T}).
      \end{aligned}
  \end{align*}
  For morphisms $\morphism{\classof{\type{x}{\tau_1}\vdashv\type{M}{\tau_2}}_{\T}}{\tau_1}{\tau_2}$ in $\V_{\T}$ and 
  $\morphism{\classof{\type{x'}{\tau_3}\vdashv\type{M'}{\tau_4}}_{\T}}{(\tau_3\mid\emptycontext)}{(\tau_4\mid \emptycontext)}$, 
  the action is defined in a similar way.
  It is straightforward to show that $(-)\cdot(-)$ preserves identities and compositions.

  From the above, we can get a functor $\functor{(-)\cdot (-)}{\V_{\T}\times\C_{\T}}{\C_{\T}}$.
  In order to make this functor a $\V_{\T}$-action, 
  we have to define the coherence natural isomorphism $\morphism{\eta_{(\context{\tau}{\Delta})}}{1\cdot (\context{\tau}{\Delta})}{(\context{\tau}{\Delta})}$ and
  $\morphism{\mu_{\tau, \tau', (\context{\tau''}{\Delta})}}{(\tau\times\tau')\cdot(\context{\tau''}{\Delta})}{\tau\cdot(\tau'\cdot(\context{\tau''}{\Delta}))}$ for this monoidal action.

  As to $\eta_{(\context{\tau}{\Delta})}$, 
  \begin{itemize}
    \item if $\Delta$ is empty, we define $\eta_{(\context{\tau}{\emptycontext})}$ to be 
      $\classof{\type{x}{1\times\tau}\vdashv\type{\proj2{x}}{\tau}}_{\T}$, and
    \item if $\Delta$ isn't empty, say $\Delta=A$, we define $\eta_{(\context{\tau}{A})}$ to be 
      \[
      (\classof{\type{x}{1\times\tau}\vdashv\type{\proj2{x}}{\tau}}_{\T}, \classof{\context{\type{x}{1\times\tau}}{\type{v}{A}}\vdashc\type{v}{A}}).
      \]
  \end{itemize}
  As to $\mu_{(\context{\tau}{\Delta})}$, 
  \begin{itemize}
    \item if $\Delta$ is empty, we define $\mu_{\tau,\tau',(\context{\tau''}{\emptycontext})}$ to be 
      \[
        \classof{\type{x}{(\tau\times\tau')\times\tau''}\vdashv\type{\pair{\proj1{\proj1{x}}}{\pair{\proj2{\proj1{x}}}{\proj2{x}}}}{\tau\times(\tau'\times\tau'')}}_{\T},
      \]
      and
    \item if $\Delta$ isn't empty, say $\Delta=A$, we define $\mu_{\tau,\tau',(\context{\tau''}{A})}$ to be 
      \begin{align*}
        (&\classof{\type{x}{(\tau\times\tau')\times\tau''}\vdashv\type{\pair{\proj1{\proj1{x}}}{\pair{\proj2{\proj1{x}}}{\proj2{x}}}}{\tau\times(\tau'\times\tau'')}}_{\T},\\
         &\classof{\context{\type{x}{(\tau\times\tau')\times\tau''}}{\type{v}{A}}\vdashc\type{v}{A}}).
      \end{align*}
  \end{itemize}
  The coherence condition for these natural isomorphism reduces to the 
  monoidality of $\V_{\T}$.

  Next, We define an equivariant functor $\functor{F}{\V_{\T}}{\C_{\T}}$ and its coherent natural transformation $\morphism{\alpha}{(-)\cdot F(-)}{F((-)\cdot(-))}$.
  For object $\tau$ in $\V_{\T}$, we define $F\tau=(\context{\tau}{\emptycontext})$ and
  $F$ is identity on morphisms. $F$ is clearly a functor. Moreover, $F$ is equivariant with
  an identity natural transformation. 
\end{proof}

\begin{proof}[Proof of Example \ref{ex:2-category of actegories}]
  We show that an equivariant left adjoint is always strong equivariant, which we believe is folklore.
  Notice the similarity with a fact about monoidal functors: a monoidal left adjoint is always strong monoidal.

  Let $F \dashv G$ an $\M$-equivariant adjunction.
  We define $\psi^F \colon F(m \cdot x) \to m \cdot F(x)$ to be the mate of $m \cdot x \xrightarrow{m \cdot \eta} m \cdot GFx \xrightarrow{\phi^G} G(m \cdot Fx)$.
  Check the following diagrams to see that $\psi^F$ is an inverse of $\phi^F$.
  \[
    \begin{tikzcd}
      F(m \cdot x) \ar[r, "F(m \cdot \eta)"] \ar[rrd, "F\eta"'] \ar[rrrd, bend right, "\id"'] \ar[rrr, bend left, "\psi^F"] & F(m \cdot GFx) \ar[r, "F\phi^G"] \ar[rd, "F\phi^{GF}"] & FG(m \cdot Fx) \ar[r, "\epsilon"] \ar[d, "FG\phi^F"] & m \cdot Fx \ar[d, "\phi^F"] \\
      && FGF(m \cdot x) \ar[r, "\epsilon"'] & F(m \cdot x)
    \end{tikzcd}
  \]
  \[
    \begin{tikzcd}
      m \cdot Fx \ar[d, "\phi^F"'] \ar[r, "m \cdot F\eta"] \ar[rrrd, bend left, "m \cdot \id"] & m \cdot FGFx \ar[d, "\phi^F"] \ar[rd, "\phi^{FG}"'] \ar[rrd, "m \cdot \epsilon"] \\
      F(m \cdot x) \ar[r, "F(m \cdot \eta)"'] \ar[rrr, bend right, "\psi^F"'] & F(m \cdot GFx) \ar[r, "F\phi^G"'] & FG(m \cdot Fx) \ar[r, "\epsilon"'] & m \cdot Fx
    \end{tikzcd}
  \]
\end{proof}

\begin{proof}[Proof of Remark \ref{rem:failure of CHL}]
  The Curry-Howard-Lambek correspondence usually refers to the following equivalence in a suitable 2-category.
  \[
    \M \xrightarrow{\simeq} \operatorname{Syn}(\operatorname{Lang}(\M))
  \]
  where $\M$ is a categorical model and $\operatorname{Syn}$ and $\operatorname{Lang}$ are operators giving the term model and the internal language.

  To adapt this to models of \sec{}, we slightly modify the term model $\operatorname{Syn}(\T)$ in Theorem \ref{thm:theory to ef} as follows:
  \[
    \operatorname{Syn}(\T)_F \colon \V \xrightarrow{\simeq} \operatorname{Syn}(\T)_\V \xrightarrow{(\ast)} \operatorname{Syn}(\T)_\C
  \]
  where $(\ast)$ is the term model presented in Theorem \ref{thm:theory to ef} and the first equivalence is the Curry-Howard-Lambek correspondence of algebraic theory (with finite products).

  Then the question is reduced to existence of the following equivalence.
  \[
    F \xrightarrow{\simeq} \operatorname{Syn}(\T_F)
  \]
  We want to make the following $(H,\theta)$ from $F$ to $\operatorname{Syn}(\T_F)$ the witness of the above equivalence.
  \begin{align*}
    H(A) &\defeq (\context{\emptyctx}{\enc{A}}) \\
    \theta_\tau \colon (\context{\enc{\tau}}{\emptyctx}) \to (\context{\emptyctx}{\enc{F\tau}}) &\defeq \classof{\context{\type{x}{\enc{\tau}}}{\emptyctx} \vdashc \type{\enc{\id_{F\tau}}(x)}{\enc{F\tau}}}
  \end{align*}
  Then the morphism in the reverse direction $(H',\theta')$ will be
  \begin{align*}
    H'((\context{\Gamma}{\Delta})) &\defeq \sem{\context{\Gamma}{\Delta}}_F \\
    \theta'_\tau \colon F\tau \to F\tau &\defeq \id_{F\tau}.
  \end{align*}
  If these form an equivalence, there should be a 2-cell $\alpha: H\circ H' \to 1$.
  However, this is impossible.
  For example, the $(\context{\tau}{\emptyctx})$ component of $\alpha$ has the following type:
  \[
    \alpha_{(\context{\tau}{\emptyctx})} \colon (\context{\emptyctx}{\enc{F\sem{\tau}}}) \to (\context{\tau}{\emptyctx}).
  \]
  By the definition of the term model, there is no such morphism.
\end{proof}

\begin{proof}[Proof of Proposition \ref{prop:simple gluing}]
  Define $m\action (D, C, f)$ to be $(m\action D, m\action C, \phi\comp (m\action f))$ for objects $m$ in $\M$ and $(D, C, \morphism{f}{D}{\Gamma C})$ in $\glue{\D}{\Gamma}$,
  and $a\action (d, c)$ to be $(a\action d, a\action c)$ for morphisms $\morphism{a}{m}{m'}$ in $\M$ and $\morphism{(d, c)}{(D, C, f)}{(D', C', f')}$ in $\glue{\D}{\Gamma}$.
  It follows that $(a\action d, a\action c)$ is a morphism 
  $\morphism{}{(m\action D, m\action C, \phi\comp(m\action f))}{(m'\action D', m'\action C', \phi\comp(m'\action f'))}$ in $\glue{\D}{\Gamma}$
  from the diagram below.
  \begin{center}
    \begin{tikzpicture}
      \matrix[row sep=1cm, column sep=1.8cm]{
        \node (md) {$m\action D$}; &
        \node (m'd')  {$m'\action D'$}; \\
        \node (mgc) {$m\action \Gamma C$}; &
        \node (m'gc') {$m'\action \Gamma C'$}; \\
        \node (gmc)  {$\Gamma(m\action C)$}; &
        \node (gm'c') {$\Gamma(m'\action C')$};\\
      };
      \draw 
        (md)  edge node [above] {$a\action d$} (m'd')
              edge node [left] {$m\action f$} (mgc)
        (mgc) edge node [above] {$a\action \Gamma c$} (m'gc')
              edge node [left] {$\phi$} (gmc)
        (gmc) edge node [below] {$\Gamma(a\action c)$} (gm'c')
        (m'd') edge node [right] {$m'\action f'$} (m'gc')
        (m'gc') edge node [right] {$\phi$} (gm'c')
      ;
    \end{tikzpicture}
  \end{center}
  The upper rectangle commutes by the fact that $(d, c)$ is a morphism in $\glue{\D}{\Gamma}$ and the (bi)fuctoriality of $(-)\action(-)$,
  and the lower one commutes by the naturality of $\phi$.
  It is straightforward to see $\functor{(-)\action(-)}{\M\times(\glue{\D}{\Gamma})}{\glue{\D}{\Gamma}}$
  is indeed an $\M$-action on $\glue{\D}{\Gamma}$.

  It is also straightforward to see that the projection functor $\functor{\pi}{\glue{\D}{\Gamma}}{\C}$ is 
  strict equivariant.
\end{proof}

\begin{proof}[Proof of Proposition \ref{prop:lax equivariant and opfibration}]
  The upper left category $\G$ has pairs $(C, X)$ of objects in $\C$ and $\E$ such that $\Gamma C=pX$ for its objects, 
  and pairs $(f, x)$ of morphisms in $\C$ and $\E$ such that $\Gamma f=px$ for its morphisms.
 
  We can define an $\M$-action on $\G$ by using universalities of opcartesian morphisms as follows.
  Let $m$ be any object in $\M$ and $(C, X)$ be any object in $\G$. 
  We define $m\cdot (C, X)$ to be $(m\cdot C, \opcleavage{(\phi_{m, C})}(m\cdot X))$.
  This definition is well-defined i.e.\ $\Gamma(m\cdot C) = p(\opcleavage{(\phi_{m, C})}(m\cdot X))$ holds.
  This follows from the fact that $\Gamma C=pX$ holds, $m\cdot pX=p(m\cdot X)$ holds because 
  $p$ is strict equivariant and $\phi_{m, C}$ has $m\cdot \Gamma C$ as its domain.
  \[
    \begin{array}{r@{{}}l@{{}}l}
      \simplemorphism{\phi_{m, c}&{}}{p(m\cdot X)&{}}{\Gamma(m\cdot C)}\\
      \simplemorphism{\oplifting{\phi_{m, c}}(m\cdot X)&{}}{m\cdot X &{}}{\opcleavage{(\phi_{m, C})}(m\cdot X)}
    \end{array}
  \]
  Let $\morphism{a}{m}{m'}$ be any moprhism of $\M$ and $\morphism{{(f, x)}}{(C, X)}{(C', X')}$ be any morphism of $\G$. 
  We define $a\cdot (f, x)$ by means of the universality of $\oplifting{\phi_{m, C}}(m\cdot X)$:
  we define $a\cdot (f, x)$ to be $(a\cdot f, u)$ where $u$ is the unique morphism which makes the left diagram commute and satisfies $pu=\Gamma(a\cdot f)$.
  Note that the lower right diagram commutes by the naturality of $\phi$, and so the upper right one does.
  \begin{center}
    \begin{tikzpicture}
      \matrix[row sep=1cm, column sep=1.5cm]{
        \node (mx) {$m\cdot X$}; & \node (Gmc) {$\opcleavage{(\phi_{m, C})}(m\cdot X)$};\\
        \node (m'x') {$m'\cdot X'$}; & \node (Gm'c') {$\opcleavage{(\phi_{m', C'})}(m'\cdot X')$};\\
      };
      \draw
        (mx) edge node [above=1mm] {$\oplifting{\phi_{m, C}}(m\cdot X)$} (Gmc)
             edge node [left] {$a\cdot x$} (m'x')
        (m'x') edge node [below=1mm] {$\oplifting{\phi_{m', C'}}(m'\cdot X')$} (Gm'c')
        (Gmc) edge [dashed] node [right] {$u$} (Gm'c');

      \node at (3.2, 0) {$\stackrel{p}{\mapsto}$};
      \node at (6.5, -2.2) [rotate=90] {$=$};
      \begin{scope}[xshift=6.5cm, yshift=-4cm]
        \matrix[row sep=1cm, column sep=1cm]{
          \node (mGc) {$m\cdot\Gamma C$}; & \node (Gmc) {$\Gamma (m\cdot C)$};\\
          \node (m'Gc') {$m'\cdot\Gamma C'$}; & \node (Gm'c') {$\Gamma (m'\cdot C')$};\\
        };
        \draw
          (mGc) edge node [above] {$\phi_{m, C}$} (Gmc)
               edge node [left] {$a\cdot \Gamma f$} (m'Gc')
          (m'Gc') edge node [below] {$\phi_{m', C'}$} (Gm'c')
          (Gmc) edge node [right] {$\Gamma(a\cdot f)$} (Gm'c');
      \end{scope}
      \begin{scope}[xshift=6.5cm]
        \matrix[row sep=1cm, column sep=2cm]{
          \node (pmx) {$p(m\cdot X)$}; & \node (Gmc) {$\Gamma (m\cdot C)$};\\
          \node (pm'x') {$p(m'\cdot X')$}; & \node (Gm'c') {$\Gamma (m'\cdot C')$};\\
        };
        \draw
          (pmx) edge node [above] {$\phi_{m, C}$} (Gmc)
                edge node [right] {$p(a\cdot x)$} (pm'x')
                (pm'x') edge node [below=1mm] {$p(\oplifting{\phi_{m', C'}}(m'\cdot X'))$} (Gm'c')
          (Gmc) edge node [right] {$\Gamma(a\cdot f)$} (Gm'c');
      \end{scope}
    \end{tikzpicture}
  \end{center}
  The functoriality of this monoidal action follows from the universalities of $\oplifting{\phi_{m, C}}$s.

  At last, we define the coherent natural isomorphisms for this monoidal action.
  Let $(C, X)$ be any object in $\G$. Consider the following diagram.
  \begin{center}
    \begin{tikzpicture}
      \matrix[row sep=1cm, column sep=1cm]{
        \node (1gc) {$1\cdot \Gamma C$}; & & \node (g1c) {$\Gamma(1\cdot C)$};\\
                                         & \node (gc) {$\Gamma C$};\\
      };
      \draw
        (1gc) edge node [above] {$\phi_{1, C}$} (g1c)
              edge node [below left] {$\eta_{\Gamma C}$} (gc)
        (g1c) edge node [below right] {$\Gamma\eta_{C}$} (gc);
    \end{tikzpicture}
  \end{center}
  This diagram commutes because it is one of those for coherence for $\phi$.
  In addition, $\eta_{\Gamma C}=p\eta_{X}$ holds because $\Gamma C=pX$ holds and $p$ is strict equivariant.
  With the universality of $\oplifting{\phi_{1, C}}$, the unique morphism $u$ which makes
  the following diagram commute and satisfies $pu=\Gamma\eta_{C}$ is obtained.
  \begin{center}
    \begin{tikzpicture}
      \matrix[row sep=1cm, column sep=1cm]{
        \node (1x) {$1\cdot X$}; & & \node (g1c) {$\opcleavage{(\phi_{1, C})}(1\cdot X)$};\\
                                         & \node (x) {$X$};\\
      };
      \draw
      (1x) edge node [above] {$\oplifting{\phi_{1, C}}(1\cdot X)$} (g1c)
              edge node [below left] {$\eta_{X}$} (x)
      (g1c.200) edge [dashed] node [below right] {$u$} (x);
    \end{tikzpicture}
  \end{center}
  We define $\eta_{(C, X)}$ to be $(\eta_{C}, u)$.

  In the same vein, $\mu_{m, m', (C, X)}$ is defined to be $(\mu_{m, m', C}, u)$ in which $u$ is the unique morphism which 
  makes the following diagram commute and satisfies $pu=\Gamma \mu_{m, m', C}$.
  \begin{center}
    \begin{tikzpicture}
      \matrix[row sep=1cm, column sep=4cm]{
        \node (mtmx) {$(m\tensor m')\cdot X$}; & \node (gmmc) {$\opcleavage{(\phi_{m\tensor m', C})}((m\tensor m')\cdot X)$};\\
        \node (mmx) {$m\cdot (m'\cdot X)$};\\
        \node (mgmc) {$m\cdot \opcleavage{(\phi_{m', C})}(m'\cdot X)$}; & \node (gmc) {$\opcleavage{(\phi_{m, m'\cdot C})}(m\cdot \opcleavage{(\phi_{m', C})}(m'\cdot X))$};\\
      };
      \draw 
        (mtmx) edge node [above] {$\oplifting{\phi_{m\tensor m', C}}((m\tensor m')\cdot X)$} (gmmc)
               edge node [left] {$\mu_{m, m', X}$} (mmx)
        (gmmc) edge [dashed] node [right] {$u$} (gmc)
        (mmx)  edge node [right] {$m\cdot \oplifting{\phi_{m', C}}(m'\cdot X)$} (mgmc)
        (mgmc) edge node [below=1mm] {$\oplifting{\phi_{m, m'\cdot C}}(m\cdot \opcleavage{(\phi_{m', C})}(m'\cdot X))$} (gmc);
    \end{tikzpicture}
  \end{center}
  Note that the coherence diagrams for $\phi$ and $\mu$ is obtained by applying $p$ to this whole diagram.

  The coherent natural transformations $\eta$ and $\mu$ defined above are isomorphisms, using ($*$), by several properties of opcartesian morphisms.
  The coherence conditions are 
  reduced to those for the action on $\E$ by using the universalities of $\oplifting{\phi}$s.
\end{proof}

\begin{proof}[Proof of Proposition \ref{prop:equivariant and fibration}]
  Let $\phi$ be the coherent natural transformation associated to $\Gamma$.
  This proposition can be proved in a similar way in Proposition \ref{prop:lax equivariant and opfibration}.
  The $\M$-action on $\G$ is defined by using the opcartesian lifting 
  of $\morphism{\inverse{\phi^{\Gamma}}}{\Gamma (m\cdot C)}{m\cdot\Gamma C}$.

  We present only the definition of the $\M$-action on $\G$.
  For any object $m$ in $\M$ and $(C, X)$ in $\G$, $m\cdot (C, X)$ is defined to be $(m\cdot C, \cleavage{(\inversew{\phi_{m, C}})}(m\cdot X))$.
  The following figure states that this definition is well-defined.
  \begin{center}
    \begin{tikzpicture}
      \matrix[row sep=2.5cm, column sep=3.5cm]{
        \node[left] (star) {$\cleavage{(\inversew{\phi_{m, C}})}(m\cdot X)$}; & \node[right] (mx) {$m\cdot X$};\\
        \node[left]  (gmc) {$\Gamma(m\cdot C)$}; & \node[right] (mgc) {$m\cdot \Gamma C$};\\
      };
      \node [below=.3cm of mgc] (mpx) {$m\cdot pX$};
      \node [below=.3cm of mpx] (pmx) {$p(m\cdot X)$};
      \draw 
      (star) edge node [above] {$\lifting{(\inversew{\phi_{m, C}})}(m\cdot X)$} (mx)
        (gmc) edge node [above] {$\inversew{\phi_{m, C}}$} (mgc)
        (mgc) edge [equal] (mpx)
        (mpx) edge [equal] (pmx);

      \draw[->, decorate, decoration=snake] (.7, -.5) -- (.7, .5);
      \node at (2, .0) [scale=1, align=center, inner sep=.5mm] {cartesian\\lifting};
    \end{tikzpicture}
  \end{center}

  Because $\phi_{m, C}$ is an isomorphism, $\lifting{\phi_{m, C}}$ is also an isomorphism, and thus $m'\cdot \lifting{\phi_{m, C}}$ is also an isomophism.
  Theorefore, $m'\cdot \lifting{\phi_{m, C}}$ is also cartesian because any isomorphism is cartesian.
  This is why the condition $(*)$ can be omitted.
\end{proof}

\begin{proof}[Proof of Proposition \ref{prop:model by morphism of model}]
  By proposition \ref{prop:simple gluing}.
  We define $L$ as follows.
  \begin{align*}
    Lv &= (F'v, Fv, \morphism{\theta_{v}}{F'v}{HFv})\\
    L(\morphism{a}{v}{u}) &= \morphism{(F'a, Fa)}{\theta_{v}}{\theta_{u}}
  \end{align*}
  where $v$ is any object in $\V$ and $\morphism{a}{v}{u}$ is any morphism in $\V$.
  It follows from the naturality of $\theta$ that $L\alpha$ above is a morphism in $\glue{\C'}{H}$.
  \begin{center}
    \begin{tikzpicture}
      \matrix[row sep=1.5cm, column sep=1.5cm]{
        \node (F'v) {$F'v$}; & \node (F'u) {$F'u$};\\
        \node (HFv) {$HFv$}; & \node (HFu) {$HFu$};\\
      };
      \draw
        (F'v) edge node [above] {$F'\alpha$} (F'u)
              edge node [left] {$\theta_v$} (HFv)
        (F'u) edge node [right] {$\theta_u$} (HFu)
        (HFv) edge node [below] {$HF\alpha$} (HFu);
    \end{tikzpicture}
  \end{center}

  Then, we define $\phi^{L}$ to be $(\phi^{F'}, \phi^F)$. The component $\phi^{L}_{v, u}=(\phi^{F'}_{v, u}, \phi^{F}_{v, u})$ is indeed
  a morphism $\morphism{}{v\cdot \theta_{u}}{\theta_{v\cdot u}}$ in $\glue{\C'}{H}$ by the definition \ref{def:morphism of models}.

  The coherent natural transformation $\phi^{L}$ is an isomorphism and satisfies the coherence since so $m_F$ and $m_{F'}$ are and do, and
  the composition in $\glue{\C'}{H}$ is defined using that in $\C'$ in a componentwise way.
\end{proof}

\begin{proof}[Proof of Proposition \ref{prop:equivariant fibration from monoidal}]
  We first define the functor $p_{T}$ by
  \begin{align*}
    p_T(X) &\defeq pX \\
    p_T(\morphism{f}{X}{\tilde TY}) &\defeq \morphism{pf}{pX}{TpY}.
  \end{align*}
  Notice $Tp=p\tilde T$ holds because $(T,\tilde T)$ is a monad over $p$.
  The functoriality of $p_T$ follows from $p\tilde\mu=\mu_{p}$ and $p\tilde \eta=\eta_{p}$; for example,
  $p_T$ perserves identities by the latter equation.
  For the $p_T$-cartesian lifting of $\fun{u}{I}{TpY}$, we can take $p$-cartesian lifting $\morphism{\lifting{u}}{u^*(\tilde TY)}{\tilde TY}$ of $u$. This is indeed $p_T$-cartesian by that fact that $\tilde T$ is fibred and $\tilde \mu$ is $p$-cartesian.

  Next, consider $\B\times_{\B_T}\E_{\tilde T}$. 
  Its object is a pair $(K, X)$ satisfying $K=pX$ and its morphism is a 
  pair $(u, f)$ satisfying $\eta_{p} \comp u=pf$. 
  Because $\eta_p=p\tilde\eta$ holds and $\tilde\eta$ is cartesian, 
  for each $f$ there exists a unique $h$ satisfying $\tilde \eta\comp h=f$ and $ph=u$.
  Using these facts, the functor $\morphism{F}{\B\times_{\B_T}\E_{\tilde T}}{\E}$
  defined as follows gives the isormophism $\B\times_{\B_T}\E_{\tilde T}\cong \E$ holds;
  \begin{align*}
    F(X) &= (pX, X)\\
    F(f) &= (pf, \tilde \eta\comp f).
  \end{align*}
  In addition, $F$ makes the following triangles commute.
  \begin{center}
    \begin{tikzpicture}
      \matrix[row sep=1cm, column sep=1cm]{
        & \node (e){$\E$};\\
      \node (b) {$\B$}; & \node (be) {$\B\times_{\B_T} \E_{\tilde T}$}; & \node (et) {$\E_{\tilde T}$};\\
      };
      \draw
        (e) edge node [above left] {$p$} (b)
            edge node [left] {$F$} node [right] {$\cong$} (be)
            edge node [above right] {$J$} (et)
        (be) edge (et)
        (be) edge (b)
        ;
    \end{tikzpicture}
  \end{center}

  Finally, we move on to the latter part of the proposition. 
  At first, notice that the tensor in $\E$ can be extended to the $\E$-action on $\E_{\tilde T}$ by
  using the strength $\tilde t$ of $\tilde T$. In a similar way, $\B_{T}$ has an $\E$-action by
  $X\cdot K=pX\tensor_{\B} K$ and 
  \[
    (\morphism{f}{X}{Y})\cdot (\morphism{g}{K}{TL})= \bigl((pX\tensor_{\B}K)\xrightarrow{pf\tensor_{\B}g}pY\tensor_{\B}TK\xrightarrow{t}T(pX\tensor_{\B}L)\bigr).
  \]
  It is straightforward to show that $p_T$ preserves this action strictly.
\end{proof}

The following proposition is claimed in the text right after Proposition \ref{prop:equivariant and fibration}.

\begin{proposition}[Constructions of Proposition \ref{prop:lax equivariant and opfibration} and \ref{prop:equivariant and fibration} coincide]
  Suppose $p$ is an opfibration in addition to the assumption in Proposition \ref{prop:equivariant and fibration}.
  The category $\G$ for pullback has two kinds of $\M$-actions by 
  Proposition \ref{prop:lax equivariant and opfibration} and \ref{prop:equivariant and fibration}.
  To distinguish these, we write $\G$ and $\G'$ for them.

  There are equivariant functors $\functor{H}{\G}{\G'}$ and $\functor{L}{\G'}{\G}$ such that
  $FG$ and $GF$ are identity functors as equivariant functors.
  In other words, $FG$ and $GF$ are identity functors
  in the (2-)category of $\M$-actegories and equivariant functors.
\end{proposition}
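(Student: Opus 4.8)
The plan is to observe first that $\G$ and $\G'$ denote the \emph{same} underlying category, namely the pullback $\C\times_{\B}\E$, carrying two different $\M$-actions; hence the desired functors $H$ and $L$ are both the identity on underlying categories, and the entire content of the statement lies in their equivariant structure. It therefore suffices to equip this identity functor with an equivariant structure in each direction and to check that the two structures are mutually inverse, so that $H\circ L$ and $L\circ H$ are the \emph{identity} equivariant functors and $\G\cong\G'$ in $\Act{\M}$.

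The key ingredient is a standard property of bifibrations: if $u$ is an isomorphism in the base of $p$ and $X$ lies over its domain, then the opcartesian lift $\oplifting{u}(X)\colon X\to u_{\bang}(X)$ is an isomorphism of $\E$ lying over $u$, so its inverse is a cartesian lift of $\inversew{u}$ at $X$; by uniqueness of cartesian lifts this produces a canonical \emph{vertical} isomorphism between $u_{\bang}(X)$ and $\cleavage{(\inversew{u})}(X)$, natural in $X$. Since $\Gamma$ is strong each $\phi_{m,C}$ is invertible, so applying this with $u=\phi_{m,C}$ and $X=m\cdot X$ yields vertical isomorphisms $\kappa_{m,C,X}\colon \opcleavage{(\phi_{m,C})}(m\cdot X)\xrightarrow{\ \cong\ }\cleavage{(\inversew{\phi_{m,C}})}(m\cdot X)$ --- exactly the comparison between the second components of $m\cdot(C,X)$ computed in $\G$ (Proposition~\ref{prop:lax equivariant and opfibration}) and in $\G'$ (Proposition~\ref{prop:equivariant and fibration}).

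With $\kappa$ in hand I would set $\phi^{H}_{m,(C,X)}\defeq(\id_{m\cdot C},\inversew{\kappa_{m,C,X}})$ and $\phi^{L}_{m,(C,X)}\defeq(\id_{m\cdot C},\kappa_{m,C,X})$. These are legitimate morphisms of the pullback because $\kappa$ is vertical, so its $p$-image is an identity, matching $\Gamma\id$; and they are isomorphisms by construction. Naturality in $(C,X)$ and in $m$ follows from the naturality of $\phi$ and of $\kappa$ together with the universal properties of the op/cartesian liftings defining the two actions. Because $\phi^{H}$ and $\phi^{L}$ are componentwise inverse, the composite equivariant structures $\phi^{H\circ L}$ and $\phi^{L\circ H}$ reduce to identities, which is precisely the assertion that $H\circ L$ and $L\circ H$ are identity equivariant functors.

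The remaining, and main, obstacle is the coherence of $\phi^{H}$ (and dually $\phi^{L}$): the unit triangle and the associativity hexagon from the definition of an equivariant functor. I would discharge these by projecting the relevant diagrams along $p$. Both actions were constructed so that their $\eta$- and $\mu$-components are the unique lifts over $\eta_{C}$ and $\mu_{m,m',C}$ whose $p$-image is the corresponding $\eta,\mu$ of the $\M$-action on $\E$; since $\kappa$ is defined by the very same universal properties, each coherence square for $\phi^{H}$ projects to an already-commuting diagram in $\E$ and is then forced by uniqueness of the universal (op)cartesian morphisms. The hexagon is the delicate case, as it involves the comparison isomorphisms $\kappa$ at $m\tensor m'$, at $m$, and at $m'$ simultaneously, so the diagram chase must reconcile a nested opcartesian lifting with a nested cartesian one; I expect essentially all of the bookkeeping to concentrate here, even though each individual step is dictated by a universal property.
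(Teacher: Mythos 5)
Your proposal is correct and takes essentially the same route as the paper's own (very terse) proof: both observe that $\G$ and $\G'$ share the same underlying category, take $H$ and $L$ to be identity functors, and reduce the whole statement to equipping these identities with mutually inverse coherent isomorphisms. Your explicit construction of the vertical comparison isomorphism $\kappa_{m,C,X}$ --- the opcartesian lift of the invertible $\phi_{m,C}$ is an isomorphism whose inverse is a cartesian lift of $\phi_{m,C}^{-1}$, hence agrees with the chosen cartesian lift up to a unique vertical isomorphism --- together with the coherence check by universal properties is precisely the content the paper compresses into ``the coherent natural isomorphisms for $H$ and $L$ are defined straightforwardly.''
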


\begin{proof}
  Notice that $\G$ and $\G'$ are the same as categories but different as $\M$-actegories.
  Therefore, $H$ and $L$ may be identity functors. 
  The coherent natural isomorphisms for $H$ and $L$ are defined straightforwardly.
\end{proof}

\begin{proof}[Proof of non/fibredness of $T^{\top\top}$]
  At first, recall that a morphism in $\Sub(\Sets)$ is cartesian if and only if the corresponding square is pullback.
  \begin{itemize}
    \item fibredness of exception\par
      When $TX=X\uplus E$ and $(S\subseteq TR)$ is $(1\subseteq 1\uplus E)$, 
      $T^{\top\top}(P\subseteq X)={P\subseteq X\uplus E}$ holds for any $(P\subseteq X)$.
      Consider the pullback square on the left-hand side. The upper left corner $f^*(Q)$ is the inverse image of $Q$ by $f$.
      The square on the right-hand is also pullback because $(f\uplus E)^*(Q)=f^*(Q)$ holds.
      Therefore (the underlying functor of) $T^{\top\top}$ is fibred.
      \begin{center}
        \begin{tikzpicture}
          \matrix[row sep=.2cm, column sep=1cm]{
            \node(fq){$f^*(Q)$}; & \node (q) {$Q$}; & & \node (tfq) {$f^*(Q)$}; & \node (tq) {$Q$};\\
                                  \\& & \node{$\xmapsto{T^{\top\top}}$};\\
            \node(x){$X$}; & \node (y) {$Y$}; & & \node (tx) {$X\uplus E$}; & \node (ty) {$Y\uplus E$};\\
          };
          \draw 
            (fq) edge (q)
                  edge [inclusion] (x)
                  edge [pullback] (y)
            (x)  edge node [below] {$f$} (y)
            (q)  edge [inclusion] (y)
            (tfq) edge (tq)
                  edge [inclusion] (tx)
                  edge [pullback] (ty)
            (tx)  edge node [below] {$f\uplus E$} (ty)
            (tq)  edge [inclusion] (ty)
            ;
        \end{tikzpicture}
      \end{center}
      It is easy to see that the unit and multiplication are cartesian.
      \[
        \begin{tikzcd}
          P \ar[r] \ar[d, hookrightarrow] \ar[rd, phantom, very near start, "\lrcorner"] & P \ar[d, hookrightarrow] \\
          X \ar[r, "\eta"'] & X \uplus  E
        \end{tikzcd}
        \qquad
        \begin{tikzcd}
          P \ar[r] \ar[d, hookrightarrow] \ar[rd, phantom, very near start, "\lrcorner"] & P \ar[d, hookrightarrow] \\
          X \uplus E \uplus E \ar[r, "\mu"'] & X \uplus E
        \end{tikzcd}
      \]

  \item fibredness of nondeterminism\par
    When $(S\subseteq TR)$ is $(1 \subseteq 2 = \powersetfin(1))$, we can calculate: $T^{\top\top}(P\subseteq X)=\{ m \in \powersetfin(X) \mid \exists x \in m. P(x) \}$.
    For $X$ and $Q \subseteq Y$, consider the following pullback square.
    \begin{center}
      \begin{tikzcd}
        \mathbb{P} \ar[r] \ar[d, hookrightarrow] \ar[rd, phantom, very near start, "\lrcorner"] & T^{\top\top}(Q) \ar[d, hookrightarrow] \\
        \powersetfin(X) \ar[r, "\powersetfin(f)"'] & \powersetfin(Y)
      \end{tikzcd}
    \end{center}
    We can confirm that $\mathbb{P}$ is given by $T^{\top\top}(f^\ast(Q))$ by the following calculation.
    \begin{align*}
      m \in \mathbb{P} &\iff \exists y \in \powersetfin(f)(m). Q(y) \\
      &\iff \exists x \in m. Q(f(x)) \\
      &\iff \exists x \in m. x \in f^\ast(Q) \\
      &\iff m \in T^{\top\top}(f^\ast(Q))
    \end{align*}
    Therefore, the underlying functor of $T^{\top\top}$ preserves cartesian morphisms.
    The unit is cartesian because $\eta(x) = \{x\} \in T^{\top\top}(P) = \{ m \in \powersetfin(X) \mid \exists x \in m. P(x) \}$ if and only if $x \in P$ holds.
    To see that the multiplication is cartesian, again cosider a pullback diagram as follows.
    \[
      \begin{tikzcd}
        \mathbb{P} \ar[r] \ar[d, hookrightarrow] \ar[rd, phantom, very near start, "\lrcorner"] & T^{\top\top}(P) \ar[d, hookrightarrow] \\
        \powersetfin(\powersetfin(X)) \ar[r, "\mu"'] & \powersetfin(X)
      \end{tikzcd}
    \]
    Let us calculate $\mathbb{P}$:
    \begin{align*}
      M \in \mathbb{P} &\iff \mu(M) \in T^{\top\top}(P) \\
      &\iff \exists x \in \mu(M). P(x) \\
      &\iff \exists x \in \{ x \in X \mid \exists m \in M. x \in m \}. P(x) \\
      &\iff \exists m \in M. \exists x \in m. P(x) \\
      &\iff \exists m \in M. m \in T^{\top\top}(P) \\
      &\iff M \in T^{\top\top}(T^{\top\top}(P)).
    \end{align*}
    Therefore, the multiplication is cartesian.

  \item non-fibredness of side-effect\par
    For the side-effect monad $T(X) \defeq (A \times X)^A$ and $(S \subseteq T(R)) \defeq (B \tilde{\Rightarrow} B \tilde{\times} 1 \subseteq T1)$ for some $\emptyset \subsetneq B \subsetneq A$, $T^{\top\top}(P)$ is given by $B \tilde{\Rightarrow} B \tilde{\times} P$, where for $(P \subseteq X)$ and $(Q \subseteq Y)$, $P \tilde{\Rightarrow} Q \defeq \{ f \colon X \to Y \mid \forall x \in X. P(x) \Rightarrow Q(f(x)) \}$ and $P \tilde{\times} Q \defeq \{ (x, y) \in X \times Y \mid P(x) \land Q(y) \}$.
    Let $A = 2 = \{ 0, 1 \}$ and $B = 1 = \{ 0 \}$.
    Consider the pullback of the multiplication.
    \[
      \begin{tikzcd}
        \mathbb{P} \ar[r] \ar[d, hookrightarrow] \ar[rd, phantom, very near start, "\lrcorner"] & 1 \tilde{\Rightarrow} 1 \tilde{\times} P \ar[d, hookrightarrow] \\
        (2 \times (2 \times X)^2)^2 \ar[r, "\mu"'] & (2 \times X)^2
      \end{tikzcd}
    \]
    We will see that $\mathbb{P}$ is not $T^{\top\top}(T^{\top\top}(P))$.
    We fix $(P \subseteq X) \defeq (1 \subseteq 1)$.
    Let $f$ be an element in $(2 \times (2 \times X)^2)^2$ defined as follows.
    \[
      f(s) \defeq (1, \lambda s'. (0, 0))
    \]
    $\mu$ sends this $f$ to $\lambda s. (0, 0)$.
    Clearly, $\mu(f)$ is in $1 \tilde{\Rightarrow} 1 \tilde{\times} P$.
    Therefore, $f$ is in $\mathbb{P}$.
    However $f$ is not in $T^{\top\top}(T^{\top\top}(P))$, which concludes that $T^{\top\top}(T^{\top\top}(P)) \not= \mathbb{P}$.
    (Here we assume that $\mathbb{P} \subseteq (2 \times (2 \times X)^2)^2$ w.l.o.g.)

  \item non-fibredness of continuation\par
    When $TX=2^{2^X}$, $R=\emptyset$ and $(S\subseteq 2^{2^\emptyset})=(1\subseteq 2)$,
    \[
      T(P\subseteq X)=\{m\subseteq \P(X)\mid \forall P\subseteq C.\ C\in m\}
    \]
    holds for any $(P\subseteq X)$.

    In the sequel, we identify $2^{2^{\N}}$ with the set of sets of real numbers in $[0,1]$.
    Consider the left pullback square where $\morphism{0}{\N}{\N}$ is a constant map sending every number to $0$.
    and apply $T^{\top\top}$ to the diagram.
    \begin{center}
      \begin{tikzpicture}
        \matrix[row sep=.1cm, column sep=1cm]{
          \node(phi1) {$\emptyset$}; & \node (phi2){$\emptyset$}; & & \node (tp1){$\tilde T\emptyset$}; & \node (tp2){$\tilde T\emptyset$};\\
                                      & & \node {$\xmapsto{T^{\top\top}}$};\\
          \node(n1) {$\N$}; & \node (n2){$\N$}; & & \node (pi1){$\P([0,1])$}; & \node (pi2){$\P([0,1])$};\\
        };
        \draw
          (phi1) edge (phi2)
                  edge [inclusion] node [left] {$\bang$} (n1)
                  edge [pullback] (n2)
          (phi2) edge [inclusion] node [right] {$\bang$} (n2)
          (n1)   edge node [below] {$0$} (n2)
          (tp1) edge (tp2)
                  edge [inclusion] (pi1)
          (tp2) edge [inclusion]  (pi2)
          (pi1)   edge node [below] {$T0$} (pi2)
          ;
      \end{tikzpicture}
    \end{center}
    In the right square, the nodes and edges are determined as follows:
    \begin{align*}
      T(0)(U) & =
        \begin{cases}
          [0,1] & (\text{$0\in U$ and $1\in U$})\\
          [0,1/2) & (\text{$0\in U$ and $1\notin U$})\\
          [1/2,1] & (\text{$0\notin U$ and $1\in U$})\\
          \emptyset & (\text{$0\notin U$ and $1\in U$})
        \end{cases}\\
      T(\emptyset\subseteq \N)&=\{m\subseteq [0,1]\mid \forall r.\ r\in m\}\\
                              &= \{[0,1]\}.
    \end{align*}
    Then, we get 
    \[
      T^{\top\top}(\emptyset)\times_{\P([0,1])}\P([0,1]) = \{U\subseteq \text{$0\in U$ and $1\in U$}\}\supsetneq \{[0,1]\} = T^{\top\top}(\emptyset)
    \]
    and therefore $T^{\top\top}$ is not fibred.
  \end{itemize}
\end{proof}

\end{document}